\setlist{itemsep=0mm}
\crefname{case}{case}{cases}
\newtheorem{theorem}{Theorem}
\newtheorem{theorem*}{Theorem}
\newtheorem{lemma}[theorem]{Lemma}
\newtheorem{corollary}[theorem]{Corollary}
\newtheorem{definition}[theorem]{Definition}
\theoremstyle{remark}
\newtheorem{remark}[theorem]{Remark}
\mathchardef\period=\mathcode`.
\DeclareMathSymbol{\comma}{\mathord}{letters}{"3B}
\let\originalleft\left
\let\originalright\right
\renewcommand{\left}{\mathopen{}\mathclose\bgroup\originalleft}
\renewcommand{\right}{\aftergroup\egroup\originalright}
\newcommand{\spn}{\operatorname{span}}
\newcommand{\floor}[1]{{\left\lfloor #1 \right\rfloor}}
\newcommand{\ceil}[1]{{\left\lceil #1 \right\rceil}}
\newcommand{\zeros}{\textsc{zeroaddress}}
\newcommand{\invalidaddress}{\textsc{invalidaddress}}
\newcommand{\invalidstring}{\textsc{invalidstring}}
\newcommand{\emptyaddress}{\textsc{emptyaddress}}
\newcommand{\emptystring}{\textsc{emptystring}}
\newcommand{\noedgeaddress}{\textsc{noedgeaddress}}
\newcommand{\noedgestring}{\textsc{noedgestring}}
\newcommand{\specialvertices}{\mathsf{SpecialVertices}}
\newcommand{\specialaddresses}{\mathsf{SpecialAddresses}}
\newcommand{\emptyadd}{\textsc{empty}}
\newcommand{\zero}{\textsc{zero}}
\newcommand\extralabel[4][0mm]{\node[label={[label distance=#1]#2:#3}] at (#4){};}
\newcommand{\Al}{\mathcal{A}}
\newcommand{\pgood}{\mathcal{P}_{\mathrm{good}_{\Al}}}
\newcommand{\pgreat}{\mathcal{P}_{\mathrm{great}_{\Al}}}
\newcommand{\cbad}{c_{*}}
\newcommand{\allbad}{\mathrm{ugly}}
\newcommand{\good}{\mathrm{good}}
\newcommand{\great}{\mathrm{great}}
\newcommand{\bad}{\mathrm{bad}}
\newcommand{\pigoodphi}{\Pi^\phi_{\mathrm{good}}}
\newcommand{\pibadphi}{\Pi^\phi_{\mathrm{bad}}}
\newcommand{\pigoodpsi}{\Pi^\psi_{\mathrm{good}}}
\newcommand{\pibadpsi}{\Pi^\psi_{\mathrm{bad}}}
\newcommand{\entrance}{\textsc{entrance}}
\newcommand{\exit}{\textsc{exit}}
\newcommand{\addresses}{\mathsf{Addresses}}
\newcommand{\pre}{\mathsf{pre}}
\newcommand{\sub}{\mathsf{sub}}
\newcommand{\suc}{\mathsf{suc}}
\let\Pr\relax
\DeclareMathOperator{\Pr}{\mathbb{P}}
\newcommand{\defeq}{\coloneqq}
\newcommand{\genuine}{genuine}
\newcommand{\rooted}{rooted}
\newcommand{\workspace}{\mathrm{workspace}}
\newcommand{\Oracle}{O}
\newcommand{\invalid}{\textsc{invalid}}
\newcommand{\noedge}{\textsc{noedge}}
\crefname{enumi}{part}{parts}
\Crefname{enumi}{Part}{Parts}
\crefname{algocf}{algorithm}{algorithms}
\Crefname{algocf}{Algorithm}{Algorithms}
\crefname{algocfline}{algorithm}{algorithms} 
\Crefname{algocfline}{Algorithm}{Algorithms}
\newcommand{\wtp}{welded tree problem}
\newcommand{\wtg}{welded tree graph}
\newcommand{\wto}{welded tree oracle}
\newcommand{\valid}{\mathcal{V}_\mathcal{G}}
\newcommand{\BAD}{\mathsf{BAD}}
\newcommand{\GOOD}{\mathsf{GOOD}}
\newcommand{\weld}{\mathsf{WELD}}
\newcommand{\RGB}{\{\mathrm{red}, \mathrm{green}, \mathrm{blue}\}}
\newcommand{\controlled}[1]{\mathop\wedge(#1)}
\title{\bfseries\Large
Quantum algorithms and the power of forgetting
}
\author{Andrew M. Childs, Matthew Coudron, and Amin Shiraz Gilani \smallskip \\
\small Department of Computer Science, University of Maryland \\
\small Institute for Advanced Computer Studies, University of Maryland \\
\small Joint Center for Quantum Information and Computer Science, NIST/University of Maryland}
\date{}
\begin{document}
\sloppy

\maketitle
\begin{abstract}

The so-called \wtp\ provides an example of a black-box problem that can be solved exponentially faster by a quantum walk than by any classical algorithm \cite{ChildsCDFGS03}. Given the name of a special $\entrance$ vertex, a quantum walk can find another distinguished $\exit$ vertex using polynomially many queries, though without finding any particular path from $\entrance$ to $\exit$. It has been an open problem for twenty years whether there is an efficient quantum algorithm for finding such a path, or if the path-finding problem is hard even for quantum computers.
We show that a natural class of efficient quantum algorithms provably cannot find a path from $\entrance$ to $\exit$. Specifically, we consider algorithms that, within each branch of their superposition, always store a set of vertex labels that form a connected subgraph including the $\entrance$, and that only provide these vertex labels as inputs to the oracle. While this does not rule out the possibility of a quantum algorithm that efficiently finds a path, it is unclear how an algorithm could benefit by deviating from this behavior. 
Our no-go result suggests that, for some problems, quantum algorithms must necessarily forget the path they take to reach a solution in order to outperform classical computation. 

\end{abstract}

\tableofcontents
\pagebreak

\section{Introduction}

Quantum algorithms use interference of many branches of a superposition to solve problems faster than is possible classically. Shor's factoring algorithm \cite{Shor97} achieves a superpolynomial speedup over the best known classical algorithms by efficiently finding the period of a modular exponentiation function, and several other quantum algorithms (e.g., \cite{Hallgren,Kedlaya,EHKS14}) provide a speedup by similarly detecting periodic structures. While a few other examples of dramatic quantum speedup are known---notably including the simulation of quantum dynamics \cite{Lloyd96}---our understanding of the capabilities of quantum algorithms remains limited. To gain more insight into the possible applications of quantum computers, we would like to better understand the kinds of problems they can solve efficiently and what features of problems they are able to exploit.

Another example of exponential quantum speedup is based on quantum analogs of random walks. Specifically, quantum walks provide an exponential speedup for the so-called \wtp\ \cite{ChildsCDFGS03}.  The symmetries of this problem, and the structure of the quantum algorithm for solving it, seem fundamentally different from all preceding exponential quantum speedups.  In particular, the \wtp\ provably requires polynomial ``quantum depth'' to solve efficiently \cite{CoudronM19}, whereas all previously known exponential quantum speedups only require logarithmic quantum depth, including Shor's factoring algorithm \cite{CleveWatrous00}.  (The only other known computational problem exhibiting an exponential quantum speedup, yet requiring polynomial quantum depth, was recently constructed in \cite{CCL20}.)

The \wtp\ is defined on a ``welded tree graph'' that is formed by joining the leaves of two binary trees with a cycle that alternates between them, as shown in \Cref{fig:weldedtree}. The root of one tree is designated as the $\entrance$, and the root of the other tree is designated as the $\exit$. The graph structure is provided through an oracle that gives adjacency-list access to the graph, where the vertices are labeled arbitrarily.  Given the label of the $\entrance$ vertex and access to the oracle, the goal of the \wtp\ is to return the label of the $\exit$ vertex. On a quantum computer, this black box allows one to perform a quantum walk, whereby the graph is explored locally in superposition. Interference obtained by following many paths coherently causes the quantum walk to reach the $\exit$ in polynomial time. In contrast, no polynomial-time classical algorithm can efficiently find the $\exit$---essentially because it cannot distinguish the welded tree graph from a large binary tree---so the quantum walk achieves exponential speedup. 

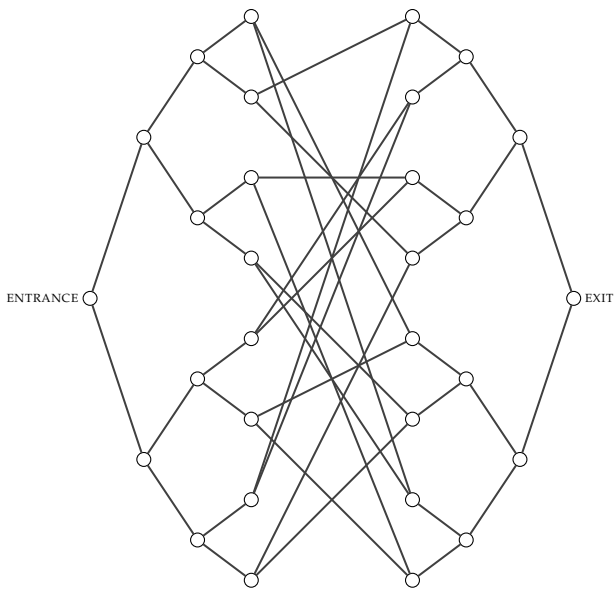
\begin{figure}
    \centering
    \resizebox{0.5\textwidth}{!}{%
    \begin{tikzpicture}[scale=0.35, auto, node distance=0.2cm, every loop/.style={}, thick, fill=black!20, every arrow/.append style={dash,thick}]
        \tikzset{VertexStyle/.style = {shape = ellipse, draw}}
        \Vertex[x=0,y=0,L=$ $]{ent} 
        \extralabel{180}{\entrance}{ent}
        \Vertex[x=4,y=-12,L=$ $]{Lr}
        \Vertex[x=4,y=12,L=$ $]{Lb}
        \Vertex[x=8,y=-18,L=$ $]{Lrg}
        \Vertex[x=8,y=-6,L=$ $]{Lrb}
        \Vertex[x=8,y=6,L=$ $]{Lbg}
        \Vertex[x=8,y=18,L=$ $]{Lbr}
        \Vertex[x=12,y=-21,L=$ $]{Lrgb}
        \Vertex[x=12,y=-15,L=$ $]{Lrgr}
        \Vertex[x=12,y=-9,L=$ $]{Lrbg}
        \Vertex[x=12,y=-3,L=$ $]{Lrbr}
        \Vertex[x=12,y=3,L=$ $]{Lbgr}
        \Vertex[x=12,y=9,L=$ $]{Lbgb}
        \Vertex[x=12,y=15,L=$ $]{Lbrg}
        \Vertex[x=12,y=21,L=$ $]{Lbrb}
    
        \Vertex[x=36,y=0,L=$ $]{exit} 
        \extralabel{0}{\exit}{exit}
        \Vertex[x=32,y=-12,L=$ $]{Rb}
        \Vertex[x=32,y=12,L=$ $]{Rr}
        \Vertex[x=28,y=-18,L=$ $]{Rbr}
        \Vertex[x=28,y=-6,L=$ $]{Rbg}
        \Vertex[x=28,y=6,L=$ $]{Rrb}
        \Vertex[x=28,y=18,L=$ $]{Rrg}
        \Vertex[x=24,y=-21,L=$ $]{Rbrb}
        \Vertex[x=24,y=-15,L=$ $]{Rbrg}
        \Vertex[x=24,y=-9,L=$ $]{Rbgb}
        \Vertex[x=24,y=-3,L=$ $]{Rbgr}
        \Vertex[x=24,y=3,L=$ $]{Rrbr}
        \Vertex[x=24,y=9,L=$ $]{Rrbg}
        \Vertex[x=24,y=15,L=$ $]{Rrgr}
        \Vertex[x=24,y=21,L=$ $]{Rrgb}
    
        \Edge[](ent)(Lr)
        \Edge[](ent)(Lb)
        \Edge[](Lr)(Lrg)
        \Edge[](Lr)(Lrb)
        \Edge[](Lb)(Lbg)
        \Edge[](Lb)(Lbr)
        \Edge[](Lrg)(Lrgb)
        \Edge[](Lrg)(Lrgr)
        \Edge[](Lrb)(Lrbg)
        \Edge[](Lrb)(Lrbr)
        \Edge[](Lbg)(Lbgr)
        \Edge[](Lbg)(Lbgb)
        \Edge[](Lbr)(Lbrg)
        \Edge[](Lbr)(Lbrb)
    
        \Edge[](exit)(Rb)
        \Edge[](exit)(Rr)
        \Edge[](Rb)(Rbg)
        \Edge[](Rb)(Rbr)
        \Edge[](Rr)(Rrg)
        \Edge[](Rr)(Rrb)
        \Edge[](Rbg)(Rbgr)
        \Edge[](Rbg)(Rbgb)
        \Edge[](Rbr)(Rbrg)
        \Edge[](Rbr)(Rbrb)
        \Edge[](Rrg)(Rrgb)
        \Edge[](Rrg)(Rrgr)
        \Edge[](Rrb)(Rrbg)
        \Edge[](Rrb)(Rrbr)
    
        \Edge[](Lbrb)(Rbgr)
        \Edge[](Rbgr)(Lrbg)
        \Edge[](Lrbg)(Rbrb)
        \Edge[](Rbrb)(Lbgb)
        \Edge[](Lbgb)(Rrbg)
        \Edge[](Rrbg)(Lrbr)
        \Edge[](Lrbr)(Rrgr)
        \Edge[](Rrgr)(Lrgr)
        \Edge[](Lrgr)(Rrgb)
        \Edge[](Rrgb)(Lbrg)
        \Edge[](Lbrg)(Rrbr)
        \Edge[](Rrbr)(Lrgb)
        \Edge[](Lrgb)(Rbgb)
        \Edge[](Rbgb)(Lbgr)
        \Edge[](Lbgr)(Rbrg)
        \Edge[](Rbrg)(Lbrb)
    \end{tikzpicture}
    }
    \caption{Example of a welded tree graph with $n=3$.}
    \label{fig:weldedtree}
\end{figure}

While the quantum walk algorithm efficiently finds the $\exit$ by following exponentially many paths in superposition, it does not actually output any of those paths. Classical intuition might suggest that an efficient algorithm for finding the $\exit$ could be used to efficiently find a path by simply recording every intermediate state of the $\exit$-finding algorithm.  However, in general, the intermediate state of a quantum algorithm cannot be recorded without destroying superposition and ruining the algorithm. In other words, the \wtp\ can be viewed as a kind of multi-slit experiment that takes the well-known double-slit experiment into the high-complexity regime. This raises a natural question: Is it possible for some quantum algorithm to efficiently find a path from the $\entrance$ to the $\exit$? This question already arose in the original paper on the \wtp\ \cite{ChildsCDFGS03} and has remained open since, recently being highlighted in a survey of Aaronson \cite{Aaronson21}.

In one attempt to solve this problem, Rosmanis studied a model of ``snake walks'', which allow extended objects to move in superposition through graphs \cite{Rosmanis11}. The state of a snake walk is a superposition of ``snakes'' of adjacent vertices, rather than a superposition of individual vertices as in a standard quantum walk.  While Rosmanis did not show conclusively that snake walks cannot find a path through the \wtg, his analysis suggests that a snake walk algorithm is unlikely to accomplish this using only polynomially many queries to the \wto . 
Although this is only one particular approach, its failure supports the conjecture that it might not be possible to find a path efficiently. If such an impossibility result could be shown for general quantum algorithms, it would establish that, in order to find the solution to some computational problems, a quantum algorithm must necessarily ``forget'' the path it takes to that solution.
While forgetting information is a common feature of quantum algorithms, which often uncompute intermediate results to facilitate interference, many algorithms are able to efficiently produce a classically verifiable certificate for the solution once they have solved the problem.\footnote{For example, in Simon's algorithm \cite{Simon97}, we learn the hidden string and can easily find collisions. In Shor's algorithm \cite{Shor97}, the factors reveal the structure of the input number and their correctness can be easily checked.}
In contrast, hardness of path finding in the \wtp\ would show not only that trying to remember a path would cause one particular algorithm to fail, but in fact \emph{no} algorithm can efficiently collect such information.

In this paper, we take a step toward showing hardness of the welded tree path-finding problem. Specifically, we show hardness under two natural assumptions that we formalize in \Cref{sec:rooted}, namely that the algorithm is \emph{\genuine} and \emph{\rooted}.

First, we assume that the algorithm accesses the oracle for the input graph in a way that we call \emph{genuine}. A genuine algorithm is essentially one that only provides meaningful vertex labels as inputs to the \wto . Both the ordinary quantum walk \cite{ChildsCDFGS03} and the snake walk \cite{Rosmanis11} can be implemented by genuine algorithms. It is hard to imagine how non-genuine algorithms could gain an advantage over genuine algorithms, but we leave further exploration of this topic for future work.

We also assume that the algorithm is \emph{rooted}. Informally, a rooted algorithm is one that always maintains (i.e., remembers) a path from the $\entrance$ to every vertex appearing in its state. Note that the $\exit$-finding algorithm of \cite{ChildsCDFGS03} is crucially \emph{not} rooted.  Nonetheless, it is natural to focus on rooted algorithms when considering the problem of path-finding, since a non-rooted path-finding algorithm would effectively have to detach from the $\entrance$ and later find it again.  While we cannot rule out this possibility, it seems implausible. Although remembering a path to the $\entrance$ limits how quantum interference can occur, it does not eliminate interference entirely---in fact, rooted algorithms can exhibit exponential constructive and destructive interference. Furthermore, if the snake walk \cite{Rosmanis11} were to find a path from $\entrance$ to $\exit$, the most natural way of doing so would be in a rooted fashion. 

Our main result is that a genuine, rooted quantum algorithm cannot find a path from $\entrance$ to $\exit$ in the \wtg\ using only polynomially many queries. To establish this, we show that for any genuine, rooted quantum query algorithm $\Al$, there is a classical query algorithm using at most polynomially more queries that can approximately sample from the output state of $\Al$ (measured in the computational basis) up to a certain error term $\ket{\psi_{\mathrm{ugly}}}$ (defined in \Cref{subsec:goodbadandugly}).  This error term can be intuitively described as the part of the superposition of rooted configurations that has ever encountered a cycle or the $\exit$ in the \wtg\ during the entire course of the algorithm. Because elements of a quantum superposition need not have a well-defined classical history, the precise definition of our error term is somewhat involved. Nonetheless, we are able to bound the sampling error of our efficient classical simulation of $\Al$ using an inductive argument.

We construct the classical simulation as follows. First, using exponential time and only a constant number of classical queries, the algorithm processes the circuit diagram of the genuine, rooted quantum algorithm and samples a ``transcript'' (defined in \Cref{sec:transcript}) that describes a computational path the quantum algorithm could have taken, neglecting the possibility of encountering a cycle or the $\exit$. The classical algorithm then makes polynomially many queries to the classical oracle, in a manner prescribed by the sampled transcript, and outputs the vertices of the \wtg\ that were reached by those queries. We prove in \Cref{sec:simulation} that this efficient classical query algorithm is almost as likely to find an $\entrance$--$\exit$ path as the original genuine, rooted quantum algorithm.  We do this by showing that the classical algorithm exactly simulates the part of the quantum state that does not encounter a cycle or the $\exit$, and that the remaining error term, $\ket{\psi_{\mathrm{ugly}}}$, is exponentially small.  A major technical challenge is that our classical hardness result (shown in \Cref{sec:3coloring}) does not immediately show that $\ket{\psi_{\mathrm{ugly}}}$ is small as it may be possible for a quantum algorithm to foil a classical simulation by computing and uncomputing a cycle, and ``pretending'' to have never computed it. We overcome this issue by considering the portions of the state that encounter a cycle at each step and inductively bounding their total mass.

A subtle---yet unexpectedly significant---detail in our analysis is that we consider a version of the \wtp\ in which the oracle provides a 3-coloring of the edges of the graph, instead of using a 9-coloring as in the lower bound of \cite{ChildsCDFGS03}.\footnote{Note that the quantum walk algorithm can solve the \wtp\ using a 3-coloring, or even if it is not provided with a coloring at all \cite{ChildsCDFGS03}.} 
This alternative coloring scheme substantially reduces the complexity of the analysis in  \Cref{sec:simulation,sec:transcript,sec:rooted}.
This is because it allows us to determine, with high probability, whether starting at the $\entrance$ and following the edges prescribed by a polynomial-length color sequence $t$ will lead to a valid vertex of the \wtg, using only a constant number of classical queries to the \wto. In particular, it suffices to check whether $t$ departs from the $\entrance$ along one of the two valid edges (which can be determined using only three queries to the oracle). This is a key property used in our argument that the transcript state (see \Cref{def:transcript}) can track much of the behavior of a \genuine, \rooted\ quantum algorithm while only making a small number of classical queries to the \wto. 

However, our choice of the 3-coloring model comes at the cost of having to redesign the proof of classical hardness of finding the $\exit$ vertex in the \wtg. The original classical hardness proof \cite{ChildsCDFGS03} crucially considers a special type of 9-coloring with the property that, starting from a valid coloring, the color of any edge can be altered arbitrarily, and only edges within distance 2 need to be re-colored to produce a valid coloring with that newly assigned edge color.  This ``local re-colorability'' property is used at the crux of the classical hardness result, first in reducing from Game 2 to Game 3, and again implicitly in part (i) of the proof of Lemma 8 \cite{ChildsCDFGS03}. In contrast, a valid 3-coloring of the \wtg\ does not have this ``local re-colorability'' property: changing a single edge color might require a global change of many other edge colors to re-establish validity of the coloring. Thus we are forced to develop a modification of the classical hardness proof, given in \Cref{sec:3coloring}, which may also be of independent interest.

Note that our hardness result for $\exit$-finding in the 3-color model implies the hardness result for $\exit$-finding in the colorless model of \cite{ChildsCDFGS03} (which is equivalent to a restricted class of locally-constructible 9-colorings), but not the other way around.  This is because, a priori, the given 3-coloring could leak global information about the graph that the 9-coloring does not.  On the other hand, our hardness result for $\exit$-finding in the 3-color model combined with the analysis of \Cref{sec:rooted} implies hardness of path finding for genuine, rooted algorithms in both the 3-color and 9-color models, as well as the colorless model.

While our result does not definitively rule out the possibility of an efficient quantum algorithm for finding a path from $\entrance$ to $\exit$ in the \wtg, it constrains the form that such an algorithm could take. In particular, it shows that the most natural application of a snake walk to the \wtp, in which the snake always remains connected to the $\entrance$, cannot solve the problem. While it is conceivable that a snake could detach from the $\entrance$ and later expand to connect the $\entrance$ and $\exit$, this seems unlikely. More generally, non-genuine and non-rooted behavior do not intuitively seem useful for solving the problem. We hope that future work will be able to make aspects of this intuition rigorous.

\paragraph{Open questions} This work leaves several natural open questions. The most immediate is to remove the assumption of a rooted, genuine algorithm to show unconditional hardness of finding a path (or, alternatively, to give an efficient path-finding algorithm by exploiting non-genuine or non-rooted behavior). We also think it should be possible to show classical hardness of the general \wtp\ when the oracle provides a 3-coloring. Finally, it would be instructive to find a way of instantiating the welded tree problem in an explicit (non-black box) fashion, giving a quantum speedup in a non-oracular setting.

\section{Genuine and rooted algorithms}\label{sec:rooted}

In this section, we precisely define the aforementioned notion of genuine, rooted quantum query algorithms. Intuitively, an algorithm is genuine if it only allows for ``meaningful'' processing of vertex labels, and it is rooted if it remains ``connected to the \entrance" throughout its course. We begin by describing our setup and recalling the definition of the \wto .

\begin{definition} [Welded tree] \label{def:weldedtree}
A graph $\mathcal{G}_n$ is a \emph{welded tree} of size $n$ if it is formed by joining the $2 \cdot 2^n$ leaves of two balanced binary trees of height $n$ with a cycle that alternates between the two sets of leaves (as shown in \Cref{fig:weldedtree}). Each vertex in $\mathcal{G}_n$ is labeled by a $2n$-bit string. 
\end{definition}

Henceforth, we refer to the input \wtg\ of size $n$ as $\mathcal{G}$. We use $\valid$ to denote the set of vertices of $\mathcal{G}$. Since $\mathcal{G}$ is bipartite and each vertex $v \in \valid$ has degree at most $3$, $\mathcal{G}$ can be edge-colored using only $3$ colors \cite{Kon16}. Therefore, we suppose that the edges of $\mathcal{G}$ are colored from the set $\mathcal{C} \defeq \RGB$. We define a classical oracle function $\eta_c\colon \{0,1\}^{2n} \to \{0,1\}^{2n}$ that encodes the edges of color $c \in \mathcal{C}$ in $\mathcal{G}$. \Cref{fig:weldedtreecoloredlabeled} shows a valid coloring and labeling of the \wtg\ from \Cref{fig:weldedtree}.

\begin{figure}
    \centering
    \resizebox{0.8\textwidth}{!}{%
    \begin{tikzpicture}[scale=0.55,auto, node distance=0.2cm, every loop/.style={}, thick, fill=black!20, every arrow/.append style={dash,thick}]
    \tikzset{VertexStyle/.style = {shape = ellipse, draw}}
    \Vertex[x=0,y=0,L=$\entrance$]{ent} 
    \Vertex[x=4,y=-12,L=$010110$]{Lr}
    \Vertex[x=4,y=12,L=$101000$]{Lb}
    \Vertex[x=8,y=-18,L=$011101$]{Lrg}
    \Vertex[x=8,y=-6,L=$101001$]{Lrb}
    \Vertex[x=8,y=6,L=$110100$]{Lbg}
    \Vertex[x=8,y=18,L=$101010$]{Lbr}
    \Vertex[x=12,y=-21,L=$001100$]{Lrgb}
    \Vertex[x=12,y=-15,L=$110011$]{Lrgr}
    \Vertex[x=12,y=-9,L=$101111$]{Lrbg}
    \Vertex[x=12,y=-3,L=$111001$]{Lrbr}
    \Vertex[x=12,y=3,L=$011000$]{Lbgr}
    \Vertex[x=12,y=9,L=$010100$]{Lbgb}
    \Vertex[x=12,y=15,L=$101110$]{Lbrg}
    \Vertex[x=12,y=21,L=$000001$]{Lbrb}
    
    \Vertex[x=36,y=0,L=$\exit$]{exit}
    \Vertex[x=32,y=-12,L=$101101$]{Rb}
    \Vertex[x=32,y=12,L=$001010$]{Rr}
    \Vertex[x=28,y=-18,L=$101100$]{Rbr}
    \Vertex[x=28,y=-6,L=$001011$]{Rbg}
    \Vertex[x=28,y=6,L=$011110$]{Rrb}
    \Vertex[x=28,y=18,L=$010101$]{Rrg}
    \Vertex[x=24,y=-21,L=$100100$]{Rbrb}
    \Vertex[x=24,y=-15,L=$001001$]{Rbrg}
    \Vertex[x=24,y=-9,L=$010001$]{Rbgb}
    \Vertex[x=24,y=-3,L=$000101$]{Rbgr}
    \Vertex[x=24,y=3,L=$000100$]{Rrbr}
    \Vertex[x=24,y=9,L=$110101$]{Rrbg}
    \Vertex[x=24,y=15,L=$001110$]{Rrgr}
    \Vertex[x=24,y=21,L=$110110$]{Rrgb}
    
    \Edge[color=red](ent)(Lr)
    \Edge[color=blue](ent)(Lb)
    \Edge[color=ForestGreen](Lr)(Lrg)
    \Edge[color=blue](Lr)(Lrb)
    \Edge[color=ForestGreen](Lb)(Lbg)
    \Edge[color=red](Lb)(Lbr)
    \Edge[color=blue](Lrg)(Lrgb)
    \Edge[color=red](Lrg)(Lrgr)
    \Edge[color=ForestGreen](Lrb)(Lrbg)
    \Edge[color=red](Lrb)(Lrbr)
    \Edge[color=red](Lbg)(Lbgr)
    \Edge[color=blue](Lbg)(Lbgb)
    \Edge[color=ForestGreen](Lbr)(Lbrg)
    \Edge[color=blue](Lbr)(Lbrb)
    
    \Edge[color=blue](exit)(Rb)
    \Edge[color=red](exit)(Rr)
    \Edge[color=ForestGreen](Rb)(Rbg)
    \Edge[color=red](Rb)(Rbr)
    \Edge[color=ForestGreen](Rr)(Rrg)
    \Edge[color=blue](Rr)(Rrb)
    \Edge[color=red](Rbg)(Rbgr)
    \Edge[color=blue](Rbg)(Rbgb)
    \Edge[color=ForestGreen](Rbr)(Rbrg)
    \Edge[color=blue](Rbr)(Rbrb)
    \Edge[color=blue](Rrg)(Rrgb)
    \Edge[color=red](Rrg)(Rrgr)
    \Edge[color=ForestGreen](Rrb)(Rrbg)
    \Edge[color=red](Rrb)(Rrbr)
    
    \Edge[color=ForestGreen](Lbrb)(Rbgr)
    \Edge[color=blue](Rbgr)(Lrbg)
    \Edge[color=red](Lrbg)(Rbrb)
    \Edge[color=ForestGreen](Rbrb)(Lbgb)
    \Edge[color=red](Lbgb)(Rrbg)
    \Edge[color=blue](Rrbg)(Lrbr)
    \Edge[color=ForestGreen](Lrbr)(Rrgr)
    \Edge[color=blue](Rrgr)(Lrgr)
    \Edge[color=ForestGreen](Lrgr)(Rrgb)
    \Edge[color=red](Rrgb)(Lbrg)
    \Edge[color=blue](Lbrg)(Rrbr)
    \Edge[color=ForestGreen](Rrbr)(Lrgb)
    \Edge[color=red](Lrgb)(Rbgb)
    \Edge[color=ForestGreen](Rbgb)(Lbgr)
    \Edge[color=blue](Lbgr)(Rbrg)
    \Edge[color=red](Rbrg)(Lbrb)
    \end{tikzpicture}
    }
    \caption{Example of a 3-colored labeled welded tree graph for $n=3$.}
    \label{fig:weldedtreecoloredlabeled}
\end{figure}
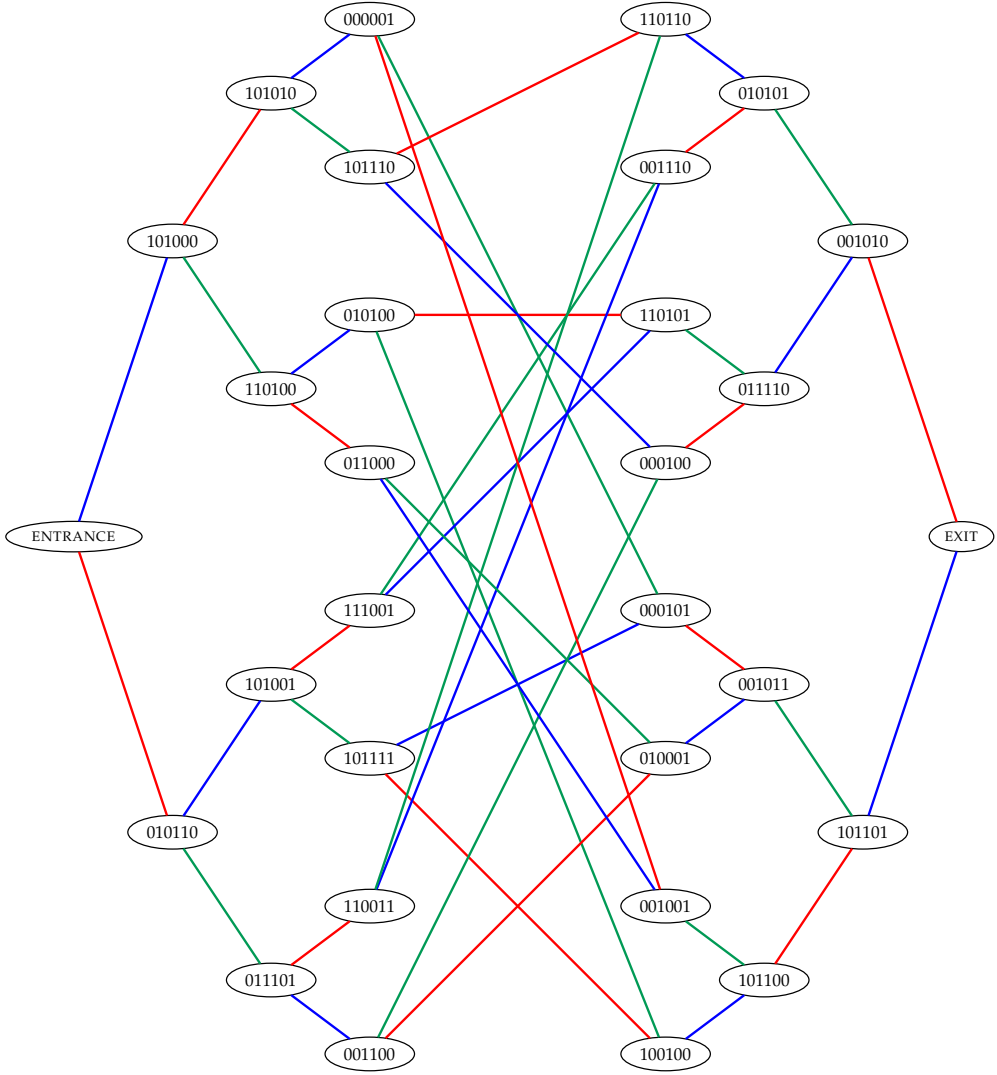

\begin{definition}[$\eta_c$] \label{def:eta_c}
For any $v \in \valid$ and $c \in \mathcal{C}$, let $I_c(v)$ be the indicator variable that is $1$ if the vertex labeled $v$ has an edge colored $c$ and $0$ otherwise. If $I_c(v) =  1$ for some $v \in \valid$ and $c \in \mathcal{C}$, let $N_c(v)$ be the label of the vertex joined to $v$ with an edge of color $c$. Then
\begin{align}
    \eta_c(v) \defeq
    \begin{cases}
        N_c(v) & v \in \valid \text{ and } I_c(v) = 1 \\
        \noedge & v \in \valid \text{ and } I_c(v) = 0 \\
        \invalid & v \notin \valid,
    \end{cases}
\end{align}
where $\noedge$ and $\invalid$ are special reserved strings in $\{0,1\}^{2n} \setminus \valid$. We also define  $\specialvertices \defeq \{0^{2n}, \allowbreak \entrance, \allowbreak \exit, \noedge, \allowbreak \invalid\}$.
\end{definition}

Since $\mathcal{G}$ is 3-colored, for any vertex label $v \in \valid$, $I_c(v) = 0$ only if $v \in \{\entrance, \exit\}$. We now describe the spaces that our algorithms act on.

\begin{definition}[Vertex register and vertex space] \label{def:vertexspace}
A \emph{vertex register} is a $2n$-qubit register that stores a vertex label. We consider quantum states that have exactly $p(n)$ vertex registers, and refer to the $2np(n)$-qubit space consisting of all the vertex registers as the \emph{vertex space}. 
\end{definition}

Any computational basis state in the vertex space stores $p(n)$ vertex labels, corresponding to a subgraph of $\mathcal{G}$. A quantum algorithm can also store additional information using its workspace.

\begin{definition}[Workspace and workspace register] \label{def:workspace}
A \emph{workspace register} is a single-qubit register that can store arbitrary ancillary states. We allow for arbitrarily many workspace registers, and refer to the space consisting of all workspace registers as the \emph{workspace}. 
\end{definition}

\subsection{Genuine algorithms} \label{subsec:genuine}

We now precisely describe the set of gates that we allow quantum query algorithms to employ for querying and manipulating the vertex labels in a meaningful way.

\begin{definition}[Genuine circuit] \label{def:genuinecircuit}
We say that a quantum circuit $C$ is \emph{\genuine} if it is built from the following unitary gates.
\begin{enumerate}
\item \label{itm:genuineoracle} Controlled-oracle query gates $O_c$ for $c \in \mathcal{C}$ where the control qubit is in the workspace, and $O_c$ acts on the $j$th and $k$th vertex registers for some distinct $j, k \in [p(n)] \defeq \{1,\ldots,p(n)\}$ as 
\begin{equation} \label{eq:genuineoracle}
    O_c \colon \ket{v_j} \ket{v_k} \mapsto \ket{v_j} \ket{v_k \oplus \eta_c(v_j)}
\end{equation}
where $\eta_c$ is specified in \Cref{def:eta_c}.

Furthermore, in a \genuine\ circuit, $O_c$ can only be applied if $v_k = 0^{2n}$ or $v_k = \eta_c(v_j)$ for every $v_j, v_k$ pair appearing in those respective registers in the superposition.

We let $\controlled{A}$ denote a controlled-$A$ gate, so that $\controlled{O_c}$ denotes the controlled-$O_c$ gate.

\item Controlled-$e^{i\theta T}$ rotations for any $\theta \in [0,2\pi)$ where the control qubit is in the workspace and the Hamiltonian $T$ is defined, similarly to \cite{ChildsCDFGS03}, to act on the $j$th and $k$th vertex registers for some distinct $j, k \in [p(n)]$ as
\begin{equation}
    T\colon\ket{v_j}\ket{v_k} \mapsto \ket{v_k}\ket{v_j}.
\end{equation}
As per \cref{itm:genuineoracle}, $\controlled{e^{i\theta T}}$ denotes the controlled-$e^{i\theta T}$ gate.

\item Equality check gates $\mathcal{E}$, which act on the $j$th and $k$th vertex registers for some distinct $j, k \in [p(n)]$, and on the $a$th workspace register for some workspace index $a$, as
\begin{equation}
    \mathcal{E}\colon \ket{v_j}\ket{v_k}\ket{w_a} \mapsto \ket{v_j}\ket{v_k}\ket{w_a \oplus \delta[v_j = v_k]}
\end{equation}
where $\delta[P]$ is $1$ if $P$ is true and $0$ if $P$ is false.
	
\item $\noedge$ check gates $\mathcal{N}$, which act on the $j$th vertex register for some $j \in [p(n)]$, and on the $a^{th}$ workspace register for some workspace index $a$, as
\begin{equation}
    \mathcal{N}\colon \ket{v_j}\ket{w_a} \mapsto \ket{v_j}\ket{w_a \oplus \delta[v_j=\noedge]}.
\end{equation}

\item $\zero$ check gates $\mathcal{Z}$, which act on the $j$th vertex register for some $j \in [p(n)]$, and on the $a^{th}$ workspace register for some workspace index $a$, as
\begin{equation}
    \mathcal{Z}\colon \ket{v_j}\ket{w_a} \mapsto \ket{v_j}\ket{w_a \oplus \delta[v_j=0^{2n}]}.
\end{equation}

\item Arbitrary two-qubit gates (or, equivalently, arbitrary unitary transformations) restricted to the workspace register.
\end{enumerate}
\end{definition}

We now define the notion of genuine algorithms using \Cref{def:genuinecircuit}. Let $O = \{O_c: c \in \mathcal{C}\}$ denote a particular randomly selected \wto , and let $\mathcal{A}(O)$ denote a quantum algorithm that makes quantum queries to $O$.

\begin{definition}[Genuine algorithm] \label{def:genuine}
We call a quantum query algorithm $\mathcal{A}$ \emph{genuine} if, for the given \wto\ $O$, $\mathcal{A}(O)$ acts on the vertex space and the workspace as follows.
\begin{enumerate}
    \item $\mathcal{A}(O)$ begins with an initial state 
    \begin{align} \label{eq:genuineinitial}
	    \ket{\psi_{\mathrm{initial}}} \defeq \ket{\entrance} \otimes \left(\ket{0^{2n}} \right)^{\otimes(p(n)-1)} \otimes \ket{0}_{\workspace}.
	\end{align}
	\item Then, it applies a $p(n)$-gate genuine circuit $C$ (as in \Cref{def:genuinecircuit}) on $\ket{\psi_{\mathrm{initial}}}$ to get the state $\ket{\psi_\Al}$.
	\item Finally, it measures all the vertex registers of $\ket{\psi_\Al}$ in the computational basis and outputs the corresponding vertex labels.
\end{enumerate}
\end{definition}

We focus on genuine algorithms because they are easier to analyze than fully general algorithms, but do not seem to eliminate features that would be useful in a path-finding algorithm. Genuine algorithms are only restricted in the sense that they cannot use vertex labels other than by storing them, acting on them with the input or output register of an oracle gate, performing phased swaps of the vertex label positions, and checking whether they are equal to zero or $\noedge$. Since the vertex labels are arbitrary and uncorrelated with the structure of the welded tree, it is hard to imagine how a general quantum algorithm could gain an advantage over genuine quantum algorithms by using the vertex labels in any other way.\footnote{The proof that classical algorithms cannot efficiently find the $\exit$ effectively shows that classical algorithms cannot benefit from non-genuine behavior \cite{ChildsCDFGS03}. While it seems harder to make this rigorous for quantum algorithms, similar intuition holds.} Thus, genuine algorithms describe a natural class of strategies that should offer insight into the more general case.

We also emphasize that the only proposed algorithms for the \wtp\ (and the associated path-finding problem) are genuine. The only such algorithms we are aware of are the exit-finding algorithm of \cite{ChildsCDFGS03} and the snake-walk algorithm analyzed by Rosmanis \cite{Rosmanis11}.

Intuitively, the $\exit$-finding algorithm of \cite{ChildsCDFGS03} is genuine since it performs a quantum walk on the \wtg, and such a process does not depend on the vertex labels.
More concretely, a close inspection of the $\exit$-finding algorithm of \cite{ChildsCDFGS03} reveals that every gate in the algorithm is an allowed gate in \Cref{def:genuinecircuit} (even with the above minor modification).  This means that the algorithm is genuine as per \Cref{def:genuine}.  (As a technical aside, note that the algorithm works with any valid coloring of the \wtg, so in particular, it works for our chosen 3-color model by simply limiting the set of colors in the algorithm.)

Similarly, in \cite{Rosmanis11}, Rosmanis defines a quantum snake walk algorithm on a particular \wtg\ $G$ to be a quantum walk on a corresponding ``snake graph'' $G_\ell$, which has one vertex for each distinct ``snake'' of length $\ell$ in $G$.  Here a ``snake'' of length $\ell$ refers to a length-$\ell$ vector of consecutive vertices of $G$.  Although it is more complicated to decompose this algorithm into the gates of \Cref{def:genuinecircuit}, this can be done, showing that the snake walk algorithm is genuine. Furthermore, it is intuitive that the snake walk algorithm should not depend on the specific vertex labels simply because it is defined to be a quantum walk on $G_n$, a graph whose connectivity does not depend on the vertex labels of $G$.

\subsection{Rooted algorithms} \label{subsec:rooted}

We now define the notion of a rooted algorithm. 
Intuitively, a state in the vertex space is rooted if it corresponds to a set of labels of vertices from $\valid$ (and the $\noedge$ and $0^{2n}$ labels) that form a connected subgraph containing the $\entrance$ (neglecting the $\noedge$ and $0^{2n}$ labels, if present).

\begin{definition}[Rooted state] \label{def:rootedstate}
We say that a computational basis state $\ket{\psi}$ in the vertex space is \emph{rooted} if $\entrance$ is stored in some vertex register of $\ket{\psi}$ and, for any vertex label $v$ stored in any of the vertex registers of $\ket{\psi}$,
\begin{enumerate}
    \item $v \in \valid \cup \{0^{2n}, \noedge\}$, and
    \item if $v \neq 0^{2n}$, then there exist $r$ vertex registers storing vertex labels $v_{j_1}, \ldots, v_{j_r}$ such that $v_{j_1} = \entrance$, $v_{j_r} = v$, and for each $k \in [r-1]$, $\eta_c(v_{j_k}) = v_{j_{k+1}}$ for some $c \in \mathcal{C}$. 
\end{enumerate}
\end{definition}

\begin{figure}
    \tiny
    \centering
    \begin{subfigure}[b]{0.44\linewidth}
    \begin{tikzpicture}[scale=0.27, auto, node distance=0.2cm, every loop/.style={}, thick, every arrow/.append style={dash,thick}]
    \tikzset{VertexStyle/.style = {shape = ellipse, minimum size = 20pt, draw}}
    \Vertex[x=0,y=30,L=$\entrance$]{emp}
    \Vertex[x=-8,y=26,L=$010110$]{r}
    \Vertex[x=8,y=26,L=$101000$]{b}
    \Vertex[x=-4,y=22,L=$101001$]{rb}
    \Vertex[x=4,y=22,L=$110100$]{bg}
    \Vertex[x=12,y=22,L=$101010$]{br}
    \Vertex[x=-8,y=18,L=$101111$]{rbg}
    \tikzset{every node/.style={opacity=0,text opacity=1,scale=0.75}}
    \tikzset{EdgeStyle/.style=auto, color=red}
    \Edge[color=red](emp)(r)
    \Edge[color=blue](emp)(b)
    \Edge[color=blue](r)(rb)
    \Edge[color=ForestGreen](b)(bg)
    \Edge[color=red](b)(br)
    \Edge[color=ForestGreen](rb)(rbg)
    \end{tikzpicture}
    \caption{Subgraph of \Cref{fig:weldedtreecoloredlabeled} corresponding to a state containing the vertex labels $\{\entrance,\allowbreak 010110, 101000, 101001, 101010, 101111, 110100\}$.} 
    \label{fig:rooted} 
    \end{subfigure}
    \hspace{3em}
    \begin{subfigure}[b]{0.44\linewidth}
    \begin{tikzpicture}[scale=0.27, auto, node distance=0.2cm, every loop/.style={}, thick, every arrow/.append style={dash,thick}]
    \tikzset{VertexStyle/.style = {shape = ellipse, minimum size = 20pt, draw}}
    \Vertex[x=0,y=30,L=$\entrance$]{emp}
    \Vertex[x=8,y=26,L=$101000$]{b}
    \Vertex[x=-4,y=22,L=$101001$]{rb}
    \Vertex[x=4,y=22,L=$110100$]{bg}
    \Vertex[x=12,y=22,L=$101010$]{br}
    \Vertex[x=-8,y=18,L=$101111$]{rbg}
    \tikzset{every node/.style={opacity=0,text opacity=1,scale=0.75}}
    \tikzset{EdgeStyle/.style=auto, color=red}
    \Edge[color=blue](emp)(b)
    \Edge[color=ForestGreen](b)(bg)
    \Edge[color=red](b)(br)
    \Edge[color=ForestGreen](rb)(rbg)
    \end{tikzpicture}
    \caption{Subgraph of \Cref{fig:weldedtreecoloredlabeled} corresponding to a state containing the vertex labels $\{\entrance,\allowbreak101000, 101001, 101010, 101111, 110100\}$. 
    }
    \label{fig:nonrooted} 
    \end{subfigure}
    \caption{Examples of rooted and non-rooted states.}
    \label{fig:rootedexamples}
\end{figure}
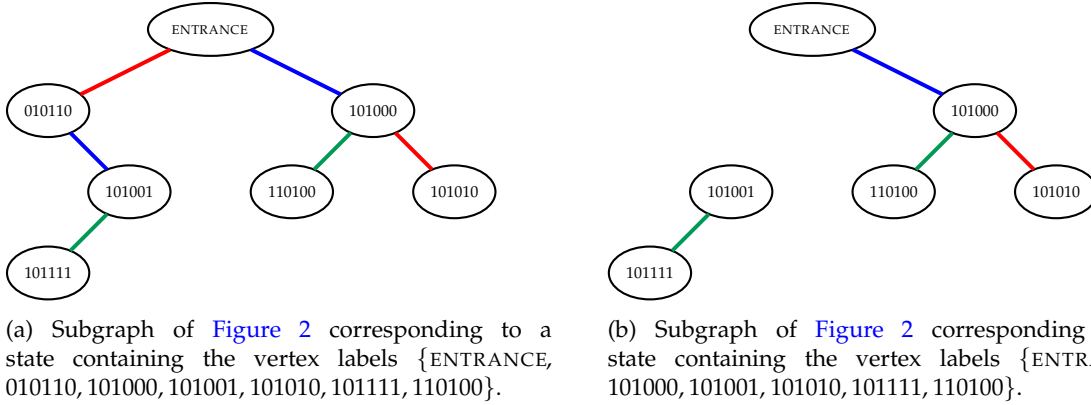

\Cref{fig:rootedexamples} shows examples of rooted and non-rooted states.
We say that an algorithm is rooted if all its intermediate states are superpositions of rooted states.

\begin{definition}[Rooted algorithm] \label{def:rooted}
A quantum query algorithm $\Al$ is \emph{\rooted} if, for the given \wto\ $O$, at each intermediate step of $\mathcal{A}(O)$, every computational basis state in the support of the vertex space of the quantum state maintained by $\Al$ is rooted.
\end{definition}

Non-rooted behavior can be useful for exploring the \wtg. In particular, the algorithm of \cite{ChildsCDFGS03} for finding the $\exit$ is not rooted, since it only maintains a single vertex (in superposition). However, a path-finding algorithm must store information about many vertices, and the value of detaching from the $\entrance$ is unclear since the algorithm must ultimately reattach. Note that the snake walk \cite{Rosmanis11} is initially rooted, the most natural way for it to find a path from $\entrance$ to $\exit$ is arguably to do so while remaining rooted, though the algorithm may become non-rooted if it is run for long enough.

\section{Transcript states}
\label{sec:transcript}

For any genuine quantum query algorithm $\mathcal{A}$ that makes $p(n)$ oracle queries to the oracle $\Oracle$ of the input welded tree $\mathcal{G}$, we associate a quantum state $\ket{\phi_{\mathcal{A}}}$, which we call the \textit{transcript state} of $\mathcal{A}(\Oracle)$. As we will see in \Cref{def:transcript} below, instead of storing the label of a vertex $v$, the transcript state $\ket{\phi_{\mathcal{A}}}$ stores a path from the $\entrance$ to $v$. We refer to this path as the \emph{address} of $v$, which we now define.

\begin{definition}[Vertex addresses] \label{def:addresses}
We say that a tuple $t$ of colors from $\mathcal{C}$ is an \emph{address} of a vertex $v$ of $\mathcal{G}$ if $v$ is reached by starting at the $\entrance$ and following the edge colors listed in $t$. For completeness, we assign special names $\zeros$, $\noedgeaddress$, and $\invalidaddress$ to denote the addresses of vertex labels $0^{2n}$, $\noedge$, and $\invalid$, respectively. We denote the empty tuple by the special name $\emptyaddress$. 
Let $\mathsf{SpecialAddresses} \defeq \{\zeros, \emptyaddress, \allowbreak \noedgeaddress, \allowbreak \invalidaddress\}$. 
We define 
\begin{equation}
\addresses \defeq \specialaddresses \cup \bigcup_{i \in [p(n)]} \mathcal{C}^{i}
\end{equation}
where $\mathcal{C}^{i}$ denotes the set of all $i$-tuples of colors from $\mathcal{C}$.
\end{definition}

Note that a given vertex can have many different associated addresses. Indeed, two addresses that differ by an even-length palindrome of colors are associated to the same vertex. Even greater multiplicity of addresses can occur because of the cycles in $\mathcal{G}$. We define the notion of the \textit{address tree} to deal with the former issue, and we delay consideration of the latter issue. To define the address tree, we need to know the color $\cbad$ that does not appear at the $\entrance$.

\begin{definition}[The missing color at the entrance]
Let $\cbad \in \mathcal{C}$ be the unique color such that there is no edge of color $\cbad$ incident to the $\entrance$ in $\mathcal{G}$. 
\end{definition}

\begin{definition}[Address tree] \label{def:addresstree}
The \emph{address tree} $\mathcal{T}$ (see \Cref{fig:addresstree}) is a binary tree of depth $p(n)$ with 3 additional vertices.\footnote{The address tree is not technically a tree, but we use this name since it contains no non-trivial cycles.} Its vertices and edges are labeled by addresses and colors, respectively, as follows. The 3 additional vertices are labeled by each address in $\mathsf{SpecialAddresses} \setminus \{\emptyaddress\}$. The root of $\mathcal{T}$ is labeled by $\emptyaddress$. It is joined to the vertex labeled $\noedgeaddress$ by a directed edge of color $\cbad$, and to 2 other vertices, each by an undirected edge of a distinct color from $\mathcal{C} \setminus \{\cbad\}$. For each color $c \in \mathcal{C}$, the vertices labeled $\zeros$ and $\noedgeaddress$ have a directed edge colored $c$ to the vertex labeled $\invalidaddress$. The vertex labeled $\invalidaddress$ also has 3 self-loop edges, each of a distinct color from $\mathcal{C}$. Every other vertex in $t$ is joined to 3 other vertices, each by an undirected edge of a distinct color from $\mathcal{C}$. Every vertex $t$ of $\mathcal{T}$ whose label is not in $\mathsf{SpecialAddresses}$ is labeled by the sequence of colors that specifies the (shortest) path from $\emptyaddress$ to $t$ in $\mathcal{T}$. For any vertex $t$ of $\mathcal{T}$, let $\lambda_c(t)$ be the vertex that is joined to $t$ by an edge of color $c$ in $\mathcal{T}$.
\end{definition}

\begin{figure}
    \centering
    \small
    \resizebox{0.92\textwidth}{!}{%
    \begin{tikzpicture}[scale=0.57, auto, node distance=0.2cm, every loop/.style={}, thick, every arrow/.append style={dash,thick}]
    \tikzset{VertexStyle/.style = {shape = ellipse,  minimum size = 30pt, draw}}
    \Vertex[x=0,y=30,L=$\emptyadd$]{emp}
    \Vertex[x=-12,y=26,L=$r$]{r}
    \Vertex[x=0,y=26,L=$b$]{b}
    \Vertex[x=12,y=26,L=$\noedge$]{noe}
    \Vertex[x=18,y=26,L=$\zero$]{zer}
    \Vertex[x=-15,y=22,L=$r\comma g$]{rg}
    \Vertex[x=-9,y=22,L=$r\comma b$]{rb}
    \Vertex[x=-3,y=22,L=$b\comma g$]{bg}
    \Vertex[x=3,y=22,L=$b\comma r$]{br}
    \Vertex[x=15,y=22,L=$\invalid$]{inv}
    \Vertex[x=-16.5,y=18,L=$r\comma g\comma b$]{rgb}
    \Vertex[x=-13.5,y=18,L=$r\comma g\comma r$]{rgr}
    \Vertex[x=-10.5,y=18,L=$r\comma b\comma g$]{rbg}
    \Vertex[x=-7.5,y=18,L=$r\comma b\comma r$]{rbr}
    \Vertex[x=-4.5,y=18,L=$b\comma g\comma r$]{bgr}
    \Vertex[x=-1.5,y=18,L=$b\comma g\comma b$]{bgb}
    \Vertex[x=1.5,y=18,L=$b\comma r\comma g$]{brg}
    \Vertex[x=4.5,y=18,L=$b\comma r\comma b$]{brb}
    \tikzset{every node/.style={opacity=0,text opacity=1,scale=0.75}}
    \tikzset{EdgeStyle/.style=auto, color=red}
    \Edge[color=red](emp)(r)
    \Edge[color=blue](emp)(b)
    \Edge[color=ForestGreen](r)(rg)
    \Edge[color=blue](r)(rb)
    \Edge[color=ForestGreen](b)(bg)
    \Edge[color=red](b)(br)
    \Edge[color=blue](rg)(rgb)
    \Edge[color=red](rg)(rgr)
    \Edge[color=ForestGreen](rb)(rbg)
    \Edge[color=red](rb)(rbr)
    \Edge[color=red](bg)(bgr)
    \Edge[color=blue](bg)(bgb)
    \Edge[color=ForestGreen](br)(brg)
    \Edge[color=blue](br)(brb)
    \tikzset{EdgeStyle/.style=auto,post}
    \Edge[color=ForestGreen](emp)(noe)
    \Edge[color=blue](noe)(inv)
    \Edge[color=blue](zer)(inv)
    \tikzset{every loop/.style=auto,post}
    \path[->,line width=1.4pt] (inv) edge[in=195,out=220,loop,color=ForestGreen] ();
    \path[->,line width=1.4pt] (inv) edge[in=247.5,out=292.5,loop,color=blue] ();
    \path[->,line width=1.4pt] (inv) edge[in=320,out=345,loop,color=red] ();
    \tikzset{EdgeStyle/.style=auto,post,bend right=20}
    \Edge[color=red](noe)(inv)
    \Edge[color=red](zer)(inv)
    \tikzset{EdgeStyle/.style=auto,post,bend left=20}
    \Edge[color=ForestGreen](noe)(inv)
    \Edge[color=ForestGreen](zer)(inv)
    \end{tikzpicture}
    }%
    \caption{Address tree $\mathcal{T}$ of depth $3$ corresponding to the graph in \Cref{fig:weldedtreecoloredlabeled}. For the sake of brevity, we have removed the suffix \textsc{address} for all the addresses in $\specialaddresses$ and the tuple brackets for all the addresses not in $\specialaddresses$. Notice that, for each vertex, there is an edge (either directed or undirected) of each color outgoing from each vertex in $\mathcal{T}$.}
    \label{fig:addresstree}
\end{figure}
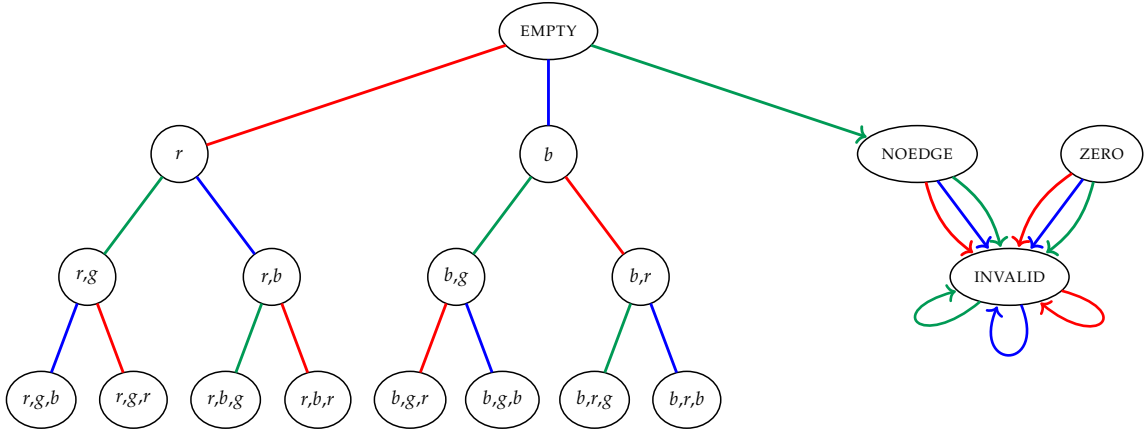

The following simple observations about the address tree $\mathcal{T}$ may be instructive.

\begin{itemize}
    \item Since the 3-coloring of $\mathcal{T}$ is a valid coloring, no vertex label of $\mathcal{T}$ contains an even-length palindrome. 
    \item Beginning at the vertex labeled $\emptyaddress$ and traversing any sequence of colors in $\mathcal{T}$ leads to some vertex of $\mathcal{T}$. Therefore, in the definition of the transcript state (\Cref{def:transcript}), and hence in the algorithm analyzed in \Cref{sec:simulation},
    the addresses that we consider are valid labels of vertices in $\mathcal{T}$, by construction.
    \item The color $\cbad$ can be computed with 2 queries to the oracle $\Oracle$. Therefore, the entire address tree can be computed with only 2 queries to $\Oracle$.
\end{itemize}

Intuitively, the transcript state $\ket{\phi_\Al}$ is the state that results from running the algorithm $\Al$ on the address tree $\mathcal{T}$ instead of the actual \wtg\ $\mathcal{G}$. If $\Al$ does not explore cycles in $\mathcal{G}$ to a significant extent, then $\ket{\phi_\Al}$ should be a good approximation of the state $\ket{\psi_\Al}$ produced by running $\Al$ on $\mathcal{G}$, as in \Cref{def:genuine}. In \Cref{sec:simulation}, we show that this is indeed the case for any genuine, rooted quantum algorithm $\Al$.

Now we define a mapping $B$ that turns addresses into strings, and another mapping $B^{\mathsf{inv}}$ that turns strings into addresses, such that $B^{\mathsf{inv}}$ is the inverse of $B$ on the range of $B$. In our analysis, the registers we consider can never contain any string that is not in the range of the $B$ mapping. Therefore, it is sufficient to define $B^{\mathsf{inv}}$ over the range of $B$. Nevertheless, we define $B^{\mathsf{inv}}$ over $\{0,1\}^{2p(n)}$ for the sake of completeness.

\begin{definition}[$B$ mapping] \label{def:Bmapping}
Let $\mathcal{V}_{\mathcal{T}}$ denote the set of labels of vertices of the address tree $\mathcal{T}$. Let $S$ be a subset of $\{0,1\}^{2p(n)}$ of size $|\mathcal{V}_{\mathcal{T}}|$ containing $0^{2p(n)}$. Let $\emptystring$, $\noedgestring$, and $\invalidstring$ be any distinct fixed strings in $S \setminus \{0^{2p(n)}\}$. Then $B\colon \mathcal{V}_{\mathcal{T}} \to S$ is a bijection mapping $\zeros$ to $0^{2p(n)}$, $\emptyaddress$ to $\emptystring$, $\noedgeaddress$ to $\noedgestring$, and $\invalidaddress$ to $\invalidstring$. We define the function $B^{\mathsf{inv}}\colon \{0,1\}^{2p(n)} \to \mathcal{V}_{\mathcal{T}}$ as
\begin{equation}
    B^{\mathsf{inv}}(s) \defeq
    \begin{cases}
        B^{-1}(s) & s \in S \\
        \invalidaddress & \text{otherwise}.
    \end{cases}
\end{equation}
\end{definition}

We now define analogs of the spaces introduced in \Cref{def:vertexspace,def:workspace} that our transcript state (\Cref{def:transcript}) lies in and that our classical simulation algorithm (\Cref{alg:C(T)1}) acts on.  

\begin{definition}[Address register and address space]
An \emph{address register} is a $2p(n)$-qubit register storing bit strings that are the image, under the map $B$, of the address of some vertex label in the address tree $\mathcal{T}$. We consider quantum states that have exactly $p(n)$ address registers, and refer to the $2p(n)^2$-qubit space of all the address registers as the \emph{address space}. 
\end{definition}

\begin{definition}[Address workspace and address workspace register]
An \emph{address workspace register} is a single-qubit register that stores arbitrary ancillary states. We allow arbitrarily many address workspace registers, and refer to the space consisting of all address workspace registers as the \emph{address workspace}.
\end{definition}

Notice the similarity between the definitions of workspace and address workspace. Indeed, we will later observe that the projection of $\ket{\psi_\Al}$ on the workspace is the same as the projection of the transcript state on the address workspace in the subspace not containing the $\exit$ or a cycle. We are now ready to state the definition of the transcript state $\ket{\phi_\Al}$  associated with the quantum state $\ket{\psi_\Al}$.

\begin{definition}[Transcript state] \label{def:transcript}
Consider a $p(n)$-query \genuine, \rooted\ quantum algorithm $\Al$. Given a circuit $C$ that implements $\Al$, acting on the vertex space and the workspace, let $\tilde{C}$ be the quantum circuit that acts on the address space and the address workspace, obtained by the following procedure.\footnote{Notice that the time complexity of this procedure is linear in the size of the circuit $C$.}
\begin{enumerate}

\item Determine $\cbad$ using two queries to the oracle $\Oracle$.

\item Replace each vertex register with an address register and each workspace register with an address workspace register. Replace the initial state used in the genuine algorithm (recall \Cref{def:genuine}) with the new initial state
\begin{align}\label{eq:initialtranscriptstate}
    \ket{\phi_{\mathrm{initial}}} \defeq \ket{\emptystring} \otimes \left(\ket{0^{2p(n)}} \right)^{\otimes(p(n)-1)} \otimes \ket{0}_{\mathrm{addressworkspace}}.
\end{align}

In \crefrange{item:oraclegates}{item:workspacegates} below, we describe gates that act on the address space analogously to how the gates in \Cref{def:genuinecircuit} act on the vertex space. For any vertex $v \in \valid$, we write $s_v \in \{0,1\}^{2p(n)}$ to denote the contents of the address register corresponding to the vertex register storing $v$. The transcript state is produced by the unitary operation that results by replacing each vertex-space gate in the quantum algorithm $\mathcal{A}$ with the corresponding address-space gate defined below.

\item \label{item:oraclegates} Replace any controlled-oracle gate in $C$ (controlled on workspace register $a$ and acting on vertex registers $j$ and $k$) with controlled-$\tilde{\Oracle}_c$ (controlled on address workspace register $a$ and acting on address registers $j$ and $k$), where
\begin{align}
    \tilde{\Oracle}_c\colon
	\ket{s_j} \ket{s_k} \mapsto \ket{s_j} \ket{s_k \oplus B(\lambda_c(B^{\mathsf{inv}}(s_j)))}.
\end{align} 

\item Replace any controlled-$e^{i \theta T}$ gate in $C$ (controlled on workspace register $a$ and acting on vertex registers $j$ and $k$) with a controlled-$e^{i \theta \tilde{T}}$ gate (controlled on address workspace register $a$ and acting on address registers $j$ and $k$), where
\begin{align}
    \tilde{T}\colon\ket{s_j}\ket{s_k} \mapsto \ket{s_k}\ket{s_j}.
\end{align}

\item Replace any equality check gate $\mathcal{E}$ in $C$ (controlled on vertex registers $j$ and $k$ and acting on workspace register $a$) with $\tilde{\mathcal{E}}$ (controlled on address registers $j$ and $k$ and acting on address workspace register $a$), where
\begin{equation}
    \tilde{\mathcal{E}}\colon \ket{s_j}\ket{s_k}\ket{w_a} \mapsto \ket{s_j}\ket{s_k}\ket{w_a \oplus \delta[s_j = s_k]}.
\end{equation}

\item Replace any $\noedge$-check gate $\mathcal{N}$ in $C$ (controlled on vertex register $j$ and acting on workspace register $a$) with $\tilde{\mathcal{N}}$ (controlled on address register $j$ and acting on address workspace register $a$), where
\begin{equation}
    \tilde{\mathcal{N}}\colon \ket{s_j}\ket{w_a} \mapsto \ket{s_j}\ket{w_a \oplus \delta[s_j = \noedgestring]}.
\end{equation}

\item Replace any $\zero$-check gate $\mathcal{Z}$ in $C$ (controlled on vertex register $j$ and acting on workspace register $a$) with $\tilde{\mathcal{Z}}$ (controlled on address register $j$ and acting on address workspace register $a$), where
\begin{equation}
    \tilde{\mathcal{Z}}\colon \ket{s_j}\ket{w_a} \mapsto \ket{s_j}\ket{w_a \oplus \delta[s_j = 0^{2p(n)}]}.
\end{equation}
		
\item \label{item:workspacegates} Leave gates acting on the workspace unchanged.
\end{enumerate}

The \emph{transcript state} $\ket{\phi_{\mathcal{A}}}$ is obtained by applying the circuit $\tilde{C}$ to the string  $\emptystring = B(\emptyaddress)$, together with $p(n)-1$ ancilla address registers storing $0^{2p(n)} = B(\zeros)$. In other words,
\begin{align} \label{eq:phiA}
	\ket{\phi_{\mathcal{A}}} \defeq \tilde{C} \ket{\phi_{\mathrm{initial}}}.
\end{align}
\end{definition}
	
Notice that whereas $C$ updates the vertex registers by making many oracle queries to $O$, the circuit $\tilde{C}$ only makes two queries to $O$. 

\section{Classical simulation of genuine, rooted algorithms}
\label{sec:simulation}

We now describe a classical algorithm for simulating genuine, rooted quantum algorithms.
We begin in \Cref{subsec:checking} by describing procedures for checking that the behavior of a quantum algorithm is genuine and rooted. While these procedures have no effect on a quantum algorithm with those properties, they enforce properties of the transcript state that are useful in our analysis.
Then, in \Cref{subsec:L'}, we describe a mapping that sends states in the address space to states in the vertex space, which is used to describe our simulation algorithm in  \Cref{subsec:classicalalgorithm}. In \Cref{subsec:goodbadandugly}, we decompose the state into components that assist in our analysis. We show in \Cref{subsec:analysisongoodpart} that the `good' part of the state of a genuine, rooted algorithm is related, via the mapping $L$ defined in \Cref{def:DefL}, to the `good' part of the state of our simulation at each intermediate step. Finally, we establish in \Cref{subsec:analysisonall} (using the result of \Cref{sec:3coloring}) that no genuine, rooted quantum algorithm can find an $\entrance$--$\exit$ path (or a cycle) with more than exponentially small probability.

\subsection{Checking procedures} \label{subsec:checking}

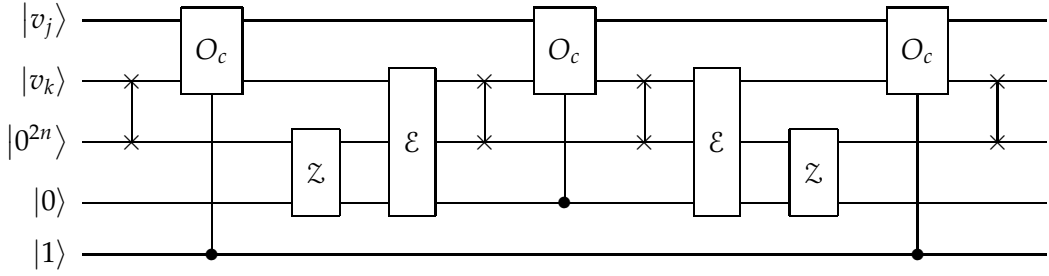
\begin{figure}
    \centering
    \[
    \Qcircuit @C=1.7em @R=1.2em {
    & \lstick{\ket{v_j}} & \qw & \multigate{1}{O_c} & \qw & \qw & \qw & \multigate{1}{O_c} & \qw & \qw & \qw & \multigate{1}{O_c} & \qw & \qw \\
    & \lstick{\ket{v_k}} & \qswap \qwx[1] & \ghost{O_c} & \qw & \multigate{2}{\mathcal{E}} & \qswap \qwx[1] & \ghost{O_c} & \qswap \qwx[1] & \multigate{2}{\mathcal{E}} & \qw & \ghost{O_c} & \qswap \qwx[1] & \qw \\
    & \lstick{\ket{0^{2n}}} & \qswap \qw & \qw & \multigate{1}{\mathcal{Z}} & \ghost{\mathcal{E}} & \qswap \qw & \qw & \qswap \qw & \ghost{\mathcal{E}} & \multigate{1}{\mathcal{Z}} & \qw & \qswap \qw & \qw  \\
    & \lstick{\ket{0}} & \qw & \qw & \ghost{\mathcal{Z}} & \ghost{\mathcal{E}} & \qw & \ctrl{-2} & \qw & \ghost{\mathcal{E}} & \ghost{\mathcal{Z}} & \qw & \qw & \qw \\
    & \lstick{\ket{1}} & \qw & \ctrl{-3} & \qw & \qw & \qw & \qw & \qw & \qw & \qw & \ctrl{-3} & \qw & \qw
    }
    \]
    \caption{Circuit diagram for checking whether $v_k \in \{0^{2n},\eta_c(v_j)\}$. The top three registers (i.e., those initialized with $\ket{v_j}$, $\ket{v_k}$, and $\ket{0^{2n}}$, respectively) are vertex registers and the bottom two (i.e., those initialized with $\ket{0}$ and $\ket{1}$, respectively) are workspace registers.}
    \label{fig:genuinitycheck}
\end{figure}

First note that we can efficiently check whether, for an oracle query $\controlled{O_c}$ with the input vertex register storing $\ket{v_j}$, the output vertex register contains $0^{2n}$ or $\eta_c(v_j)$. Indeed, we can replace each oracle query gate $O_c$ with the circuit shown in \Cref{fig:genuinitycheck} (where the entire circuit is controlled on the control register of $O_c$ in the workspace), which uses a constant number of gates from \Cref{def:genuinecircuit}. The swap gates need not be explicitly performed, and are included only so that all the wires that a certain gate acts on are adjacent. The last workspace register is used to apply uncontrolled-$O_c$ gates. In this circuit, the center oracle gate $O_c$ is only applied on the registers storing $\ket{v_j}$ and $\ket{v_k}$ if the first workspace register stores a $1$, which happens only when $v_k = 0^{2n}$ or $v_k = \eta_c(v_j)$ by the definitions of the $\zero$ check gate $\mathcal{Z}$ and the equality check gate $\mathcal{E}$. Since we are promised that the output register of any oracle gate satisfies \cref{itm:genuineoracle} of \Cref{def:genuinecircuit}, replacing each controlled-oracle gate in any given genuine circuit $C$ with the gadget of \Cref{fig:genuinitycheck} does not impact the output state of $C$, while only increasing the circuit size by a constant factor. 

\begin{remark}[Checking genuineness] \label{rem:genuinenesscheck}
Given any genuine circuit $C$ with $|C|$ gates, one can efficiently construct a genuine circuit $C'$ consisting of $O(|C|)$ gates such that $C'$ has the same functionality as $C$ and verifies the condition stated after \cref{eq:genuineoracle} in \cref{itm:genuineoracle} of \Cref{def:genuinecircuit} before applying each oracle call $O_c$. Therefore, we assume without loss of generality that the given genuine circuit $C$ has built-in gadgets that verify this condition. 
\end{remark}

The consequence of \Cref{def:transcript} and \Cref{rem:genuinenesscheck} is a crucial observation about transcript states, which will turn out be useful in our analysis in \Cref{sec:simulation}.

\begin{lemma} \label{lem:transcriptstateisintherangeofB}
For any given rooted genuine circuit $C$, any string stored in any computational basis state in the support of the address space of the state $\tilde{C}\ket{\phi_{\mathrm{initial}}}$ is in the range of $B$. 
\end{lemma}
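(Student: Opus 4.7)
The plan is to prove a strengthening of the lemma by induction on the number of gates of $\tilde{C}$ applied so far, namely: at every intermediate state during the execution of $\tilde{C}$ on $\ket{\phi_{\mathrm{initial}}}$, every address register in every computational basis state in the support contains a string in the range of $B$. The stated lemma, which concerns only the final state $\tilde{C}\ket{\phi_{\mathrm{initial}}}$, is an immediate consequence.

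For the base case, the initial state from \cref{eq:initialtranscriptstate} has $\emptystring = B(\emptyaddress)$ in the first address register and $0^{2p(n)} = B(\zeros)$ in each of the remaining $p(n)-1$ address registers; both lie in the range of $B$. For the inductive step, I would go through the gate types introduced in \Cref{def:transcript}. The controlled-$e^{i\theta\tilde{T}}$ gate only swaps the contents of two address registers, check gates $\tilde{\mathcal{E}}$, $\tilde{\mathcal{N}}$, $\tilde{\mathcal{Z}}$ do not modify address registers at all, and workspace gates never touch the address space; each of these trivially preserves the invariant. The only substantive case is the controlled-$\tilde{\Oracle}_c$ gate acting on address registers $j$ and $k$.

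For that case, I would invoke \Cref{rem:genuinenesscheck}, which lets us assume $C$ has a built-in gadget (as in \Cref{fig:genuinitycheck}) verifying that $v_k \in \{0^{2n}, \eta_c(v_j)\}$ before each oracle call. Under the translation prescribed by \Cref{def:transcript}, this gadget becomes an address-space gadget that uses $\tilde{O}_c$ on an ancilla address register initialized to $0^{2p(n)}$ (producing $B(\lambda_c(B^{\mathsf{inv}}(s_j)))$, which lies in the range of $B$ and is cleanly uncomputed at the end), together with $\tilde{\mathcal{Z}}$ and $\tilde{\mathcal{E}}$ checks. By the inductive hypothesis $s_j \in \mathrm{range}(B)$, so $B^{\mathsf{inv}}(s_j) = B^{-1}(s_j)$ is a well-defined vertex of $\mathcal{T}$ and $B(\lambda_c(B^{\mathsf{inv}}(s_j)))$ is well-defined and in the range of $B$. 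The gadget then guarantees that the main $\tilde{\Oracle}_c$ is only applied on branches with $s_k = 0^{2p(n)}$ or $s_k = B(\lambda_c(B^{\mathsf{inv}}(s_j)))$, in which cases the gate sends $s_k$ to $B(\lambda_c(B^{\mathsf{inv}}(s_j)))$ or to $0^{2p(n)} = B(\zeros)$ respectively, both in the range of $B$.

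The main obstacle I expect is the careful verification that the genuineness-check gadget from \Cref{fig:genuinitycheck}, translated gate-by-gate into the address space, really does enforce the precise condition $s_k \in \{0^{2p(n)}, B(\lambda_c(B^{\mathsf{inv}}(s_j)))\}$. This alignment rests on the compatibility of $B$ with the gadget's internal structure: the test ``ancilla equals $0^{2n}$'' translates via $\tilde{\mathcal{Z}}$ to ``ancilla equals $0^{2p(n)}$'' because $B(\zeros) = 0^{2p(n)}$, and the equality check between $v_k$ and the ancilla holding $\eta_c(v_j)$ translates under $\tilde{\mathcal{E}}$ to the check between $s_k$ and the ancilla holding $B(\lambda_c(B^{\mathsf{inv}}(s_j)))$. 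Once this translation is made explicit and the ancilla inside the gadget is tracked through its compute/uncompute lifecycle (so that it too never leaves the range of $B$), the induction closes cleanly.
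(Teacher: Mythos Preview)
Your proposal is correct and follows essentially the same approach as the paper's proof: both argue by induction over the gates of $\tilde{C}$, use \Cref{rem:genuinenesscheck} to ensure each $\controlled{\tilde{O}_c}$ fires only when $s_k \in \{0^{2p(n)}, B(\lambda_c(B^{\mathsf{inv}}(s_j)))\}$, and observe that the remaining gate types either do not touch the address registers or merely permute them. Your version is somewhat more explicit about the gadget translation and the ancilla's compute/uncompute lifecycle, but the underlying argument is the same.
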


\begin{proof}
Recall from \Cref{rem:genuinenesscheck} that the given genuine circuit $C$ has built-in gadgets described by \Cref{fig:genuinitycheck} that verify the condition asserted in \cref{itm:genuineoracle} of \Cref{def:genuinecircuit}. Since we construct $\tilde{C}$ from $C$ by a gate-by-gate process, we apply an oracle gate $\tilde{O}_c$ if the target register in the workspace of the gates $\tilde{\mathcal{Z}}$ and $\tilde{\mathcal{E}}$ just before the gate $\controlled{\tilde{O}_c}$ (as in \Cref{fig:genuinitycheck}) is in the state $\ket{1}$. Therefore, by the definition of the $\tilde{\mathcal{Z}}$ and $\tilde{\mathcal{E}}$ gates, the oracle gate $\controlled{\tilde{O}_c}$ acting on address registers storing $\ket{s_j}$ and $\ket{s_k}$ is only applied when $s_k = 0^{2p(n)}$ or $s_k = B(\lambda_c(B^{\mathsf{inv}}(s_j))$ (and the control qubit of $\controlled{\tilde{O}_c}$ in the address workspace is in the state $\ket{1}$). Notice that all the other gates in \Cref{def:transcript} either do not alter the address registers at all or shuffle their positions without changing the address strings stored. It follows that no gate from \Cref{def:transcript} will introduce address labels that are not in the range of the $B$ mapping defined in \Cref{def:Bmapping}. Since $\ket{\phi_{\mathrm{initial}}}$ does not store any address labels not in the range of $B$, applying the circuit $\tilde{C}$ to $\ket{\phi_{\mathrm{initial}}}$ will not generate address labels outside the range of $B$.
\end{proof}

We can use a similar approach to efficiently modify a given quantum query algorithm to ensure that its state is always rooted. This modification will be useful for the analysis in \Cref{subsec:goodbadandugly,subsec:analysisongoodpart}. 

In particular, we claim that given a genuine, rooted algorithm associated with a circuit $C$, one can efficiently construct a modified genuine, rooted algorithm associated with a circuit $C'$, with each gate $G$ in $C$ replaced by a sequence of gates in $C'$, ensuring that $G$ is only applied if the resulting state after applying $G$ would have been rooted, with a polynomial overhead in circuit size and no impact on the resulting state. Recall that (controlled) oracle gates are the only genuine gates that can alter the contents of vertex registers. This means that we only need to replace $G$ with this sequence of gates if $G$ is an oracle gate. 
Moreover, by \Cref{def:rootedstate}, to verify that a given state is rooted, we only need to check that the vertex registers not storing $0^{2n}$ and $\noedge$ labels form a connected subgraph of $\mathcal{G}$ containing the $\entrance$.
We now describe a rooted algorithm that accomplishes this task and argue how it can be implemented by a genuine circuit.

First, note that at the beginning of the circuit $C$, one can copy the label of the $\entrance$ (which is stored in the first register of $\ket{\psi_{\mathrm{initial}}}$) to an ancilla vertex register that is not meant to be used by any of the gates that follow. This step can be implemented by a genuine algorithm by querying a valid neighbor of the $\entrance$ and then computing the $\entrance$ in this special ancilla vertex register before uncomputing this neighbor of the $\entrance$. Similarly, we can uncompute the contents of the special ancilla vertex register at the end of our algorithm. Thus, we can make sure that the $\entrance$ label is always stored in a vertex register of any computational basis state in the support of our state at any step.

Given a computational basis state consisting of $p(n)$ vertex registers, by a standard breadth-first search procedure starting at the $\entrance$, one can check whether the subgraph $G$ of $\mathcal{G}$ induced by the vertex labels stored in these registers is connected. At each step of this breadth-first search, we determine which vertex registers store the neighbors of a particular vertex $v$. This can be done by looping over each vertex register, checking whether it stores a neighbor of $v$ and storing the outcome in a workspace register. Therefore, the task of checking the connectivity of $G$ is reduced to the task of checking whether two input vertex registers store labels of vertices that are neighbors in $\mathcal{G}$ and storing this in a workspace register. 

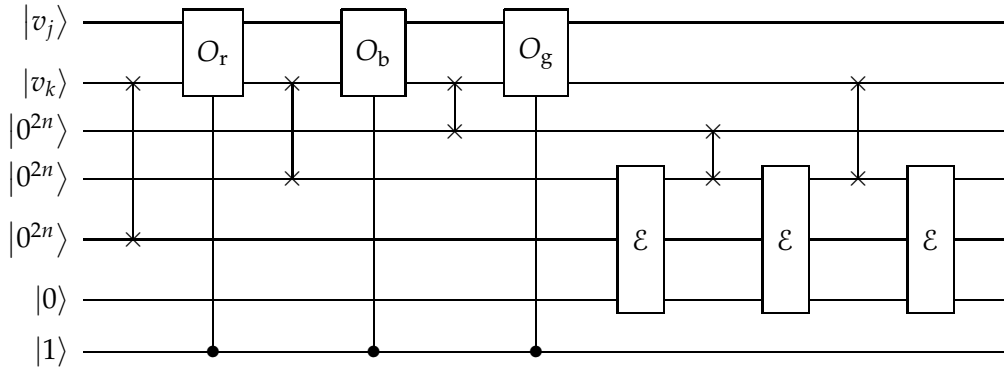
\begin{figure}
    \centering
    \[
    \Qcircuit @C=1.7em @R=1.2em {
    & \lstick{\ket{v_j}} & \qw & \multigate{1}{O_{\mathrm{r}}} & \qw & \multigate{1}{O_{\mathrm{b}}} & \qw & \multigate{1}{O_{\mathrm{g}}} & \qw & \qw & \qw & \qw & \qw & \qw \\
    & \lstick{\ket{v_k}} & \qswap \qwx[3] & \ghost{O_{\mathrm{r}}} & \qswap \qwx[2] & \ghost{O_{\mathrm{b}}} & \qswap \qwx[1] & \ghost{O_{\mathrm{g}}} & \qw & \qw & \qw & \qswap \qwx[2] & \qw & \qw \\
    & \lstick{\ket{0^{2n}}} & \qw & \qw & \qw & \qw & \qswap \qw & \qw & \qw & \qswap \qwx[1] & \qw & \qw & \qw & \qw \\
    & \lstick{\ket{0^{2n}}} & \qw & \qw & \qswap \qw & \qw & \qw & \qw & \multigate{2}{\mathcal{E}} & \qswap \qw & \multigate{2}{\mathcal{E}} & \qswap \qw & \multigate{2}{\mathcal{E}} & \qw \\
    & \lstick{\ket{0^{2n}}} & \qswap \qw & \qw & \qw & \qw & \qw & \qw & \ghost{\mathcal{E}} & \qw & \ghost{\mathcal{E}} & \qw & \ghost{\mathcal{E}} & \qw \\
    & \lstick{\ket{0}} & \qw & \qw & \qw & \qw & \qw & \qw & \ghost{\mathcal{E}} & \qw & \ghost{\mathcal{E}} & \qw & \ghost{\mathcal{E}} & \qw \\
    & \lstick{\ket{1}} & \qw & \ctrl{-5} & \qw & \ctrl{-5} & \qw & \ctrl{-5} & \qw & \qw & \qw & \qw & \qw & \qw
    }
    \]
    \caption{Circuit diagram for computing in an ancilla workspace register whether $v_j = \eta_c(v_k)$ for some $c \in \mathcal{C}$. The top five registers are vertex registers and the bottom two registers are workspace registers. For compactness, we write $O_{\mathrm{r}}$, $O_{\mathrm{b}}$, and $O_{\mathrm{g}}$ for $O_{\mathrm{red}}$, $O_{\mathrm{blue}}$, and $O_{\mathrm{green}}$, respectively.}
    \label{fig:neighborcheck}
\end{figure}

We show a genuine circuit for this procedure in \Cref{fig:neighborcheck}. The swap gates and the workspace register initialized to $\ket{1}$ have the same roles as in \Cref{fig:genuinitycheck}. Given vertex labels $v_j$ and $v_k$, we first compute each of the 3 neighbors of $v_j$ in 3 different ancilla vertex registers. Then, we check whether any of these vertices are equal to $v_k$ using equality check gates $\mathcal{E}$, and compute the output in an ancilla workspace regsiter. By the end of this circuit, this workspace register will store $1$ if and only if $v_j$ and $v_k$ are neighbors. Once we have used this workspace qubit to apply a controlled oracle gate, we uncompute the contents of this qubit by applying the circuit in \Cref{fig:neighborcheck} backwards.

Now that we have outlined a procedure for checking rootedness of a given state, we describe a procedure for implementing each controlled oracle query gate $O_c$ while maintaining rootedness.
We first compute the output of $O_c$ in an ancilla vertex register. Then we determine whether $O_c$ is meant to uncompute the contents of a particular vertex register.
We check this using an equality check gate applied on the ancilla vertex register and the target register of $O_c$. If $O_c$ is not meant to uncompute a vertex register, then applying it cannot result in a non-rooted state. If $O_c$ performs an uncomputation, we check rootedness of the collection of vertex registers not including the target register of $O_c$. We then apply $O_c$ controlled on the output of this check and uncompute all the information we computed in ancilla vertex and workspace registers. This sequence of gates never results in a non-rooted state and does not alter the output state of a genuine, rooted algorithm.
Thus we have the following.

\begin{remark}[Checking rootedness] \label{rem:rootednesscheck}
Given any genuine circuit $C$ with $|C|$ gates, one can efficiently construct a rooted genuine circuit $C'$ consisting of $\poly(|C|)$ gates such that $C'$ has the same functionality as $C$ and before applying each gate $G$, $C'$ checks whether the resulting state will remain rooted after the application of $G$. Therefore, we assume without loss of generality that the given rooted genuine circuit $C$ has built-in rootedness check gadgets.
\end{remark}

Analogous to the notion of a rooted state defined in \Cref{def:rootedstate}, we define the notion of an address-rooted state as follows. Informally, a state in the address space is address rooted if, when it contains a string that encodes an address, it also contains the string that encodes its parent in $\mathcal{T}$.

\begin{definition}[Address-rooted state] \label{def:addressrooted}
We say that a computational basis state $\ket{\phi}$ in the address space is \emph{address rooted} if for any string $s$ stored in any of the registers of $\ket{\phi}$, whenever the vertex $B^{\mathsf{inv}}(s) \in \mathcal{V}_{\mathcal{T}}$ has a parent $t \neq \zeros$, there exists a register of $\ket{\phi}$ that stores the string $B(t)$. 
\end{definition}

Now we show, using \Cref{rem:rootednesscheck}, that the notion of address rooted for states in the address space is analogous to the notion of rooted for states in the vertex space. This result substantially simplifies the analysis in \Cref{sec:simulation}.  

\begin{lemma} \label{lem:transcriptstateisalwaysrooted}
For any given rooted genuine circuit $C$, any computational basis state in the support of the address space of the state $\tilde{C}\ket{\phi_{\mathrm{initial}}}$ is address rooted.
\end{lemma}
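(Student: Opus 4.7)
The plan is to proceed by induction on the number of gates of $\tilde{C}$ applied to $\ket{\phi_{\mathrm{initial}}}$, showing that address-rootedness is preserved by every gate. For the base case, observe that $\ket{\phi_{\mathrm{initial}}}$ stores only $\emptystring = B(\emptyaddress)$ and copies of $0^{2p(n)} = B(\zeros)$ in its address registers. Since $\emptyaddress$ is the root of $\mathcal{T}$ and $\zeros$ is a special vertex, neither triggers a parent requirement in \Cref{def:addressrooted}, so $\ket{\phi_{\mathrm{initial}}}$ is trivially address rooted.

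For the inductive step, I would examine each gate type from \Cref{def:transcript} separately. The gates $\tilde{\mathcal{E}}$, $\tilde{\mathcal{N}}$, $\tilde{\mathcal{Z}}$, and arbitrary address-workspace unitaries do not touch any address register, and $e^{i\theta\tilde{T}}$ merely exchanges the contents of two address registers; in all of these cases the multiset of stored address strings is unchanged, so address-rootedness is automatic. The only non-trivial case is a controlled oracle gate $\controlled{\tilde{\Oracle}_c}$ applied to address registers $j,k$ with contents $s_j,s_k$. By the gate-by-gate construction of $\tilde{C}$, the translated genuineness-check gadget of \Cref{rem:genuinenesscheck} together with \Cref{lem:transcriptstateisintherangeofB} guarantees that $\tilde{\Oracle}_c$ fires only when $s_k \in \{0^{2p(n)},\, B(\lambda_c(B^{\mathsf{inv}}(s_j)))\}$.

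In the computing case $s_k = 0^{2p(n)}$, the new content of register $k$ encodes the address $\lambda_c(B^{\mathsf{inv}}(s_j))$. Its parent in $\mathcal{T}$ is either $B^{\mathsf{inv}}(s_j)$ itself, when $c$ corresponds to a downward edge, in which case the parent is already stored in register $j$; or the grandparent of $B^{\mathsf{inv}}(s_j)$, when $c$ is the edge leading toward the root, in which case the parent is already stored by the inductive hypothesis applied to $s_j$. Either way the new address has its parent present, and address-rootedness for all other stored addresses is unaffected. In the uncomputing case $s_k = B(\lambda_c(B^{\mathsf{inv}}(s_j)))$, a copy of this address is removed from register $k$. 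Here the rootedness-check gadget inherited from $C$ via \Cref{rem:rootednesscheck}, which in $\tilde{C}$ uses $\tilde{\Oracle}_c$ queries in place of $O_c$ queries, permits the oracle to fire only if the remaining set of stored address strings still forms a subgraph of $\mathcal{T}$ connected to $\emptystring$. Because $\mathcal{T}$ contains no non-trivial cycles, this connectivity condition is exactly address-rootedness, so no register is left with its parent missing.

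The main subtlety I expect to confront is justifying precisely that the rootedness-check gadget, when translated gate-by-gate into the address space, enforces the connectivity condition in $\mathcal{T}$ rather than some residue of the $\mathcal{G}$-level condition. This should follow cleanly from the fact that the gadget's only graph-dependent subroutines are oracle queries, which under the translation in \Cref{def:transcript} become $\tilde{\Oracle}_c$ calls that probe adjacency in $\mathcal{T}$; once that is granted, the equivalence between "connected subgraph of $\mathcal{T}$ containing $\emptystring$" and "address rooted" closes the induction.
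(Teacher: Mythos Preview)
Your proposal is correct and follows essentially the same approach as the paper: induct over gates, dispose of non-oracle gates trivially, and for $\controlled{\tilde{\Oracle}_c}$ split into the computing case (the new address's parent is already present via $s_j$ and address-rootedness) and the uncomputing case (the translated rootedness-check gadget from \Cref{rem:rootednesscheck} enforces connectivity to $\emptystring$ in $\mathcal{T}$, which equals address-rootedness since $\mathcal{T}$ has no non-trivial cycles). The subtlety you flag---that the gadget's oracle calls become $\tilde{\Oracle}_c$ calls probing $\mathcal{T}$-adjacency rather than $\mathcal{G}$-adjacency---is exactly the point the paper makes explicit, so your plan lines up with the published proof.
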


\begin{proof}
Recall from \Cref{rem:rootednesscheck} that we assumed the given circuit has built-in rootedness check gadgets that only apply a gate $G$ if the resulting state is guaranteed to be rooted. 
Since $\ket{\phi_{\mathrm{initial}}}$ stores the $\emptystring$ label (instead of the $\entrance$ label stored in $\ket{\psi_{\mathrm{initial}}}$), the first step of the procedure outlined above \Cref{rem:rootednesscheck} ensures that $\emptyaddress$ is always stored in some register of every computational basis state at every step of our transcript state. 
Note that the analog of the circuit in \Cref{fig:neighborcheck} in the address space (constructed as per \Cref{def:transcript}) checks whether the addresses corresponding to two strings are neighbors in $\mathcal{T}$. Thus, the analog in the address space of the above procedure to check rootedness of a given state checks whether a given computational basis state in the address space corresponds to a subtree of $\mathcal{T}$ containing the $\emptystring$ in one of its registers. Since $\mathcal{T}$ is a tree, this check is equivalent to checking whether the given state is address rooted by \Cref{def:addressrooted}. 

As per the above procedure, before applying a $\controlled{\tilde{O_c}}$ gate on a computational basis state $\ket{\phi}$ controlled on a register storing $s_j$ with the target register storing $s_k$, we first check whether it is an uncomputation (i.e., $s_k = B\lambda_cB^{\mathrm{inv}}(s_j)$) via the $\tilde{\mathcal{E}}$ gate.
If this check passes, we apply the $\controlled{\tilde{O_c}}$ gate only after verifying that the resulting state (i.e., $\ket{\phi}$ with $s_k$ replaced with $0^{2p(n)}$) would be address rooted, analogous to the procedure preceding \Cref{rem:rootednesscheck}. Thus, the address rootedness of our state is ensured in this case. 
If this check fails, then we know, by the proof of \Cref{lem:transcriptstateisintherangeofB}, that $s_k = 0^{2p(n)}$. In this case, the resulting state (i.e. $\ket{\phi}$ with $s_k$ replaced with $B\lambda_cB^{\mathrm{inv}}(s_j)$) corresponds to a connected sub-tree of $\mathcal{T}$ containing the $\emptystring$ (as $\ket{\phi}$ is a connected sub-tree of $\mathcal{T}$ containing the $\emptystring$ and $s_j$ is stored in some register of $\ket{\phi}$) so will be address rooted.
\end{proof}

\subsection{Mapping addresses to vertices} \label{subsec:L'}

We now define an efficiently computable function $L'$ that maps an address $t$ to a corresponding vertex label $v$, and observe some relationships of addresses and the vertices they map to under $L'$. For $t \in \{\zeros,\invalidaddress\}$, this function simply outputs the corresponding vertex label. Otherwise, the image of $t$ under $L'$ is obtained by performing a sequence of oracle calls to determine the vertices reached by following edges of the colors specified by $t$, and outputting the vertex label reached at the end of that sequence. More precisely, $L'(t)$ is computed as follows.

\smallskip

\begin{algorithm}[H]\label{alg:lprime}
    \SetAlgoLined
    \KwInput{An address $t$}
    \KwOutput{A vertex label $v$}
    \If{$t=\zeros$}{
        \textbf{return} $0^{2n}$
    }
    \If{$t=\invalidaddress$}{
        \textbf{return} $\invalid$
    }
    $v\leftarrow \entrance$\;
    \For{$i = 1 \ldots |t|$}{
  	   $v\leftarrow \eta_{t[i]}(v)$\;
    }
    \textbf{return} $v$\;
  \caption{Classical query algorithm for computing $L'(t)$}
\end{algorithm}

\smallskip

Here $|t|$ denotes the length of the address $t$ (the number of colors in its color sequence) and $t[i]$ denotes the $i$th color. We now consider immediate implications of the definition of $L'$ in \Cref{alg:lprime}. We begin with the following lemma, which states that any address of $\entrance$ that is not the $\emptyaddress$ encodes a cycle in $\mathcal{G}$.

\begin{lemma} \label{lem:entrancecycle}
Let $t \neq \emptyaddress$ be such that $L'(t) = \entrance$. Then traversing the edge colors listed in $t$ beginning from the $\entrance$ yields a cycle in $\mathcal{G}$. 
\end{lemma}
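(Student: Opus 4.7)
The plan is to unpack \Cref{alg:lprime} and show that the hypothesis $L'(t) = \entrance$ forces the traversal to remain inside $\mathcal{G}$ at every step, hence to form a non-trivial closed walk based at the $\entrance$. Write $t = (c_1, \ldots, c_k)$ with $k \geq 1$ (which is possible since $t \neq \emptyaddress$, and since $L'(\zeros) = 0^{2n}$ and $L'(\invalidaddress) = \invalid$ rule out those two special values), and set $v_0 \defeq \entrance$ and $v_i \defeq \eta_{c_i}(v_{i-1})$ for $1 \leq i \leq k$, matching the loop variable of \Cref{alg:lprime}; the hypothesis then reads $v_k = \entrance$.

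The key step is an induction on $i$ showing $v_i \in \valid$ for all $i \in \{0,1,\ldots,k\}$. Indeed, if some $v_j \in \{\noedge, \invalid\}$ (i.e., $v_j \notin \valid$), then by \Cref{def:eta_c} applied to an input outside $\valid$ we would have $v_{j+1} = \invalid$, and inductively $v_k = \invalid \neq \entrance$, contradicting the hypothesis. Consequently, every call $\eta_{c_i}(v_{i-1})$ falls into the first branch of \Cref{def:eta_c}, so $v_i = N_{c_i}(v_{i-1})$; i.e., $v_{i-1}$ and $v_i$ are adjacent in $\mathcal{G}$ via an edge of color $c_i$. Combined with $v_0 = v_k = \entrance$ and $k \geq 1$, this exhibits $v_0, v_1, \ldots, v_k$ as a closed walk of positive length in $\mathcal{G}$ starting and ending at the $\entrance$, which is the cycle asserted by the lemma.

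There is no serious obstacle here; the lemma is essentially a bookkeeping statement. The only point requiring any care is that $\eta_c$ is absorbing on non-$\valid$ inputs (every string outside $\valid$ is mapped to $\invalid$), which is precisely what makes the induction go through and rules out any intermediate excursion from $\valid$.
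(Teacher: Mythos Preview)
Your argument that every intermediate $v_i$ lies in $\valid$ is correct and is a detail the paper leaves implicit (the absorbing behaviour of $\eta_c$ on inputs outside $\valid$ is exactly the right observation). However, the conclusion you draw from it is too weak: a closed walk of positive length is not automatically a cycle. For instance, if $t=(c,c)$ with $c\neq\cbad$, then $L'(t)=\eta_c(\eta_c(\entrance))=\entrance$, and your argument produces the walk $\entrance,\ \eta_c(\entrance),\ \entrance$, which is pure backtracking and contains no cycle at all.

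The paper closes this gap by invoking the standing convention (recorded in the bullet points after \Cref{def:addresstree}) that the addresses under consideration are vertex labels of the address tree $\mathcal{T}$, and hence contain no even-length palindrome; in particular no two consecutive colors in $t$ coincide. This means the walk $v_0,\dots,v_k$ never takes the same edge twice in a row, so it is a non-backtracking closed walk of positive length, which does yield a cycle. You need to add this step (and in passing also rule out $t=\noedgeaddress$, which you omitted alongside $\zeros$ and $\invalidaddress$).
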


\begin{proof}
Since $L'(t) = \entrance$ and $t \ne \emptyaddress$, we know that $t \notin \specialaddresses$.
Therefore, $t$ can be written as a sequence of edge colors. As noted earlier, this sequence of colors does not contain any even-length palindrome. This means beginning at the $\entrance$ in $\mathcal{G}$ and following the edge colors listed in $t$ does not involve any backtracking. Moreover, as $L'(t) = \entrance$, traversing this sequence of colors results in reaching the $\entrance$. Therefore, this sequence in $\mathcal{G}$ starting and ending at $\entrance$ forms a cycle.
\end{proof}

We can generalize \Cref{lem:entrancecycle} to show that any two distinct addresses of any vertex label in $\valid \cup \specialvertices$ together encode the address of the $\exit$ or a cycle in $\mathcal{G}$.

\begin{lemma} \label{lem:addresstreecycle}
Let $t$ and $t'$ be addresses with $t \neq t'$ and $L'(t) = L'(t')$. If $L'(t) = L'(t') \not \in \specialvertices$, then beginning from the $\entrance$ in $\mathcal{G}$ and following the edge colors listed in $t$ in order and then following the edge colors listed in $t'$ in reverse order forms a path that contains a non-trivial cycle in $\mathcal{G}$. Otherwise, there is a $\tau \in \{t, t'\}$ such that beginning from the $\entrance$ in $\mathcal{G}$ and following the edge colors listed in $\tau$ will result in either reaching the $\exit$ or forming a path that contains a cycle in $\mathcal{G}$.
\end{lemma}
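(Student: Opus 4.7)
The plan is to split on whether $v := L'(t) = L'(t')$ lies in $\specialvertices$.

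When $v \in \specialvertices$, I would work through the five subcases. If $v = 0^{2n}$, \Cref{alg:lprime} forces $t = t' = \zeros$, contradicting $t \neq t'$, so this subcase is vacuous. If $v = \exit$, either choice of $\tau$ reaches $\exit$ directly. If $v = \entrance$, then at least one of $t, t'$ differs from $\emptyaddress$, and \Cref{lem:entrancecycle} applied to that address produces a cycle. If $v \in \{\noedge, \invalid\}$, then along at least one of the addresses the current vertex must at some intermediate step fail to have an edge of the requested color, and this can only happen at $\entrance$ (missing $\cbad$) or $\exit$ (missing its own color). If the failing vertex is $\exit$, the traversal has already reached $\exit$; if it is $\entrance$, the corresponding nonempty prefix maps to $\entrance$ under $L'$, and \Cref{lem:entrancecycle} applied to that prefix yields a cycle. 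The one edge case is when one of $t, t'$ is exactly $(\cbad)$, which dies in a single step producing no cycle, but then the other address must have length $\ge 2$, and the above argument applies to it.

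When $v \notin \specialvertices$, both traversals remain inside $\mathcal{G}$ throughout and end at $v$. The plan is to consider the closed walk $W := t \cdot \overline{t'}$ from $\entrance$ to $\entrance$, where $\overline{t'}$ reverses the color sequence of $t'$. Since $t$ and $t'$ are labels of vertices of the address tree $\mathcal{T}$, they contain no even-length palindrome, and in particular no $cc$ substring, so each is a reduced word. Iteratively canceling consecutive repeated colors ($cc \to \varepsilon$) in $W$ preserves the endpoints (each cancellation just drops a back-and-forth along one edge) and yields a reduced walk $\tilde W$ whose vertex and edge set is a subset of that of $W$.

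The key subclaim is that $\tilde W$ is nonempty. Because both $t$ and $\overline{t'}$ are individually reduced, any initial cancellation must straddle the boundary between them, forcing the last colors of $t$ and $t'$ to match; induction on $|t| + |t'|$ then shows that $\tilde W = \varepsilon$ would imply $t = t'$, contradicting the hypothesis. Hence $\tilde W$ is a nonempty non-backtracking closed walk from $\entrance$ to itself in $\mathcal{G}$, and the earliest revisited vertex along $\tilde W$ extracts a simple cycle inside it. Since $\tilde W$'s vertices and edges lie inside those of $W$, the original walk $W$ contains that cycle, completing the proof. I expect the main obstacle to be formalizing the reduction subclaim: although ``$t \cdot \overline{t'}$ reduces to $\varepsilon$ iff $t = t'$'' is a standard fact about reduced words in the free product $\ZZ/2 * \ZZ/2 * \ZZ/2$ (colors being self-inverse), one must carefully verify that each inductive step preserves reducedness of the remaining tails and distinctness of the remaining prefixes after a boundary cancellation.
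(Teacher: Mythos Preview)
Your proposal is correct and follows essentially the same approach as the paper: both do the same case analysis on $\specialvertices$, and for the generic case both concatenate $t$ with the reverse of $t'$ and argue that this closed walk does not fully cancel because $t \neq t'$. The only cosmetic difference is that you phrase the non-cancellation via word reduction in $\ZZ/2 * \ZZ/2 * \ZZ/2$ and then explicitly extract a simple cycle from the reduced non-backtracking walk, whereas the paper phrases it as ``following $\tau$ in $\mathcal{T}$ does not return to $\emptyaddress$'' and then invokes \Cref{lem:entrancecycle} directly.
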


\begin{proof}
Let $v = L'(t) = L'(t')$. We consider six cases:
\begin{enumerate}
\item $v = 0^{2n}$. By \Cref{alg:lprime}, $\tau = \zeros$ is the only address $\tau$ such that $L'(\tau) = 0^{2n}$. Therefore, $t = t' = \zeros$, so this case is not possible.

\item $v = \entrance$. Since $t \neq t'$, either $t \neq \emptyaddress$ or $t' \neq \emptyaddress$. In either case, the result follows from \Cref{lem:entrancecycle}.

\item $v = \exit$. Then for both $\tau \in \{t, t'\}$, beginning from the $\entrance$ in $\mathcal{G}$ and following the edge colors listed in $\tau$ will result in reaching the $\exit$.

\item \label[case]{case:noedgecase} $v = \noedge$. Since $t \neq t'$, either $t \neq \noedgeaddress$ or $t' \neq \noedgeaddress$. Without loss of generality, let $t \neq \noedgeaddress$. This means that $t \notin \specialaddresses$, so it can be written as a non-empty sequence $(c_1, \ldots, c_{|t|})$ of colors. Let $\tau$ be the address specified by the color sequence $(c_1, \ldots, c_{|t|-1})$. By \Cref{alg:lprime}, $\eta_{c_{|t|}}(L'(\tau)) = L'(t) = v$. Since $v = \noedge$, we have $L'(\tau) \in \{\entrance, \exit\}$ by \Cref{def:eta_c}. If $L'(\tau) = \exit$, then following the edge colors in $\tau$, and hence in $t$, results in reaching the $\exit$, so we are done. It remains to consider $L'(\tau) = \entrance$. It must be that $\tau \neq \emptystring$; otherwise, $t$ would be $\noedgeaddress$. Thus, by \Cref{lem:entrancecycle}, following the edge colors in $\tau$, and hence in $t$, beginning from the $\entrance$ forms a cycle in $\mathcal{G}$.

\item $v = \invalid$. Since $t \neq t'$, either $t \neq \invalidaddress$ or $t' \neq \invalidaddress$. Without loss of generality, let $t \neq \invalidaddress$. Then $t \notin \specialaddresses$, so it can be written as a non-empty sequence $(c_1, \ldots, c_{|t|})$ of colors. Let $\tau$ be the address specified by the color sequence $(c_1, \ldots, c_{|t|-1})$. By \Cref{alg:lprime}, $\eta_{c_{|t|}}(L'(\tau)) = L'(t) = v$. Since $v = \invalid$, we have $L'(\tau) \in \{0^{2n}, \noedge, \invalid\}$ by \Cref{def:eta_c}. By \Cref{alg:lprime}, $L'(\tau) = 0^{2n}$ only when $\tau = \zeros$. But we know that $\tau \neq \zeros$, so we cannot have $L'(\tau) = 0^{2n}$. If $L'(\tau) = \noedge$, then the desired result follows from \cref{case:noedgecase}. If $L'(\tau) = \invalid$, then we let $t = \tau$ and can apply the same argument recursively. We conclude that following the edge colors in $\tau$, and hence in $t$, beginning from the $\entrance$ forms a non-trivial cycle in $\mathcal{G}$.

\item $v \not \in \specialvertices$. This means that $t, t' \notin \specialaddresses$, so we can write $t = (c_1, \ldots, c_{|t|})$ and $t' = (c'_1, \ldots, c'_{|t'|})$ as non-empty sequences of colors. Let $\tau$ be the address specified by the color sequence $(c_1, \ldots, c_{|t|}, c'_{|t'|}, \ldots c'_1)$ formed by concatenating the sequence $t$ with the sequence $t'$ in reverse order. Since $v$ is a valid vertex of $\mathcal{G}$, following this sequence in $\mathcal{G}$ beginning with $\entrance$ will result in reaching the $\entrance$ (via $v$). That is, $L'(\tau) = \entrance$. Moreover, as $t$ and $t'$ are vertex labels in the address tree $\mathcal{T}$, following the sequence given by $\tau$ in $\mathcal{T}$ beginning with the vertex labeled $\emptyaddress$ will not result in $\emptyaddress$; otherwise, $t = t'$. Our desired result follows by \Cref{lem:entrancecycle}.
\end{enumerate}
Since these cases cover all possible $v$, the result follows.
\end{proof}

The following lemma is critical for the proof of many results that lead up to \Cref{lem:psigood=L(phigood)}. Informally, it states if an address $t$ does not encode an $\entrance$--$\exit$ path or a cycle in $\mathcal{G}$, then the $c$-neighbor of the vertex corresponding to $t$ in $\mathcal{G}$ is the same as the vertex corresponding to the $c$-neighbor of $t$ in $\mathcal{T}$.  

\begin{lemma} \label{lem:L'lambda=etaL'}
Let $v$ be any vertex label, let $t$ be an address of $v$, and let $c \in \mathcal{C}$. Furthermore, if $t \notin \specialaddresses$, suppose that following the edge colors given by $t$ starting at the $\entrance$ does not result in reaching the $\exit$ or finding a cycle in $\mathcal{G}$. Then $L'(\lambda_c(t)) = \eta_c(v)$.
\end{lemma}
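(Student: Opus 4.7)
The plan is to prove the lemma by exhaustive case analysis on the form of the address $t$. The first cluster of cases handles $t \in \specialaddresses$. For $t \in \{\zeros, \noedgeaddress, \invalidaddress\}$, the construction of the address tree in \Cref{def:addresstree} makes $\lambda_c(t) = \invalidaddress$ for every $c \in \mathcal{C}$, since these vertices either have all color edges directed at $\invalidaddress$ (namely $\zeros$ and $\noedgeaddress$) or self-loop at $\invalidaddress$. Hence $L'(\lambda_c(t)) = \invalid$. On the other side, $v = L'(t) \in \{0^{2n}, \noedge, \invalid\}$ lies in $\{0,1\}^{2n} \setminus \valid$, so $\eta_c(v) = \invalid$ by \Cref{def:eta_c}, matching. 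The remaining special case is $t = \emptyaddress$, which gives $v = \entrance$; if $c = \cbad$ then the edge of color $\cbad$ from $\emptyaddress$ in $\mathcal{T}$ points to $\noedgeaddress$, matching $\eta_{\cbad}(\entrance) = \noedge$, and if $c \neq \cbad$ then $\lambda_c(\emptyaddress) = (c)$ and \Cref{alg:lprime} computes $L'((c)) = \eta_c(\entrance)$ in a single oracle call.

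Next I turn to $t \notin \specialaddresses$. I first need to verify that $v = L'(t)$ is a genuine interior vertex of $\mathcal{G}$, i.e., that $v \in \valid \setminus \{\entrance, \exit\}$. The hypothesis rules out reaching the $\exit$ at any intermediate step of \Cref{alg:lprime}, which in particular prevents $\eta_{t[i]}(v)$ from ever returning $\noedge$ or $\invalid$, and $0^{2n}$ never appears as an oracle output; thus $v \in \valid$. If in addition $v = \entrance$, then \Cref{lem:entrancecycle} forces the traversal of $t$ to trace a non-trivial cycle in $\mathcal{G}$, again contradicting the hypothesis. Therefore $v$ has edges of all three colors and $\eta_c(v) \in \valid$ for every $c$.

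With $v$ so controlled, I split on whether the last color of $t$ equals $c$. Writing $t = (c_1, \ldots, c_k)$, the definition of $\mathcal{T}$ gives $\lambda_c(t) = (c_1, \ldots, c_{k-1})$ if $c_k = c$, and $\lambda_c(t) = (c_1, \ldots, c_k, c)$ if $c_k \neq c$. In the first sub-case, let $v' = L'((c_1, \ldots, c_{k-1}))$; \Cref{alg:lprime} yields $v = \eta_c(v')$, and since $\eta_c$ is an involution on vertices incident to a color-$c$ edge we get $\eta_c(v) = v' = L'(\lambda_c(t))$. In the second sub-case, applying \Cref{alg:lprime} directly to the concatenation $(c_1, \ldots, c_k, c)$ gives $L'(\lambda_c(t)) = \eta_c(L'(t)) = \eta_c(v)$.

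The main subtlety is the book-keeping in the second cluster of cases, where one must ensure that $v$ is neither a special string nor the $\entrance$ or $\exit$ before invoking the involutivity of $\eta_c$ or the inductive structure of \Cref{alg:lprime}. Once that point is nailed down, the desired identity $L'(\lambda_c(t)) = \eta_c(v)$ reduces to a direct application of the definitions of $\lambda_c$ (via \Cref{def:addresstree}) and $\eta_c$ (via \Cref{def:eta_c}).
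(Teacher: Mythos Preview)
Your argument follows essentially the same case analysis as the paper's proof: special addresses handled directly, then for $t \notin \specialaddresses$ a split on whether $c$ equals the last color of $t$, using the involutivity of $\eta_c$ at degree-$3$ vertices. There is one small gap in your justification that $v \in \valid$: you write that ruling out the $\exit$ at intermediate steps ``in particular prevents $\eta_{t[i]}(\cdot)$ from ever returning $\noedge$,'' but that inference also requires that the $\entrance$ is not revisited at any intermediate step (otherwise following the missing color $\cbad$ there would yield $\noedge$). This is easily patched by the same device you use for the final vertex---apply \Cref{lem:entrancecycle} to the prefix of $t$ ending at the intermediate step, contradicting the no-cycle hypothesis. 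The paper avoids tracking intermediate steps by invoking the stronger \Cref{lem:addresstreecycle} to conclude $v \notin \specialvertices$ in one stroke; your more hands-on route via \Cref{lem:entrancecycle} is equally valid once the intermediate-step issue is addressed.
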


\begin{proof}
As $t$ is the address of $v$, we know that $L'(t) = v$ by \Cref{def:addresses} and \Cref{alg:lprime}. It remains to show that $L'(\lambda_c(t)) = \eta_c(L'(t))$.

First, suppose $t \in \specialaddresses$. Then we have four cases:
\begin{enumerate}
\item \label[case]{enum:tv=zeros} $t = \zeros$. Then
\begin{align}
    L'(\lambda_c(\zeros)) 
    &= L'(\invalidaddress) = \invalid \\
    &= \eta_c(0^{2n}) = \eta_cL'(\zeros)
\end{align}
where the first step follows from \Cref{def:addresstree}, the second and fourth steps follow from \Cref{alg:lprime}, and the third step follows from \Cref{def:eta_c}.

\item $t = \emptystring$. Then
\begin{align}
    L'(\lambda_c(\emptystring)) = L'((c)) = \eta_c(\entrance) = \eta_cL'(\emptystring)
\end{align}
where the first step follows from \Cref{def:addresstree}, the second and fourth steps follow from \Cref{alg:lprime}, and the third step follows from \Cref{def:eta_c}.

\item $t = \noedgeaddress$. This case follows by an argument analogous to \cref{enum:tv=zeros}.

\item $t = \invalidaddress$. This case also follows by an argument analogous to \cref{enum:tv=zeros}.
\end{enumerate}

Therefore, assuming $t \in \specialaddresses$, $L'(\lambda_c(t)) = \invalid = \eta_c(L'(t))$.

Now, suppose $t \not \in \specialaddresses$. This means we can write $t$ as a sequence $(c_1, \ldots, c_{|t|})$ of colors. 

We claim that $v$ is a label of a degree-3 vertex of $\mathcal{G}$. For a contradiction, assume that $v \in \specialvertices$. In that case, \Cref{lem:addresstreecycle} implies that beginning from the $\entrance$ in $\mathcal{G}$ and following the edge colors listed in $t$ will result in either reaching the $\exit$ or forming a path that contains a cycle in $\mathcal{G}$, which directly contradicts the hypotheses of the lemma. 

Since $v$ is a degree-3 vertex of $\mathcal{G}$, we have $\eta_c(\eta_c(v))=v$ by \Cref{def:genuinecircuit}. Therefore,
\begin{align}
    L'(\lambda_c(t)) 
    &= L'(\lambda_c((c_1, \ldots, c_{|t|}))) \\
    &= 
    \begin{cases}
        L'((c_1, \ldots, c_{|t|-1})) & c = c_{|t|} \\
        L'((c_1, \ldots, c_{|t|}, c)) & \text{otherwise}
    \end{cases} \\
    &=
    \begin{cases}
        (\eta_{c_{|t|-1}} \circ \cdots \circ \eta_{c_1})(v) & c = c_{|t|} \\
        (\eta_c \circ \eta_{c_{|t|}} \circ \cdots \circ \eta_{c_1})(v) & \text{otherwise}
    \end{cases} \\
    &= (\eta_c \circ \eta_{c_{|t|}} \circ \cdots \circ \eta_{c_1})(v) \\
    &= \eta_c(L'((c_1, \ldots, c_{|t|})) \\
    &= \eta_c(L'(t))
\end{align}
where the second step follows from \Cref{def:addresstree}, the third and fifth from \Cref{alg:lprime}, and the fourth from an observation made above. 
\end{proof}

\subsection{The classical algorithm} \label{subsec:classicalalgorithm}

We now describe our classical algorithm (\Cref{alg:C(T)1}) for simulating genuine quantum algorithms. To state the algorithm, we introduce several definitions, beginning with a map based on the function $L'$ defined in \Cref{subsec:L'}.

\begin{definition} \label{def:DefL}
For any $m \in [p(n)]$, the mapping $L\colon \left(\{0,1\}^{2p(n)}\right)^{m} \to \left(\{0,1\}^{2n}\right)^{m}$ sends $m$ address strings to $m$ vertex labels by acting as $L'B^{\mathsf{inv}}$ on each of the $m$ registers.
\end{definition}

When considering the map $L$ applied to a quantum state $\ket{\chi}$ on both the workspace and the address space, we use the shorthand $L\ket{\chi}$ to denote the state $(I_{\mathrm{workspace}} \otimes L_{\mathrm{vertex}}) \ket{\chi}$, with the map acting as the identity on the workspace register and as $L$ on the vertex register. 

To describe and analyze \Cref{alg:C(T)1}, we consider individual gates and sequences of consecutive gates from the genuine circuit $C$ defined in \Cref{def:genuine}. For this purpose, we consider the following definition. 

\begin{definition} \label{def:DefinitionofC}
For any $i \in [p(n)]$, let $C_i$ denote the $i$th gate of the circuit $C$ in \Cref{def:genuine}. For any $i, j \in [p(n)] \cup \{0\}$ with $i < j$, Let $C_{i,j}$ be the subsequence of gates from the circuit $C$ starting with the $(i+1)$st gate and ending with the $j$th gate. That is, $C_{i,j} \defeq C_{j} \cdots C_{i+1}$. Similarly, using the circuit $\tilde{C}$ constructed in \Cref{def:transcript}, we define $\tilde{C}_i$ and $\tilde{C}_{i,j}$ for each $i, j \in [p(n)] \cup \{0\}$ with $i < j$. 
\end{definition}
Note that $C_{i,i} = I$ and $C_{i-1, i} = C_i$ (and similarly, $\tilde{C}_{i,i} = I$ and $\tilde{C}_{i-1, i} = \tilde{C}_i$) for all $i \in [p(n)]$. We use these gates to define transcript states and states of the quantum algorithm for partial executions.

\begin{definition} \label{def:phi(i)}
For each $i \in [p(n)] \cup \{0\}$, let
\begin{align}
	\ket{\phi_{\mathcal{A}}^{(i)}} \defeq \tilde{C}_{0,i} \ket{\phi_{\mathrm{initial}}}
\end{align}
be the transcript state for the quantum algorithm $\mathcal{A}$ restricted to the first $i$ gates of $\tilde{C}$.
Similarly, let
\begin{align}
	\ket{\psi_{\mathcal{A}}^{(i)}} \defeq C_{0,i} \ket{\psi_{\mathrm{initial}}}
\end{align}
denote the state of the quantum algorithm $\mathcal{A}$ restricted to the first $i$ gates of $C$.
\end{definition}

In particular, the state $\ket{\phi_\Al^{(p(n))}} = \ket{\phi_\Al}$ is the transcript state corresponding to the quantum state $\ket{\psi_\Al^{(p(n))}} = \ket{\psi_\Al}$ introduced in \Cref{def:genuine}.

Now consider the following classical query algorithm for finding a path from the $\entrance$ to the $\exit$.  

\smallskip

\begin{algorithm}[H]
    \caption{Classical query algorithm $\mathcal{C}(\mathcal{A}(\Oracle))$}
    \label{alg:C(T)1}
	\For{$i \in [p(n)]$}{
	    Given the circuit diagram $C_{0,i}$, compute the transcript state $\ket{\phi^{(i)}_{\mathcal{A}}}$ as per \Cref{def:transcript}.
	
	    Sample a computational basis state $\ket{\phi^{(i)}}$ in the address space at random with probability $\norm{\bra{\phi^{(i)}}\ket{\phi^{(i)}_{\mathcal{A}}}}^2$.

	    Compute the computational basis state $L\ket{\phi^{(i)}}$ in the vertex space.
	    
	    \textbf{Output} the labels of the vertices in $L\ket{\phi^{(i)}}$.
	    
	    }
\end{algorithm}

\smallskip

Note that when $\mathcal{A}$ is \genuine\ and \rooted, the output of \Cref{alg:C(T)1} must be a connected subgraph of $\mathcal{G}$ containing the $\entrance$. Therefore, if the output of \Cref{alg:C(T)1} contains the $\exit$, it must reveal an $\entrance$-to-$\exit$ path. In the remainder of \Cref{sec:simulation}, we show that the output of \Cref{alg:C(T)1} contains the $\exit$ (or a cycle) with exponentially small probability.

\subsection{The good, the bad, and the ugly} \label{subsec:goodbadandugly}

We now define states $\ket{\psi_\good^{(i)}}$, $\ket{\psi_\bad^{(i)}}$, and $\ket{\psi_\allbad^{(i)}}$, which are components of the state $\ket{\psi_\Al^{(i)}}$. Intuitively, $\ket{\psi_\good^{(i)}}$ represents the portion of the state of the algorithm after $i$ steps that has never encountered the $\exit$ or a near-cycle (i.e., a subgraph that differs from a cycle by a single edge) at any point in its history, $\ket{\psi_\bad^{(i)}}$ represents the portion of the state of the algorithm after $i$ steps that just encountered the $\exit$ or a near-cycle at the $i$th step, and $\ket{\psi_\allbad^{(i)}}$ combines the portions of the state of the algorithm after $i$ steps that encountered the $\exit$ or a near-cycle at some point in its history. To formally define these states, we introduce the notion of $\mathrm{good}$ and $\mathrm{bad}$ states, which we define as follows.

\begin{definition} \label{def:phi-badandphi-good}
We say that a computational basis state $\ket{\phi}$ in the address space is $\phi$-$\mathrm{bad}$ if the subgraph corresponding to $L\ket{\phi}$ contains the $\exit$ or is at most one edge away from containing a cycle. A computational basis state $\ket{\phi}$ in the address space is $\phi$-$\mathrm{good}$ if it is not $\phi$-$\mathrm{bad}$, i.e., if $L\ket{\phi}$ does not contain the $\exit$ and is more than one edge away from containing a cycle.

Similarly, a computational basis state $\ket{\psi}$ in the vertex space is $\psi$-$\mathrm{bad}$ if the subgraph corresponding to $\ket{\psi}$ contains the $\exit$ or is at most one edge away from containing a cycle and is $\psi$-$\mathrm{good}$ if it is not $\psi$-$\mathrm{bad}$.
\end{definition}

Note that the map $L$ is used to define the notion of good and bad states in the address space, but is not used for the corresponding notions in the vertex space.

The $\mathrm{good}$ and $\mathrm{bad}$ states span the $\mathrm{good}$ and $\mathrm{bad}$ subspaces, respectively.

\begin{definition} \label{def:piphi}
We define the $\phi$-$\BAD$ subspace as
\begin{equation}
    \phi\text{-}\BAD \defeq \spn\left\{\ket{\phi}: \ket{\phi} \text{ is a $\phi$-$\mathrm{bad}$ state}\right\}.
\end{equation}
The $\phi$-$\GOOD$ subspace is
\begin{equation}
    \phi\text{-}\GOOD \defeq \spn\left\{\ket{\phi}: \forall \ket{\phi'} \in \phi\text{-}\BAD, \bra{\phi'}\ket{\phi} = 0\right\} = \spn\left\{\ket{\phi}: \ket{\phi} \text{ is a $\phi$-$\mathrm{good}$ state}\right\}.
\end{equation}
Let $\pibadphi$ and $\pigoodphi$ denote the projectors onto $\phi$-$\BAD$ and $\phi$-$\GOOD$, respectively. 

The subspaces $\psi$-$\BAD$ and $\psi$-$\GOOD$, and the projectors $\pibadpsi$ and $\pigoodpsi$, are defined analogously.
\end{definition}

Notice that $\pibadphi\pigoodphi=\pigoodphi\pibadphi=0$ and $\pibadphi+\pigoodphi=I$. Similarly,  $\pibadpsi\pigoodpsi=\pigoodpsi\pibadpsi=0$ and $\pibadpsi+\pigoodpsi=I$. 
We now define the states $\ket{\psi_\good^{(i)}}$, $\ket{\psi_\bad^{(i)}}$, and $\ket{\psi_\allbad^{(i)}}$ that were described informally above.

\begin{definition} \label{def:phigood}   

We define 
\begin{equation}
    \ket{\phi_\good^{(i)}} \defeq \pigoodphi \left(\ket{\phi_\Al^{(i)}} - \tilde{C}_i\ket{\phi_{\allbad}^{(i-1)}}\right), \qquad \ket{\phi_\bad^{(i)}} \defeq \pibadphi \left(\ket{\phi_\Al^{(i)}} - \tilde{C}_i\ket{\phi_{\allbad}^{(i-1)}}\right)
\end{equation}
where 
\begin{equation}
    \ket{\phi_{\allbad}^{(i)}} \defeq \sum_{j=1}^{i} \tilde{C}_{j,i} \ket{\phi_\bad^{(j)}}.
\end{equation}

Moreover, let $\ket{\phi_{\good}} \defeq \ket{\phi_{\good}^{(p(n))}}$ and $\ket{\phi_{\allbad}} \defeq \ket{\phi_{\allbad}^{(p(n))}}$. For each $i \in [p(n)]$, we define $\ket{\psi_{\good}^{(i)}}$, $\ket{\psi_\bad^{(i)}}$, $\ket{\psi_{\allbad}^{(i)}}$, $\ket{\psi_{\good}}$, and $\ket{\psi_{\allbad}}$ analogously (using $C$ in lieu of $\tilde C$).
\end{definition}

We now observe some properties that can be deduced from \Cref{def:piphi,def:phigood}.
We use many of these properties throughout the rest of our analysis.

\begin{lemma} \label{lem:combined}
Let $i \in [p(n)] \cup \{0\}$. Then
\begin{enumerate}
    \item \label{prop:phibadisbadandphigoodisgood} $\pibadphi\ket{\phi^{(i)}_{\bad}} = \ket{\phi^{(i)}_{\bad}}$ and $\pigoodphi\ket{\phi^{(i)}_{\good}} = \ket{\phi^{(i)}_{\good}}$,
    \item \label{prop:psibadisbadandpsigoodisgood} $\pibadpsi\ket{\psi^{(i)}_{\bad}} = \ket{\psi^{(i)}_{\bad}}$ and $\pigoodpsi\ket{\psi^{(i)}_{\good}} = \ket{\psi^{(i)}_{\good}}$,
    \item \label{prop:phigoodandphibaddisjoint} $\ket{\phi_{\good}^{(i)}}$ has disjoint support from $\ket{\phi_\bad^{(i)}}$,
    \item \label{prop:psigoodandpsibaddisjoint} $\ket{\psi_{\good}^{(i)}}$ has disjoint support from $\ket{\psi_\bad^{(i)}}$,
    \item \label{prop:equivrepofphigood} $\ket{\phi^{(i)}_{\good}} = \ket{\phi^{(i)}_{\Al}} - \ket{\phi^{(i)}_{\allbad}}$,
    \item \label{prop:equivrepofpsigood} $\ket{\psi^{(i)}_{\good}} = \ket{\psi^{(i)}_{\Al}} - \ket{\psi^{(i)}_{\allbad}}$,
    \item \label{prop:decompositionofCphigood} $\tilde{C}_i\ket{\phi^{(i-1)}_{\good}} = \ket{\phi^{(i)}_{\good}} + \ket{\phi^{(i)}_\bad}$,
    \item \label{prop:decompositionofCpsigood} $C_i\ket{\psi^{(i-1)}_{\good}} = \ket{\psi^{(i)}_{\good}} + \ket{\psi^{(i)}_\bad}$,
    \item \label{prop:expansionofpibadphi} $\pibadphi \ket{\phi^{(i)}_\Al} = \ket{\phi^{(i)}_\bad} + \pibadphi \tilde{C}_i \ket{\phi^{(i-1)}_{\allbad}}$,
    \item \label{prop:expansionofpibadpsi} $\pibadpsi \ket{\psi^{(i)}_\Al} = \ket{\psi^{(i)}_\bad} + \pibadpsi C_i \ket{\psi^{(i-1)}_{\allbad}}$,
    \item \label{prop:expansionofpigoodphi} $\pigoodphi \ket{\phi^{(i)}_\Al} = \ket{\phi^{(i)}_{\good}} + \pigoodphi \tilde{C}_i \ket{\phi^{(i-1)}_{\allbad}}$, and
    \item \label{prop:expansionofpigoodpsi} $\pigoodpsi \ket{\psi^{(i)}_\Al} = \ket{\psi^{(i)}_{\good}} + \pigoodpsi C_i \ket{\psi^{(i-1)}_{\allbad}}$.
\end{enumerate}
\end{lemma}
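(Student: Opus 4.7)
All twelve items follow essentially mechanically from the definitions in \Cref{def:phigood}, the fact that $\pigoodphi,\pibadphi$ (resp.\ $\pigoodpsi,\pibadpsi$) are complementary orthogonal projectors (noted just after \Cref{def:piphi}), and a single recursive identity for $\ket{\phi_\allbad^{(i)}}$. So my plan is to first establish that recursion, then derive the remaining items in an order that lets each one use only the ones proved before it. I will carry out the $\phi$ versions explicitly; the $\psi$ versions (items \ref{prop:psibadisbadandpsigoodisgood}, \ref{prop:psigoodandpsibaddisjoint}, \ref{prop:equivrepofpsigood}, \ref{prop:decompositionofCpsigood}, \ref{prop:expansionofpibadpsi}, \ref{prop:expansionofpigoodpsi}) are obtained by literally replacing $\tilde C \mapsto C$ and $\pigoodphi,\pibadphi \mapsto \pigoodpsi,\pibadpsi$ in the arguments, since neither the recursion nor the projector identities depend on which circuit we use.

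\textbf{Key recursion.} The only nonroutine step is to isolate the last term from the sum defining $\ket{\phi_\allbad^{(i)}}$. Using $\tilde C_{i,i}=I$ and $\tilde C_{j,i}=\tilde C_i\,\tilde C_{j,i-1}$ for $j\le i-1$ (from \Cref{def:DefinitionofC}),
\begin{equation}
\ket{\phi_\allbad^{(i)}} \;=\; \sum_{j=1}^{i}\tilde C_{j,i}\ket{\phi_\bad^{(j)}} \;=\; \ket{\phi_\bad^{(i)}} + \tilde C_i \sum_{j=1}^{i-1}\tilde C_{j,i-1}\ket{\phi_\bad^{(j)}} \;=\; \ket{\phi_\bad^{(i)}} + \tilde C_i \ket{\phi_\allbad^{(i-1)}}.
\end{equation}
(With the convention $\ket{\phi_\allbad^{(0)}}=0$.)

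\textbf{Deriving the items.} Items \ref{prop:phibadisbadandphigoodisgood} and \ref{prop:psibadisbadandpsigoodisgood} are immediate from $\pigoodphi^2=\pigoodphi$, $\pibadphi^2=\pibadphi$ and the explicit form in \Cref{def:phigood}. Items \ref{prop:phigoodandphibaddisjoint} and \ref{prop:psigoodandpsibaddisjoint} then follow because any $\phi$-bad and $\phi$-good computational basis states lie in orthogonal subspaces (by \Cref{def:piphi}), so in particular they have disjoint computational-basis support. For item \ref{prop:equivrepofphigood}, add the definitions of $\ket{\phi_\good^{(i)}}$ and $\ket{\phi_\bad^{(i)}}$, use $\pigoodphi+\pibadphi=I$ to get $\ket{\phi_\good^{(i)}}+\ket{\phi_\bad^{(i)}} = \ket{\phi_\Al^{(i)}} - \tilde C_i\ket{\phi_\allbad^{(i-1)}}$, then eliminate $\ket{\phi_\bad^{(i)}}$ via the key recursion to obtain $\ket{\phi_\good^{(i)}} = \ket{\phi_\Al^{(i)}} - \ket{\phi_\allbad^{(i)}}$; item \ref{prop:equivrepofpsigood} is identical. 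For item \ref{prop:decompositionofCphigood}, apply $\tilde C_i$ to the identity of item \ref{prop:equivrepofphigood} at index $i-1$, use $\tilde C_i\ket{\phi_\Al^{(i-1)}}=\ket{\phi_\Al^{(i)}}$, and compare with the sum $\ket{\phi_\good^{(i)}}+\ket{\phi_\bad^{(i)}}$ already computed; item \ref{prop:decompositionofCpsigood} is identical. Items \ref{prop:expansionofpibadphi} and \ref{prop:expansionofpibadpsi} are just a rearrangement of the definitions of $\ket{\phi_\bad^{(i)}}$ and $\ket{\psi_\bad^{(i)}}$; items \ref{prop:expansionofpigoodphi} and \ref{prop:expansionofpigoodpsi} follow analogously, or alternatively by subtracting the bad identity from $\ket{\phi_\Al^{(i)}}=(\pigoodphi+\pibadphi)\ket{\phi_\Al^{(i)}}$.

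\textbf{Anticipated obstacle.} There is no real technical obstacle; everything reduces to the recursion above plus projector algebra. The only thing to be careful about is bookkeeping of the base case $i=0$ (where $\ket{\phi_\allbad^{(0)}} = 0$ so several items are vacuous) and the consistent reuse of item \ref{prop:equivrepofphigood} inside the proof of item \ref{prop:decompositionofCphigood}. Writing the $\phi$ and $\psi$ proofs in parallel, or simply stating "the $\psi$ case is identical with $\tilde C\mapsto C$," should keep the write-up short.
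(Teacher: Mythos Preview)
Your proposal is correct and follows essentially the same approach as the paper. The only organizational difference is that you explicitly isolate the recursion $\ket{\phi_\allbad^{(i)}} = \ket{\phi_\bad^{(i)}} + \tilde C_i\ket{\phi_\allbad^{(i-1)}}$ up front, whereas the paper frames items \ref{prop:equivrepofphigood}--\ref{prop:expansionofpigoodpsi} as an induction on $i$ and uses that same recursion implicitly (it is the last step of its proof of item \ref{prop:equivrepofphigood}, justified there simply by citing \Cref{def:phigood}); the actual projector algebra and the dependencies between items are otherwise identical.
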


\begin{proof}~\nopagebreak
\begin{enumerate}
    \item By \Cref{def:phigood}, any computational basis state in the support of $\ket{\phi^{(i)}_{\bad}}$ is $\phi$-bad and any computational basis state in the support of $\ket{\phi^{(i)}_{\good}}$ is $\phi$-good. Then the desired statement follows from \Cref{def:piphi}.
    \item Similar to the proof of \cref{prop:phibadisbadandphigoodisgood}.
    \item Since $\pibadphi$ and $\pigoodphi$ are orthogonal projectors, this follows from \cref{prop:phibadisbadandphigoodisgood}.
    \item Similar to the proof of \cref{prop:phigoodandphibaddisjoint}.
\end{enumerate}

We prove the remaining parts by induction on $i$. All the statements are trivially true for $i=0$ by \Cref{def:piphi,def:phigood}. We show each of them separately for all $i \in [p(n)]$ assuming that they are true for $i-1$.
\begin{enumerate}[resume]
    \item Note that 
    \begin{align}
        \ket{\phi^{(i)}_{\good}} &= \pigoodphi \left(\ket{\phi_\Al^{(i)}} - \tilde{C}_i\ket{\phi_{\allbad}^{(i-1)}}\right) \\
        &= (I - \pibadphi) \left(\ket{\phi_\Al^{(i)}} - \tilde{C}_i\ket{\phi_{\allbad}^{(i-1)}}\right) \\
        &= \ket{\phi_\Al^{(i)}} - \tilde{C}_i\ket{\phi_{\allbad}^{(i-1)}} - \ket{\phi^{(i)}_{\bad}} \\
        &= \ket{\phi_\Al^{(i)}} - \ket{\phi_{\allbad}^{(i)}}
    \end{align}
    where we used \Cref{def:phigood} in all steps except for the second one, where we used \Cref{def:piphi}. 
    \item Similar to the proof of \cref{prop:equivrepofphigood}.
    \item Note that 
    \begin{align}
        \tilde{C}_i\ket{\phi^{(i-1)}_{\good}} &= \tilde{C}_i\left( \ket{\phi^{(i-1)}_\Al} - \ket{\phi^{(i-1)}_\allbad} \right) \\
        &= \ket{\phi^{(i)}_\Al} - \tilde{C}_i \ket{\phi^{(i-1)}_\allbad} \\
        &= \ket{\phi^{(i)}_{\good}} + \ket{\phi^{(i)}_\allbad} - \tilde{C}_i \ket{\phi^{(i-1)}_\allbad}  \\
        &= \ket{\phi^{(i)}_{\good}} + \ket{\phi^{(i)}_\bad}
    \end{align}
    where we used \cref{prop:equivrepofphigood} in steps 1 and 3, and \Cref{def:phigood} in step 4. 
    \item Similar to the proof of \cref{prop:decompositionofCphigood}.
    \item Note that 
    \begin{align}
        \pibadphi \ket{\phi^{(i)}_\Al} &=  \pibadphi\left(\ket{\phi^{(i)}_\good} + \ket{\phi^{(i)}_\allbad}\right) \\
        &= \pibadphi \ket{\phi^{(i)}_{\allbad}} \\
        &= \pibadphi \left(\ket{\phi^{(i)}_{\bad}} + \tilde{C}_i \ket{\phi^{(i-1)}_{\allbad}}\right) \\
        &= \ket{\phi^{(i)}_\bad} + \pibadphi \tilde{C}_i \ket{\phi^{(i-1)}_{\allbad}}
    \end{align}
    where we used \cref{prop:equivrepofphigood} in step 1, \cref{prop:phibadisbadandphigoodisgood} in steps 2 and 4, and \Cref{def:phigood} in step 3.
    \item Similar to the proof of \cref{prop:expansionofpibadphi}.
    \item Note that 
    \begin{align}
        \pigoodpsi \ket{\phi^{(i)}_\Al} &= \pigoodphi \left(\ket{\phi^{(i)}_{\good}} + \ket{\phi^{(i)}_{\allbad}}\right) \\
        &= \ket{\phi^{(i)}_{\good}} + \pigoodphi \ket{\phi^{(i)}_{\allbad}} \\
        &= \ket{\phi^{(i)}_{\good}} + \pigoodphi \left(\ket{\phi^{(i)}_{\bad}} + \tilde{C}_i \ket{\phi^{(i-1)}_{\allbad}}\right) \\
        &= \ket{\phi^{(i)}_{\good}} + \pigoodphi \tilde{C}_i \ket{\phi^{(i-1)}_{\allbad}} 
    \end{align}
    where we used \cref{prop:equivrepofphigood} in step 1, \cref{prop:phibadisbadandphigoodisgood} in steps 2 and 4, and \Cref{def:phigood} in step 3.
    \item Similar to the proof of \cref{prop:expansionofpigoodphi}.
\qedhere
\end{enumerate}
\end{proof}

Based on the intuitive description of $\ket{\psi_\good^{(i)}}$ and $\ket{\psi_\bad^{(i)}}$ that we provided earlier, we anticipate that the size (as quantified by the total squared norm) of the portion of the state $\ket{\psi_\Al^{(i)}}$ that never encountered the $\exit$ or a near-cycle at any point in its history, and the size of the respective portions of the state $\ket{\psi_\Al^{(i)}}$ that encountered the $\exit$ or a cycle at the $i$th or earlier steps, to sum to the size of $\ket{\psi_\Al^{(i)}}$. The following lemma formalizes this intuition.
\begin{lemma} \label{lem:psigood^2+allpsibad^2=1}
Let $i \in [p(n)] \cup \{0\}$. Then $\norm{\ket{\psi_\good^{(i)}}}^2 + \sum_{j \in [i]}\norm{\ket{\psi_\bad^{(j)}}}^2 = 1$.
\end{lemma}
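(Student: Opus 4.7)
The plan is to prove the statement by induction on $i$, leveraging parts (\ref{prop:psigoodandpsibaddisjoint}) and (\ref{prop:decompositionofCpsigood}) of \Cref{lem:combined} together with unitarity of each gate $C_i$.

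For the base case $i=0$, note that $\ket{\psi_\Al^{(0)}} = \ket{\psi_{\mathrm{initial}}}$, which consists solely of the $\entrance$ label in its sole non-trivial vertex register (all others holding $0^{2n}$); such a subgraph contains neither the $\exit$ nor any edge of a near-cycle, so this computational basis state is $\psi$-$\mathrm{good}$. Therefore $\ket{\psi_\good^{(0)}} = \ket{\psi_\Al^{(0)}}$, $\ket{\psi_\allbad^{(0)}} = 0$, and the sum on the right is empty, giving $\norm{\ket{\psi_\good^{(0)}}}^2 = 1$.

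For the inductive step, assume $\norm{\ket{\psi_\good^{(i-1)}}}^2 + \sum_{j \in [i-1]}\norm{\ket{\psi_\bad^{(j)}}}^2 = 1$. By part (\ref{prop:decompositionofCpsigood}) of \Cref{lem:combined}, we have
\begin{equation}
C_i \ket{\psi_\good^{(i-1)}} = \ket{\psi_\good^{(i)}} + \ket{\psi_\bad^{(i)}}.
\end{equation}
By part (\ref{prop:psigoodandpsibaddisjoint}) of \Cref{lem:combined}, the two summands on the right have disjoint supports in the computational basis (they lie in the orthogonal subspaces $\psi$-$\GOOD$ and $\psi$-$\BAD$, respectively), so they are mutually orthogonal. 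Combined with the fact that $C_i$ is unitary, this yields
\begin{equation}
\norm{\ket{\psi_\good^{(i-1)}}}^2 = \norm{C_i\ket{\psi_\good^{(i-1)}}}^2 = \norm{\ket{\psi_\good^{(i)}}}^2 + \norm{\ket{\psi_\bad^{(i)}}}^2.
\end{equation}
Substituting this identity into the inductive hypothesis gives precisely $\norm{\ket{\psi_\good^{(i)}}}^2 + \sum_{j \in [i]}\norm{\ket{\psi_\bad^{(j)}}}^2 = 1$, completing the induction.

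There is no serious obstacle here: the entire argument is a mechanical unpacking of the definitions together with the previously-established structural facts in \Cref{lem:combined}. The only point worth double-checking is that the initial state really is $\psi$-$\mathrm{good}$, which is immediate since a single vertex label cannot witness the $\exit$ and certainly cannot be within one edge of containing a cycle. An analogous induction would establish the corresponding identity for the transcript states $\ket{\phi_\good^{(i)}}$ and $\ket{\phi_\bad^{(j)}}$ using parts (\ref{prop:phigoodandphibaddisjoint}) and (\ref{prop:decompositionofCphigood}) of \Cref{lem:combined}, though that is not what is asked for here.
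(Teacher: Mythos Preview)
Your proof is correct and follows essentially the same approach as the paper: induction on $i$, with the inductive step using unitarity of $C_i$ together with \cref{prop:decompositionofCpsigood,prop:psigoodandpsibaddisjoint} of \Cref{lem:combined} to obtain $\norm{\ket{\psi_\good^{(i-1)}}}^2 = \norm{\ket{\psi_\good^{(i)}}}^2 + \norm{\ket{\psi_\bad^{(i)}}}^2$, then substituting into the hypothesis. Your base case is spelled out in slightly more detail than the paper's, but the argument is the same.
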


\begin{proof}
We prove this claim by induction on $i$. The base case is trivial as $\norm{\ket{\psi_\good^{(0)}}} = 1$ and $\norm{\ket{\psi_\bad^{(0)}}} = 0$. Now, suppose that the claim is true for some $i$ with $i+1 \in [p(n)]$. Note that 
\begin{align}
    \norm{\ket{\psi_\good^{(i)}}}^2 &= \norm{C_i\ket{\psi_\good^{(i)}}}^2 \\
    &= \norm{\ket{\psi_\good^{(i+1)}} + \ket{\psi_\bad^{(i+1)}}}^2 \\
    &= \norm{\ket{\psi_\good^{(i+1)}}}^2 + \norm{\ket{\psi_\bad^{(i+1)}}}^2
\end{align}
where we use the fact that the unitary $C_i$ preserves the norm in the first step, and \cref{prop:decompositionofCpsigood,prop:psigoodandpsibaddisjoint} of \Cref{lem:combined} in the second and third steps, respectively.

Therefore,
\begin{align}
    \norm{\ket{\psi_\good^{(i+1)}}}^2 + \sum_{j\in [i+1]} \norm{\ket{\psi_\bad^{(j)}}}^2 &= \norm{\ket{\psi_\good^{(i)}}}^2 - \norm{\ket{\psi_\bad^{(i+1)}}}^2 + \sum_{j \in [i+1]} \norm{\ket{\psi_\bad^{(j)}}}^2 \\
    &= \norm{\ket{\psi_\good^{(i)}}}^2 + \sum_{j \in [i]} \norm{\ket{\psi_\bad^{(j)}}}^2 \\
    &= 1
\end{align}
where the last step follows by the induction hypothesis.
\end{proof}

We conclude this section by strengthening the observations made in \Cref{rem:rootednesscheck} and \Cref{lem:transcriptstateisalwaysrooted} using \Cref{lem:combined}, which we apply in \Cref{subsec:analysisongoodpart}.

\begin{lemma} \label{lem:psigoodandpsibadisrooted}
Let $i \in [p(n)] \cup \{0\}$. Then
\begin{enumerate}
    \item \label{itm:psigoodandpsibadisrooted} any computational basis state in the support of $\ket{\psi_\good^{(i)}}$ or $\ket{\psi_\bad^{(i)}}$ is rooted, and
    \item \label{itm:phigoodandphibadisaddressrooted} any computational basis state in the support of $\ket{\phi_\good^{(i)}}$ or $\ket{\phi_\bad^{(i)}}$ is address rooted.
\end{enumerate}
\end{lemma}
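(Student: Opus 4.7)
The plan is to prove both parts simultaneously by induction on $i$, leveraging the decompositions from \Cref{lem:combined} together with the rootedness (respectively address-rootedness) preservation guaranteed by \Cref{rem:rootednesscheck} (respectively \Cref{lem:transcriptstateisalwaysrooted}). For the base case $i=0$, I would simply note that $\ket{\psi_\Al^{(0)}} = \ket{\psi_{\mathrm{initial}}}$ contains only the $\entrance$ and so is rooted, while $\ket{\psi_\bad^{(0)}} = 0$; hence $\ket{\psi_\good^{(0)}} = \ket{\psi_\Al^{(0)}}$ is rooted. The identical argument applied to $\ket{\phi_{\mathrm{initial}}}$ (which stores $\emptystring$) handles the address-space claim.

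The heart of the induction is a single structural observation: each of the projectors $\pigoodpsi$, $\pibadpsi$, $\pigoodphi$, $\pibadphi$ is diagonal in the computational basis, since by \Cref{def:phi-badandphi-good} the property of being $\psi$-good or $\psi$-bad (and analogously $\phi$-good or $\phi$-bad) is intrinsic to each computational basis state. Consequently these projectors map the span of rooted basis states into itself, and likewise for address-rooted basis states. Given this, the inductive step reduces to showing that the vector being projected in \Cref{def:phigood}, namely $\ket{\psi_\Al^{(i)}} - C_i \ket{\psi_\allbad^{(i-1)}}$, is supported on rooted basis states, and similarly for its address-space analogue.

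For the first term, \Cref{rem:rootednesscheck} lets me assume without loss of generality that the circuit $C$ has built-in rootedness-check gadgets, so every intermediate state $\ket{\psi_\Al^{(i)}}$ is supported on rooted basis states; the address-space analogue is exactly \Cref{lem:transcriptstateisalwaysrooted}. For the second term, I would use the inductive hypothesis, which gives that $\ket{\psi_\bad^{(j)}}$ is rooted for every $j \le i-1$. Since each gate of the augmented $C$ is rootedness-preserving (on any input not in the rooted subspace the gadget acts as identity, and on rooted inputs the gate has been verified to preserve rootedness), successive applications show that $C_{j,i-1}\ket{\psi_\bad^{(j)}}$, then $\ket{\psi_\allbad^{(i-1)}} = \sum_{j=1}^{i-1} C_{j,i-1}\ket{\psi_\bad^{(j)}}$, and finally $C_i \ket{\psi_\allbad^{(i-1)}}$, all remain supported on rooted basis states. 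The corresponding statement in the address space is identical with $C$ replaced by $\tilde C$ and \Cref{rem:rootednesscheck} by the address-space gadgets used in \Cref{lem:transcriptstateisalwaysrooted}.

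The main obstacle I anticipate is purely conceptual rather than technical: one must be careful that the cancellation $\ket{\psi_\Al^{(i)}} - C_i\ket{\psi_\allbad^{(i-1)}}$ does not create any component outside the rooted subspace. This is immediate, however, because the rooted states span a linear subspace and a difference of two vectors in this subspace remains in it; after that, diagonality of the projectors seals the argument. Everything else is straightforward bookkeeping of the inductive expansion, and the address-space case then follows by the same line of reasoning \emph{mutatis mutandis}.
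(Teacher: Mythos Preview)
Your proof is correct and follows essentially the same inductive strategy as the paper. The paper streamlines the inductive step by invoking \cref{prop:decompositionofCpsigood} of \Cref{lem:combined}, namely $C_i\ket{\psi_\good^{(i-1)}} = \ket{\psi_\good^{(i)}} + \ket{\psi_\bad^{(i)}}$, which already collapses your two-term difference $\ket{\psi_\Al^{(i)}} - C_i\ket{\psi_\allbad^{(i-1)}}$ into a single application of $C_i$ to the (inductively rooted) $\ket{\psi_\good^{(i-1)}}$; it then uses the disjoint-support fact (\cref{prop:psigoodandpsibaddisjoint}) to separate $\ket{\psi_\good^{(i)}}$ and $\ket{\psi_\bad^{(i)}}$, avoiding the need to track each $C_{j,i}\ket{\psi_\bad^{(j)}}$ individually.
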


\begin{proof}
We show \cref{itm:psigoodandpsibadisrooted} using \Cref{rem:rootednesscheck}, while \cref{itm:phigoodandphibadisaddressrooted} follows using \Cref{lem:transcriptstateisalwaysrooted} and analogous arguments.

From \Cref{rem:rootednesscheck}, we can infer that any consecutive sequence of gates in the circuit corresponding to the given genuine, rooted algorithm will map a rooted state to a rooted state. Notice that $\ket{\psi_\good^{(0)}} = \ket{\psi_{\mathrm{initial}}}$ and $\ket{\psi_{\bad}^{(0)}} = 0$. That is, $\ket{\psi_\good^{(0)}}$ and $\ket{\psi_\bad^{(0)}}$ are rooted states by \cref{eq:genuineinitial}. Thus, since $C_i\ket{\psi^{(i-1)}_{\good}} = \ket{\psi^{(i)}_{\good}} + \ket{\psi^{(i)}_\bad}$ by \cref{prop:decompositionofCpsigood} of \Cref{lem:combined}, any computational basis state in the support of $\ket{\psi^{(i)}_{\good}} + \ket{\psi^{(i)}_\bad}$ will be rooted for each $i \in [p(n)]$. But, by \cref{prop:psigoodandpsibaddisjoint} of \Cref{lem:combined}, $\ket{\psi_{\good}^{(i)}}$ has disjoint support from $\ket{\psi_\bad^{(i)}}$, so any computational basis state in the support of $\ket{\psi_{\good}^{(i)}}$ or in the support of $\ket{\psi_{\bad}^{(i)}}$ is rooted for each $i \in [p(n)]$. 
\end{proof}

\subsection{Faithful simulation of the good part} \label{subsec:analysisongoodpart}

As any subtree of $\mathcal{G}$ without the $\exit$ vertex can be embedded in $\mathcal{T}$, one might expect that the size of the portion of the state $\ket{\psi_\Al^{(i)}}$ that never encountered the $\exit$ or a near-cycle at any point in its history is the same as the size of the portion of the state $\ket{\phi_\Al^{(i)}}$ that never encountered the $\exit$ or a near-cycle at any point in its history. We formally show this via a sequence of lemmas that culminate in \Cref{lem:||psigood||=||phigood||}. We restrict our attention to the $\mathrm{good}$ parts of the states $\ket{\psi_{\Al}^{(i)}}$ and $\ket{\phi_{\Al}^{(i)}}$ in this subsection, beginning with a useful decomposition of $\ket{\phi_{\good}^{(i)}}$.

\begin{definition}\label{def:expandincompbasis}
We define an indexed expansion of $\ket{\phi_{\good}^{(i)}}$ in the computational basis, as follows. Write $\ket{\phi_{\good}^{(i)}} = \sum_{p,q} \alpha^{(i)}_{p,q} \ket{q^{(i)}} \ket{\phi_p^{(i)}}$, where each $\ket{\phi_p^{(i)}}$ denotes a computational basis state in the vertex register, each $\ket{q^{(i)}}$ specifies a computational basis state in the workspace register, and each $\alpha^{(i)}_{p,q}$ is an amplitude. Define $\pgood^{(i)}$ to be the set of all indices $p$ appearing in the expansion of $\ket{\phi_{\good}^{(i)}}$ with any corresponding non-zero amplitude $\alpha^{(i)}_{p,q}$.  
\end{definition}

Analogous to \Cref{def:phigood},
we define computational basis states $\ket{\psi_p^{(i)}}$ from $\ket{\phi_p^{(i)}}$, and hence from $\ket{\phi_\Al^{(i)}}$ rather than $\ket{\psi_\Al^{(i)}}$.

\begin{definition} \label{def:psi_p}
For $i \in [p(n)] \cup \{0\}$ and $p  \in \pgood^{(i)}$, let $\ket{\psi_p^{(i)}} \defeq L\ket{\phi_p^{(i)}}$. 
\end{definition}

Notice that it is not immediate from this definition that $\ket{\psi_p^{(i)}}$ is in the support of the part of the state $\ket{\psi_\Al^{(i)}}$ that is in the vertex space. However, by the end of this section, we will show that indeed this is the case.

For each $i \in [p(n)] \cup \{0\}$ and $p \in \pgood^{(i)}$, the state $\ket{\phi^{(i)}_p}$ is a computational basis state in the address space, so we can write it as 
\begin{equation} \label{eq:expansionofphip}
    \ket{\phi^{(i)}_p} = \bigotimes_{j \in [p(n)]} \ket{s_j}
\end{equation}
for some strings $s_j \in \{0,1\}^{2p(n)}$.

Similarly, for each $i \in [p(n)] \cup \{0\}$ and $p \in \pgood^{(i)}$, the state $\ket{\psi^{(i)}_p}$ is a computational basis state in the vertex space, so we can write it as 
\begin{equation} \label{eq:expansionofpsip}
    \ket{\psi^{(i)}_p} = \bigotimes_{j \in [p(n)]} \ket{v_j}
\end{equation}
for some vertex labels $v_j \in \{0,1\}^{2n}$.

By the above notation and \Cref{def:psi_p}, we have that for each $j \in [p(n)]$,
\begin{equation} \label{eq:L(s)=v}
    L(s_j) = v_j.
\end{equation}

Note that $\ket{s_j}$ and $\ket{v_j}$ also depend on $p$ and $i$. However, as $p$ and $i$ will be clear from context, we keep this dependence implicit to simplify notation.

In the same vein, for each $i \in [p(n)] \cup \{0\}$ and workspace index $q$, the state $\ket{q^{(i)}}$ is a computational basis state in the vertex space, so we can write it as
\begin{equation} \label{eq:expansionofq}
    \ket{q^{(i)}} = \bigotimes_{j \in [p(n)]} \ket{w_j}.
\end{equation}
Again we suppress the dependence on $q$ and $i$ for simplicity. 

We now show that the mapping $L$, defined in \Cref{def:DefL}, is a bijection from the set of address-rooted states in $\ket{\phi_\good^{(i)}}$ to the set of address-rooted states in $L\ket{\phi_\good^{(i)}}$.

\begin{lemma} \label{lem:pbadiscycle}
Let $i \in [p(n)]$ and $p, p' \in \pgood^{(i)}$. Suppose that $\ket{\psi_p^{(i)}} = \ket{\psi_{p'}^{(i)}}$. Then $\ket{\phi_p^{(i)}} = \ket{\phi_{p'}^{(i)}}$.
\end{lemma}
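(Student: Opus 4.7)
The plan is to argue by contradiction: I will suppose $\ket{\phi_p^{(i)}} \neq \ket{\phi_{p'}^{(i)}}$ and show this forces $\ket{\phi_p^{(i)}}$ to be $\phi$-bad, contradicting the hypothesis $p \in \pgood^{(i)}$. Writing $\ket{\phi_p^{(i)}} = \bigotimes_{j} \ket{s_j}$ and $\ket{\phi_{p'}^{(i)}} = \bigotimes_{j} \ket{s_j'}$ as in \eqref{eq:expansionofphip}, the hypothesis $\ket{\psi_p^{(i)}} = \ket{\psi_{p'}^{(i)}}$ forces $L(s_j) = L(s_j')$ for every $j$, while $\ket{\phi_p^{(i)}} \neq \ket{\phi_{p'}^{(i)}}$ gives some index $j$ with $s_j \neq s_j'$. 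Set $t \defeq B^{\mathsf{inv}}(s_j)$ and $t' \defeq B^{\mathsf{inv}}(s_j')$; since $B$ is a bijection we have $t \neq t'$, and by \Cref{def:DefL} we have $L'(t) = L'(t')$.

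Next I will invoke \Cref{lem:addresstreecycle} on the pair $(t,t')$. Either (a) $L'(t) = L'(t') \notin \specialvertices$, in which case traversing the colors of $t$ from $\entrance$ and then traversing the colors of $t'$ in reverse yields a walk in $\mathcal{G}$ that contains a non-trivial cycle, or (b) at least one of $t, t'$, call it $\tau$, has the property that following the colors of $\tau$ from $\entrance$ either reaches the $\exit$ or traces a walk that contains a cycle. In subcase (b) where the $\exit$ is reached, $\exit$ appears as a vertex label in $L\ket{\phi_p^{(i)}}$ and we are already done, contradicting $\phi$-goodness directly.

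The bridge to $\phi$-goodness for the remaining cases is address-rootedness. By \Cref{lem:psigoodandpsibadisrooted}, every computational basis state in the support of $\ket{\phi_\good^{(i)}}$ is address rooted, so iterating \Cref{def:addressrooted} along the ancestor chain in $\mathcal{T}$ shows that for every prefix of the color sequence $t$, the corresponding address's $B$-image is stored in some register of $\ket{\phi_p^{(i)}}$; the analogous statement holds for $t'$ in $\ket{\phi_{p'}^{(i)}}$. Applying $L$ and using $L\ket{\phi_p^{(i)}} = L\ket{\phi_{p'}^{(i)}}$, every vertex visited along the walk prescribed by $t$ and by $t'$ from $\entrance$ appears in the vertex registers of $L\ket{\phi_p^{(i)}}$. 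Consequently the vertex set of the cycle guaranteed in case (a) (or the cycle in the remaining branch of case (b)) lies entirely inside $L\ket{\phi_p^{(i)}}$, so the induced subgraph on those labels contains that cycle, contradicting the requirement in \Cref{def:phi-badandphi-good} that $L\ket{\phi_p^{(i)}}$ be more than one edge away from containing a cycle.

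The main obstacle is the step of the previous paragraph: we need \emph{every} intermediate vertex along the walks defined by $t$ and $t'$ to actually be present in $L\ket{\phi_p^{(i)}}$, and this is precisely what address-rootedness buys us. Outside $\phi$-$\GOOD$ this chain could be broken and the argument would fail, which is why the hypothesis $p,p' \in \pgood^{(i)}$ is essential. The residual bookkeeping over the subcases $L'(t) \in \specialvertices$ is handled uniformly by \Cref{lem:addresstreecycle} (with \Cref{lem:entrancecycle} treating $L'(t) = \entrance$), so no additional case analysis beyond quoting those two lemmas is needed.
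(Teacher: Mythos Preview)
Your proof is correct and follows essentially the same approach as the paper's: argue by contradiction, locate a register index $j$ where the address strings differ but their $L$-images agree, then use address-rootedness (\Cref{lem:psigoodandpsibadisrooted}) to place all intermediate vertices of both address-paths into $L\ket{\phi_p^{(i)}}$, forcing a cycle (or the $\exit$) in that subgraph. The paper's version is marginally more direct---it simply observes that two distinct non-backtracking walks from $\entrance$ to the same vertex $v_j$ already force a cycle in the induced subgraph, bypassing the invocation of \Cref{lem:addresstreecycle} and the case split on $\specialvertices$; one small point to tighten in your write-up is that $t \neq t'$ relies on $s_j, s_j'$ lying in the range of $B$ (via \Cref{lem:transcriptstateisintherangeofB}), not merely on $B$ being a bijection.
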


\begin{proof}
Suppose, towards contradiction, that $\ket{\psi_p^{(i)}} = \ket{\psi_{p'}^{(i)}}$ but $\ket{\phi_p^{(i)}} \neq \ket{\phi_{p'}^{(i)}}$. This means that there is an index $j \in [p(n)]$ such that the string $s_j$ stored in the $j$th register of $\ket{\phi_p^{(i)}}$ is not equal to the string $s'_j$ stored in the $j$th register of $\ket{\phi_{p'}^{(i)}}$, and yet the vertex label $v_j$ stored in the $j$th register of $\ket{\psi_p^{(i)}}$ is equal to the vertex label $v'_j$ stored in the $j$th register of $\ket{\psi_{p'}^{(i)}}$. Consider the addresses $t_j = B^{\mathsf{inv}}(s_j)$ and $t'_j = B^{\mathsf{inv}}(s'_j)$. We know that $L'(t_j) = L'(t'_j) = v_j$ from \cref{eq:L(s)=v}.

By \Cref{lem:transcriptstateisintherangeofB}, we have $s_j$ and $s_k$ in the range of $B$. Recall, from \Cref{def:Bmapping}, that $B$ is a bijection and $B^{\mathsf{inv}} = B^{-1}$ on the range of $B$. Therefore, $t_j \neq t'_j$. Moreover, this means that we can write $s_j = B(t_j)$ and $s'_j = B(t'_j)$.

We know, from \Cref{lem:psigoodandpsibadisrooted}, that $\ket{\phi^{(i)}_p}$ and $\ket{\phi^{(i)}_{p'}}$ are address rooted. This means that for any ancestor $\tau \neq \zeros$ of $t_j$ in $\mathcal{T}$, $B(\tau)$ is stored in one of the registers of $\ket{\phi^{(i)}_p}$. Therefore, there is a path from the vertex labeled $\emptyaddress$ and $t_j$ in $\mathcal{T}$ such that $\ket{\phi^{(i)}_p}$ contains $B(\tau)$ for all vertices $\tau$ in this path. Let $\tau_0 = \emptyaddress, \ldots, \tau_\gamma = t_j$ denote this path where $\gamma$ is the length of this path. By \Cref{def:psi_p}, $\ket{\psi^{(i)}_p}$ contains the vertex label $L'(\tau_i)$ for each $i \in [\gamma]$. Similarly, we can deduce that there is a path $\tau'_0 = \emptyaddress, \ldots, \tau'_{\gamma'} = t'_j$ in $\mathcal{T}$, with $\gamma'$ denoting the length of this path, such that $\ket{\psi^{(i)}_p} = \ket{\psi^{(i)}_{p'}}$ contains the vertex label $L'(\tau'_i)$ for each $i \in [\gamma']$. Since $t_j \neq t'_j$ and $v_j = v'_j$, it follows that $\ket{\psi^{(i)}_p}$ contains two distinct paths from $L'(\emptyaddress) = \entrance$ to $L'(t_j) = L'(t'_j) = v_j$ in $\mathcal{G}$. Hence, $\ket{\psi^{(i)}_p}$ contains a cycle. But this is not possible since $p \in \pgood^{(i)}$. Therefore, $\ket{\phi^{(i)}_p} \neq \ket{\phi^{(i)}_{p'}} \implies \ket{\psi^{(i)}_p} \neq \ket{\psi^{(i)}_{p'}}$.
\end{proof}

The following lemma states that there is a bijective correspondence between the contents of the vertex registers of $\ket{\psi_\good^{(i)}}$ and the address registers of $\ket{\phi_\good^{(i)}}$.

\begin{lemma} \label{lem:addressvertexmapping}
Let $i \in [p(n)] \cup \{0\}$, $j, k \in [p(n)]$, $p \in \pgood^{(i)}$ and $c \in \mathcal{C}$. Let $\ket{v_j}$ and $\ket{v_{k}}$ be the states stored in the $j$th and $k$th registers of $\ket{\psi_p^{(i)}}$, respectively, as in \cref{eq:expansionofpsip}. Similarly, let $\ket{s_j}$ and $\ket{s_{k}}$ be the states stored in the $j$th and $k$th registers of $\ket{\phi_p^{(i)}}$, respectively, as in \cref{eq:expansionofphip}. Then
\begin{enumerate}
    \item \label{prop:v=0iffs=0} $v_j = 0^{2n} \iff s_j = 0^{2p(n)}$,
    \item \label{prop:v=noedgeiffs=noedge} $v_j = \noedge \iff s_j = \noedgestring$,
    \item \label{prop:v=invalidiffs=invalid} $v_j = \invalid \iff s_j = \invalidstring$,
    \item \label{prop:v=v'iffs=s'} $v_j = v_{k} \iff s_j = s_{k}$, and
    \item \label{prop:v=eta_c(v')iffs=lambda_c(s')} $v_j = \eta_c(v_{k}) \iff s_j = B\lambda_cB^{\mathsf{inv}}(s_{k})$.
\end{enumerate}
\end{lemma}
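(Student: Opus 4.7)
The plan is to unify all five parts by writing $t_j \defeq B^{\mathsf{inv}}(s_j)$ and $t_k \defeq B^{\mathsf{inv}}(s_k)$, which are well-defined elements of $\mathcal{V}_\mathcal{T}$ with $B(t_j) = s_j$ and $B(t_k) = s_k$ by \Cref{lem:transcriptstateisintherangeofB} (so that $B^{\mathsf{inv}}$ restricts to $B^{-1}$ on the strings we encounter). By \Cref{def:psi_p} and \Cref{def:DefL}, we have $v_j = L'(t_j)$ and $v_k = L'(t_k)$. The forward directions of \cref{prop:v=0iffs=0,prop:v=noedgeiffs=noedge,prop:v=invalidiffs=invalid} and \cref{prop:v=v'iffs=s'} are immediate from \Cref{alg:lprime} and the fact that $B$ is a bijection; the forward direction of \cref{prop:v=eta_c(v')iffs=lambda_c(s')} is a direct application of \Cref{lem:L'lambda=etaL'} to $t_k$, whose hypotheses we will verify using $p \in \pgood^{(i)}$ together with address rootedness (\cref{itm:phigoodandphibadisaddressrooted} of \Cref{lem:psigoodandpsibadisrooted}), because these guarantee that the path from $\entrance$ prescribed by $t_k$ lies entirely inside a good subgraph and hence does not encounter the $\exit$ or a cycle.

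For the reverse direction of \cref{prop:v=0iffs=0,prop:v=noedgeiffs=noedge,prop:v=invalidiffs=invalid}, I will suppose $v_j$ is one of the three special labels and show $t_j$ must be the corresponding special address. If $t_j$ were instead a non-empty color tuple (or $\emptyaddress$ in the $\noedge$ case), then walking it from $\entrance$ would, at the step immediately before reaching a $\noedge$ or $\invalid$ value, have to stand on $\entrance$ or $\exit$ (by \Cref{def:eta_c}); address rootedness puts all prefix vertices into the subgraph, so by \Cref{lem:entrancecycle} this subgraph either contains $\exit$ or contains a cycle, contradicting $p \in \pgood^{(i)}$. The case $v_j = 0^{2n}$ is even simpler since \Cref{alg:lprime} forces $t_j = \zeros$.

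For the reverse direction of \cref{prop:v=v'iffs=s'}, I will assume $v_j = v_k$ but $s_j \ne s_k$, so $t_j \ne t_k$ with $L'(t_j) = L'(t_k)$. By \Cref{lem:addresstreecycle}, either the $\exit$ is reached along $t_j$ or $t_k$, or (when $v_j \notin \specialvertices$) the concatenation of $t_j$ with the reverse of $t_k$ forms a non-trivial cycle through $\entrance$ in $\mathcal{G}$. In every case, address rootedness supplies all intermediate vertex labels on the relevant paths in some register of $\ket{\phi_p^{(i)}}$, so the subgraph corresponding to $\ket{\psi_p^{(i)}} = L\ket{\phi_p^{(i)}}$ contains the $\exit$ or a genuine cycle, contradicting $p \in \pgood^{(i)}$; this is exactly the style of argument used in \Cref{lem:pbadiscycle}.

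For the reverse direction of \cref{prop:v=eta_c(v')iffs=lambda_c(s')}, I will combine the ideas above. Assume $v_j = \eta_c(v_k)$; applying \Cref{lem:L'lambda=etaL'} to $t_k$ (whose hypothesis is again justified by $p \in \pgood^{(i)}$ and address rootedness) gives $L'(\lambda_c(t_k)) = \eta_c(L'(t_k)) = v_j = L'(t_j)$. If $\lambda_c(t_k) \ne t_j$, then \Cref{lem:addresstreecycle} again produces a cycle or the $\exit$ in $\mathcal{G}$; the key observation I will need to spell out is that $\lambda_c(t_k)$ is either the parent of $t_k$ in $\mathcal{T}$ (already in the state by address rootedness of $t_k$) or the child of $t_k$ obtained by appending $c$ (in which case the relevant end-vertex is $v_j$, which is present because $s_j$ is in a register). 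Either way, all vertices on the cycle/exit-path are present in $\ket{\psi_p^{(i)}}$, contradicting $p \in \pgood^{(i)}$, so $\lambda_c(t_k) = t_j$ and applying $B$ yields $s_j = B\lambda_c B^{\mathsf{inv}}(s_k)$.

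The main obstacle is the bookkeeping in the reverse direction of \cref{prop:v=eta_c(v')iffs=lambda_c(s')}, because $\lambda_c(t_k)$ itself need not be stored in any register; the argument must carefully split into the parent versus child sub-cases to assemble all vertices of the derived cycle inside the state, and must also handle the special-address cases (notably $t_k \in \{\emptyaddress, \zeros, \noedgeaddress, \invalidaddress\}$) where \Cref{lem:L'lambda=etaL'} has been shown directly rather than via the generic walking argument.
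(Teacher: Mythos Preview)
Your proposal is correct and follows essentially the same route as the paper: both directions of each part are handled via \Cref{alg:lprime}, \Cref{lem:addresstreecycle}, and \Cref{lem:L'lambda=etaL'}, with the contradictions to $p \in \pgood^{(i)}$ supplying the nontrivial implications. Two small remarks. First, the paper's proof does not explicitly cite address rootedness (\Cref{lem:psigoodandpsibadisrooted}) when concluding ``contradicts $p \in \pgood^{(i)}$''; it leaves implicit the step you spell out, namely that all vertices along $t_j$ (and $t_k$) are present in $L\ket{\phi_p^{(i)}}$ so that the cycle or $\exit$ produced by \Cref{lem:addresstreecycle} actually lies in the stored subgraph. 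Your version is more careful here. Second, the ``main obstacle'' you flag in \cref{prop:v=eta_c(v')iffs=lambda_c(s')} is lighter than you suggest: once you have $L'(t_j)=L'(\lambda_c(t_k))$ with $t_j\neq\lambda_c(t_k)$, the paper simply reuses the \cref{prop:v=v'iffs=s'} argument verbatim with $\lambda_c(t_k)$ in place of $t_k$, since the path along $\lambda_c(t_k)$ differs from the path along $t_k$ by at most the single vertex $v_j$, which is already stored. No separate parent/child case split is needed.
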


\begin{proof}
We show each statement separately as follows.
\begin{enumerate}
\item 
First, suppose that $v_j = 0^{2n}$. By \cref{eq:L(s)=v}, we know that $L(s_j)=0^{2n}$. Let $t_j=B^{\mathsf{inv}}(s_j)$. Then, $L'(t_j)=0^{2n}$. It can be observed from \Cref{alg:lprime} that $t_j=\zeros$ as only in that case  $L'(t_j)=0^{2n}$. Thus, by \Cref{def:Bmapping}, it must be that $s_j = 0^{2p(n)}$ as it is the only value of $s_j$ for which $B^{\mathsf{inv}}(s_j)=\zeros$.

Now, suppose that $s_j = 0^{2p(n)}$. Then
\begin{equation}
    v_j = L(s_j) = L'B^{\mathsf{inv}}(0^{2p(n)}) = L'(\zeros) = 0^{2n}
\end{equation}
where the first step follows from \cref{eq:L(s)=v}, the third follows from \Cref{def:Bmapping}, and the fourth from \Cref{alg:lprime}. 

\item 
First, suppose that $v_j=\noedge$. By \cref{eq:L(s)=v}, we know that $L(s_j)=\noedge$. Let $t_j=B^{\mathsf{inv}}(s_j)$. Then $L'(t_j)=\noedge$. We claim that $t_j = \noedgeaddress$. Indeed, if $t_j \neq \noedgeaddress$, then, as $L'(\noedgeaddress) = L'(t_j) = \noedge$, we would have found the $\exit$ or a cycle in $\mathcal{G}$ by \Cref{lem:addresstreecycle}. But that is not possible since $p \in \pgood^{(i)}$, so $t_j = \noedgeaddress$. It follows that $s_j = \noedgestring$ since it is the only possible assignment of $s_j$ such that $B^{\mathsf{inv}}(s_j)=\noedgeaddress$ by \Cref{def:Bmapping}. 

Now, suppose that $s_j = \noedgestring$. Then
\begin{equation}
    v_j = L(s_j) = L'B^{\mathsf{inv}}(\noedgestring) = L'(\noedgeaddress) = \noedge
\end{equation}
where the first step follows from \cref{eq:L(s)=v}, the third follows from \Cref{def:Bmapping}, and the fourth from \Cref{alg:lprime}. 

\item 

First, suppose that $v_j=\invalid$. By \cref{eq:L(s)=v}, we know that $L(s_j)=\invalid$. Let $t_j=B^{\mathsf{inv}}(s_j)$. Then $L'(t_j)=\invalid$. We claim that $t_j = \invalidaddress$. Indeed, if $t_j \neq \invalidaddress$, then, as $L'(\invalidaddress) = L'(t_j) = \invalid$, we would have found the $\exit$ or a cycle in $\mathcal{G}$ by \Cref{lem:addresstreecycle}. But that is not possible since $p \in \pgood^{(i)}$, so $t_j = \invalidaddress$. Recall from \Cref{lem:transcriptstateisintherangeofB} that $s_j$ is in the range of the $B$ mapping. It follows that $s_j = \noedgestring$ since this is the only possible assignment of $s_j$ in the range of $B$ with $B^{\mathsf{inv}}(s_j)=\invalidaddress$.

Now, suppose that $s_j = \invalidstring$. Then
\begin{equation}
    v_j = L(s_j) = L'B^{\mathsf{inv}}(\invalidstring) = L'(\invalidaddress) = \invalid
\end{equation}
where the first step follows from     \cref{eq:L(s)=v}, the third follows from \Cref{def:Bmapping}, and the fourth from \Cref{alg:lprime}. 

\item
First, suppose that $v_j = v_{k}$. From \cref{eq:L(s)=v}, we can deduce that $L(s_j) = L(s_k)$. Let $t_j = B^{\mathsf{inv}}(s_j)$ and $t_{k} = B^{\mathsf{inv}}(s_{k})$. Then $L'(t_j) = L'(t_{k})$. If $t_j \neq t_{k}$, then, by \Cref{lem:addresstreecycle}, the concatenation of the paths specified by the sequence of colors $t_j$ and $t_{k}$ forms a cycle, which contradicts $p \in \pgood^{(i)}$. This means that $t_j = t_{k}$. Recall from \Cref{lem:transcriptstateisintherangeofB} that $s_j$ and $s_{k}$ are in the range of $B$. We know, from \Cref{def:Bmapping}, that $B$ is a bijection and $B^{\mathsf{inv}} = B^{-1}$ on the range of $B$. Therefore, $s_j = s_{k}$.

Now suppose that $s_j = s_{k}$. Thus, $L(s_j) = L(s_{k})$. By \cref{eq:L(s)=v}, we know that $L(s_j) = v_j$ and $L(s_{k}) = v_{k}$. Therefore, $v_j = v_{k}$.

\item 
First, suppose that $v_j = \eta_c(v_{k})$. From \cref{eq:L(s)=v}, we have $v_j = L(s_j)$ and $v_{k} = L(s_{k})$. Let $t_j = B^{\mathsf{inv}}(s_j)$ and $t_{k} = B^{\mathsf{inv}}(s_{k})$. Then $v_j = L'(t_j)$ and $v_{k} = L'(t_{k})$, which means that $t_j$ and $t_{k}$ are addresses of the vertices $v_j$ and $v_{k}$ respectively. By \Cref{lem:L'lambda=etaL'}, it follows that $\eta_c(v_{k}) = L'\lambda_c(t_{k})$. Altogether, we have $L'(t_j) = L'\lambda_c(t_{k})$. If $t_j \neq \lambda_c(t_{k})$, then by \Cref{lem:addresstreecycle}, the concatenation of the paths specified by the sequence of colors $t_j$ and $\lambda_c(t_{k})$ forms a cycle, which contradicts $p \in \pgood^{(i)}$. This means that $t_j = \lambda_c(t_{k})$. It follows that $BB^{\mathsf{inv}}(s_j) = B(t_j) = B\lambda_c(t_{k})$. Recall that $s_j$ is in the range of $B$ from \Cref{lem:transcriptstateisintherangeofB}. Since $B$ is a bijection and $B^{\mathsf{inv}} = B^{-1}$ on the range of $B$ from \Cref{def:Bmapping}, we can deduce that $BB^{\mathsf{inv}}(s_j) = s_j$. Hence, $s_j = B\lambda_cB^{\mathsf{inv}}(s_{k})$.

Now suppose that $s_j = B\lambda_cB^{\mathsf{inv}}(s_{k})$. Then $L(s_j) = LB\lambda_cB^{\mathsf{inv}}(s_{k}) = LB\lambda_c(t_{k})$. By \cref{eq:L(s)=v} we have $v_j = LB\lambda_c(t_{k})$. Note that $\lambda_cB^{\mathsf{inv}}(s_{k}) = \lambda_c(t_{k})$ is in the domain of $B$ since $B^{\mathsf{inv}}$ maps any string to the domain of $B$ and $\lambda_c$ preserves the domain of $B$ by \Cref{def:addresstree,def:Bmapping}. Therefore, as $B$ is a bijection and $B^{\mathsf{inv}} = B^{-1}$ on the range of $B$, $LB\lambda_c(t_{k}) = L'\lambda_c(t_{k})$. Since $t_{k}$ is an address of $v_{k}$, we have $L'\lambda_c(t_{k}) = \eta_c(v_{k})$ by \Cref{lem:L'lambda=etaL'}. Altogether, we have $LB\lambda_c(t_{k}) = \eta_c(v_{k})$. By the deductions we made above, we can conclude that $v_j = \eta_c(v_{k})$.
\qedhere
\end{enumerate}
\end{proof}

The next lemma forms a key ingredient of \Cref{lem:psigood=L(phigood)}, where we essentially show that the mapping $L$ and the gate $C_i$ commute: applying $L$ followed by $C_i$ is equivalent to applying $\tilde{C}_i$ followed by $L$. 

\begin{lemma} \label{lem:LC(phi)=C(psi)}
Let $i \in [p(n)]$, $p \in \pgood^{(i)}$ and $q$ be any workspace index. Then $L \tilde{C}_i \ket{q^{(i-1)}} \ket{\phi_p^{(i-1)}} = C_i \ket{q^{(i-1)}} \ket{\psi_p^{(i-1)}}$.
\end{lemma}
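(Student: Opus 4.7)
The plan is to proceed by a case analysis on the type of gate $C_i$, since by \Cref{def:transcript} the transcript gate $\tilde{C}_i$ is defined from $C_i$ on a case-by-case basis. In each case, the componentwise bijection established in \Cref{lem:addressvertexmapping}, which applies because $p \in \pgood^{(i)}$, is what allows us to interchange $L$ with the gate. Concretely, I would expand $\ket{\phi_p^{(i-1)}} = \bigotimes_{j \in [p(n)]} \ket{s_j}$ and $\ket{\psi_p^{(i-1)}} = \bigotimes_{j \in [p(n)]} \ket{v_j}$, recalling from \cref{eq:L(s)=v} that $L(s_j) = v_j$ for every $j$.

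The two easiest cases come first. If $C_i$ is an arbitrary gate on the workspace, then $\tilde{C}_i = C_i$ by item 7 of \Cref{def:transcript}, and both sides of the claimed identity agree since $L$ acts as the identity on the workspace. If $C_i$ is a $\controlled{e^{i\theta T}}$ gate acting on vertex registers $j,k$, then $\tilde{C}_i = \controlled{e^{i\theta \tilde{T}}}$ simply permutes the contents of address registers $j$ and $k$ (conditioned on the same workspace control); since $L$ acts tensor-factor-wise, the address swap becomes a vertex swap after applying $L$, matching $C_i$. For the equality-check, $\noedge$-check, and $\zero$-check gates, the transcript versions $\tilde{\mathcal{E}}$, $\tilde{\mathcal{N}}$, $\tilde{\mathcal{Z}}$ flip the target workspace bit conditioned on $s_j = s_k$, $s_j = \noedgestring$, or $s_j = 0^{2p(n)}$, respectively. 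By \cref{prop:v=v'iffs=s',prop:v=noedgeiffs=noedge,prop:v=0iffs=0} of \Cref{lem:addressvertexmapping}, these are equivalent to the conditions $v_j = v_k$, $v_j = \noedge$, and $v_j = 0^{2n}$ that govern the corresponding original gates, so the two sides again agree after applying $L$.

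The main case, and the principal obstacle, is the controlled-oracle gate. Suppose $C_i = \controlled{O_c}$ acting on vertex registers $j,k$. By \Cref{rem:genuinenesscheck} we may assume $C$ has the built-in gadget of \Cref{fig:genuinitycheck}, which ensures that whenever the gate actually fires, the target register $v_k$ is either $0^{2n}$ or $\eta_c(v_j)$; the translated gadget likewise guarantees (as noted in the proof of \Cref{lem:transcriptstateisintherangeofB}) that $s_k$ is either $0^{2p(n)}$ or $B \lambda_c B^{\mathsf{inv}}(s_j)$. Using \cref{prop:v=0iffs=0,prop:v=eta_c(v')iffs=lambda_c(s')} of \Cref{lem:addressvertexmapping}, these two binary conditions correspond under $L$: if $s_k = 0^{2p(n)}$ then $v_k = 0^{2n}$, and if $s_k = B\lambda_c B^{\mathsf{inv}}(s_j)$ then $v_k = \eta_c(v_j)$. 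Thus the XOR update $s_k \mapsto s_k \oplus B\lambda_c B^{\mathsf{inv}}(s_j)$ toggles $s_k$ between those two values, and applying $L$ after this update produces exactly the toggle of $v_k$ between $0^{2n}$ and $\eta_c(v_j)$ that $C_i$ would have produced. Since all other registers are left untouched and the workspace control is identical, $L\tilde{C}_i$ and $C_i L$ act identically on $\ket{q^{(i-1)}}\ket{\phi_p^{(i-1)}}$.

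The delicate point throughout the oracle case is that the XOR update semantics of $O_c$ and $\tilde{O}_c$ are only compatible with $L$ because of the genuineness condition; without the promise that $v_k \in \{0^{2n}, \eta_c(v_j)\}$ (and the analogous promise for $s_k$), the XOR could produce arbitrary bit-string junk with no meaningful preimage under $L$. Given the built-in checking gadgets and the consequence of \Cref{lem:transcriptstateisintherangeofB} that every relevant address string lies in the range of $B$, this obstacle is resolved, and combining all cases establishes the desired identity.
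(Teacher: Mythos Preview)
Your proposal is correct and follows essentially the same case analysis as the paper. One small point worth tightening in the oracle case: to conclude that $L$ applied to the new target value $B\lambda_c B^{\mathsf{inv}}(s_j)$ equals $\eta_c(v_j)$, you cannot quite extract this from \cref{prop:v=eta_c(v')iffs=lambda_c(s')} of \Cref{lem:addressvertexmapping} alone, since that part only relates strings already sitting in registers of $\ket{\phi_p^{(i-1)}}$; the paper handles this by invoking \Cref{lem:L'lambda=etaL'} directly to obtain $L(B\lambda_c B^{\mathsf{inv}}(s_j)) = L'\lambda_c(B^{\mathsf{inv}}(s_j)) = \eta_c(v_j)$, which is straightforward once noted.
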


\begin{proof}
We prove the statement of the lemma for each of the possible gates in our gate set defined in \Cref{def:genuinecircuit}. For any quantum state $\ket{\chi} = \bigotimes_{j} \ket{\chi_j}$, and any indices $j_1, \ldots, j_r$, let $\ket{\chi}_{j_1, \ldots, j_r} \defeq \ket{\chi_{j_1}} \otimes \cdots \otimes \ket{\chi_{j_r}}$. 

\begin{enumerate}
\item \label{prop:oraclecase} $C_i = \controlled{O_c}$ for some $c \in \mathcal{C}$. Let $x, y \in [p(n)]$ denote the vertex register indices that $O_c$ acts on, and let $a$ denote the workspace register index of the control qubit. Recall, from \cref{eq:expansionofphip,eq:expansionofpsip}, that $v_x$ and $s_x$ denote the contents of the $x$th vertex register of $\ket{\psi_p^{(i-1)}}$ and the $x$th address register of $\ket{\phi_p^{(i-1)}}$, respectively. Note that $t_x = B^{\mathsf{inv}}(s_x)$ is an address of $v_x$ as $L'(t_x)=v_x$ by \cref{eq:L(s)=v}. Then
\begin{align} \label{eq:Llambda(tx)=eta(vx)}
    LB\lambda_c B^{\mathsf{inv}}(s_x) = L'B^{\mathsf{inv}}B\lambda_c(t_x) = L'\lambda_c(t_x) = \eta_c(v_x)
\end{align}
where the first equality follows from \Cref{def:DefL}, the second equality follows since $B^{\mathsf{inv}}$ is the inverse of $B$ over the range of $B$ in \Cref{def:Bmapping}, and the third follows from \Cref{lem:L'lambda=etaL'}.

Note that $\ket{s_x}$ and $\ket{s_y}$ denote the states of the control and target registers of the oracle gate $\tilde{O}_c$. We argued in the proof of \Cref{lem:transcriptstateisintherangeofB} that $\controlled{\tilde{O}_c}$ is only applied when $s_y = 0^{2p(n)}$ or $s_y = B\lambda_{c}B^{\mathsf{inv}}(s_x)$. Therefore, 
\begin{align}
    L\controlled{\tilde{O}_c} \ket{\phi^{(i-1)}_p}_{x,y} \ket{q^{(i-1)}}_a &= L \controlled{\tilde{O}_c} \ket{s_x} \ket{s_y} \ket{w_a} \\
    &= L 
    \begin{cases}
        \ket{s_x} \ket{s_y \oplus B\lambda_c B^{\mathsf{inv}}(s_x)} \ket{w_a} & s_y \in \left\{0^{2p(n)}, B\lambda_{c}B^{\mathsf{inv}}(s_x)\right\} \; \& \; w_a=1 \\
        \ket{s_x} \ket{s_y} \ket{w_a} & \text{otherwise}
    \end{cases} \\
    &= L
    \begin{cases}
        \ket{s_x} \ket{B\lambda_c B^{\mathsf{inv}}(s_x)} \ket{w_a} & s_y = 0^{2p(n)} \; \& \; w_a = 1 \\
        \ket{s_x} \ket{0^{2p(n)}} \ket{w_a} & s_y = B\lambda_{c}B^{\mathsf{inv}}(s_x) \; \& \; w_a = 1 \\
        \ket{s_x} \ket{s_y} \ket{w_a} & \text{otherwise}
	\end{cases} \\
    &=
	\begin{cases}
        \ket{v_x} \ket{\eta_c(v_x)} \ket{w_a} & v_y = 0^{2n} \; \& \; w_a = 1 \\
        \ket{v_x} \ket{0^{2n}} \ket{w_a} & v_y = \eta_c(v_x) \; \& \; w_a = 1 \\
        \ket{v_x} \ket{v_y} \ket{w_a} & \text{otherwise}
	\end{cases} \\
	&= \ket{v_x} \ket{v_y \oplus w_a \cdot \eta_c(v_x)} \ket{w_a} \\
	&= \controlled{O_c} \ket{v_x} \ket{v_y} \ket{w_a} \\
	&= \controlled{O_c} \ket{\psi^{(i-1)}_p}_{x,y} \ket{q^{(i-1)}}_a
\end{align}
where the second equality follows from \Cref{def:transcript} and an observation made above; the fourth from \cref{eq:L(s)=v}, \cref{prop:v=0iffs=0,prop:v=eta_c(v')iffs=lambda_c(s')} of \Cref{lem:addressvertexmapping}, and \cref{eq:Llambda(tx)=eta(vx)}; and the sixth from \Cref{def:genuinecircuit}.

\item $C_i = \controlled{e^{i \theta T}}$ for some $\theta \in [0,2\pi)$. Let $x, y \in [p(n)]$ denote the vertex register indices that $e^{i \theta T}$ acts on, and let $a$ denote the workspace register index of the control qubit. Then
\begin{align}
    L \controlled{e^{i \theta \tilde{T}}} \ket{\phi^{(i-1)}_p}_{x,y} \ket{q^{(i-1)}}_a  &= L \controlled{e^{i \theta \tilde{T}}} \ket{s_x} \ket{s_y} \ket{w_a} \\
    &= L 
    \begin{cases}
        \cos \theta \ket{s_x} \ket{s_y} \ket{w_a} + i \sin \theta \ket{s_y} \ket{s_x} \ket{w_a} & w_a = 1 \\
        \ket{s_x} \ket{s_y} \ket{w_a} & \text{otherwise} 
    \end{cases} \\
    &= 
    \begin{cases}
        \cos \theta \ket{v_x} \ket{v_y} \ket{w_a} + i \sin \theta \ket{v_y} \ket{v_x} \ket{w_a} & w_a = 1 \\
        \ket{v_x} \ket{v_y} \ket{w_a} & \text{otherwise}  
    \end{cases} \\
    &= \controlled{e^{i \theta T}} \ket{v_x} \ket{v_y} \ket{w_a} \\
    &= \controlled{e^{i \theta T}} \ket{\psi^{(i-1)}_p}_{x, y} \ket{q^{(i-1)}}_a
\end{align}
where the second equality follows from \Cref{def:transcript}, the third from \cref{eq:L(s)=v}, and the fourth from \Cref{def:genuinecircuit}.

\item $C_i = \mathcal{E}$. Let $x, y \in [p(n)]$ denote the vertex register indices and $a$ denote the workspace register index that $\mathcal{E}$ acts on. Then
\begin{align}
    L\tilde{\mathcal{E}} \ket{\phi^{(i-1)}_p}_{x,y} \ket{q^{(i-1)}}_a &= L\tilde{\mathcal{E}}\ket{s_x} \ket{s_y} \ket{w_a} \\
    &= L\left(\ket{s_x} \ket{s_y}\right) \ket{w_a \oplus \delta[s_x = s_y]} \\ 
    &= \ket{v_x} \ket{v_y} \ket{w_a \oplus \delta[v_x = v_y]} \\
    &= \mathcal{E} \ket{v_x} \ket{v_y} \ket{w_a} \\
    &= \mathcal{E}\ket{\psi^{(i-1)}_p}_{x,y} \ket{q^{(i-1)}}_a
\end{align}
where the second equality follows from \Cref{def:transcript}, the third from \cref{eq:L(s)=v} and \cref{prop:v=v'iffs=s'} of \Cref{lem:addressvertexmapping}, and the fourth from \Cref{def:genuinecircuit}.

\item $C_i = \mathcal{N}$. Let $x \in [p(n)]$ denote the vertex register index and $a$ denote the workspace register index that $\mathcal{N}$ acts on. Then
\begin{align}
    L\tilde{\mathcal{N}} \ket{\phi^{(i-1)}_p}_x \left.\ket{q^{(i-1)}}\right|_a &= L\tilde{\mathcal{N}}\ket{s_x} \ket{w_a} \\
    &= L\left(\ket{s_x}\right) \ket{w_a \oplus \delta[s_x = \noedgestring]} \\ 
    &= \ket{v_x} \ket{w_a \oplus \delta[v_x = \noedge]} \\
    &= \mathcal{N} \ket{v_x} \ket{w_a} \\
    &= \mathcal{N}\ket{\psi^{(i-1)}_p}_x \ket{q^{(i-1)}}_a
\end{align}
where the second equality follows from \Cref{def:transcript}, the third from \cref{eq:L(s)=v} and \cref{prop:v=noedgeiffs=noedge} of \Cref{lem:addressvertexmapping}, and the fourth from \Cref{def:genuinecircuit}.

\item $C_i = \mathcal{Z}$. Let $x \in [p(n)]$ denote the vertex register index and $a$ denote the workspace register index that $\mathcal{N}$ acts on. Then
\begin{align}
    L\tilde{\mathcal{Z}} \ket{\phi^{(i-1)}_p}_x \left.\ket{q^{(i-1)}}\right|_a &= L\tilde{\mathcal{N}}\ket{s_x} \ket{w_a} \\
    &= L\left(\ket{s_x}\right) \ket{w_a \oplus \delta[s_x = 0^{2p(n)}]} \\ 
    &= \ket{v_x} \ket{w_a \oplus \delta[v_x = 0^{2n}]} \\
    &= \mathcal{Z} \ket{v_x} \ket{w_a} \\
    &= \mathcal{Z}\ket{\psi^{(i-1)}_p}_x \ket{q^{(i-1)}}_a
\end{align}
where the second equality follows from \Cref{def:transcript}, the third from \cref{eq:L(s)=v} and \cref{prop:v=0iffs=0} of \Cref{lem:addressvertexmapping}, and the fourth from \Cref{def:genuinecircuit}.

\item $C_i$ is a gate on the workspace register. Since $L$ acts on the address space, $L$ and $C_i$ commute. Moreover, $C_i = \tilde{C_i}$ as we do not replace the gates acting on the workspace register in \Cref{def:transcript}. Thus,
\begin{equation}
    L\tilde{C_i} \ket{\phi^{(i-1)}_p} \ket{q^{(i-1)}} = \tilde{C_i} L\ket{\phi^{(i-1)}_p} \ket{q^{(i-1)}} = C_i \ket{\psi^{(i-1)}_p} \ket{q^{(i-1)}}.
\end{equation}

\end{enumerate}
\end{proof}

Notice that the non-oracle gates in \Cref{def:genuinecircuit} do not produce any `new information' about vertex labels. Based on this intuition, one might expect that the portion of $\ket{\psi^{(i-1)}_\Al}$ (respectively $\ket{\phi^{(i-1)}_\Al}$) that has never encountered the $\exit$ or a cycle will not encounter the $\exit$ or a cycle on the application of $C_i$ (respectively $\tilde{C}_i$) at the $i$th step. We formalize this as follows.

\begin{lemma} \label{lem:G(phigood(i))=phigood(i+1)}
Let $i \in [p(n)]$ and suppose that $C_i$ is a genuine non-oracle gate. Then
\begin{enumerate}
\item \label{enum:G(phigood(i))=phigood(i+1)} $\ket{\phi_{\good}^{(i)}} = \tilde{C}_i\ket{\phi_{\good}^{(i-1)}}$ and
\item \label{enum:G(psigood(i))=psigood(i+1)} $\ket{\psi_{\good}^{(i)}} = C_i\ket{\psi_{\good}^{(i-1)}}$.
\end{enumerate} 
\end{lemma}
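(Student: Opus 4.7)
The plan is to reduce both parts to showing that a non-oracle gate never creates bad mass out of good mass, and then check this gate by gate. By \cref{prop:decompositionofCphigood,prop:decompositionofCpsigood} of \Cref{lem:combined}, we already have the decompositions $\tilde C_i\ket{\phi_\good^{(i-1)}} = \ket{\phi_\good^{(i)}} + \ket{\phi_\bad^{(i)}}$ and $C_i\ket{\psi_\good^{(i-1)}} = \ket{\psi_\good^{(i)}} + \ket{\psi_\bad^{(i)}}$, so both \cref{enum:G(phigood(i))=phigood(i+1)} and \cref{enum:G(psigood(i))=psigood(i+1)} reduce to showing that $\ket{\phi_\bad^{(i)}} = 0$ and $\ket{\psi_\bad^{(i)}} = 0$ whenever $C_i$ is not an oracle query. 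Unwinding the definition further via \cref{prop:equivrepofphigood}, we have $\ket{\phi_\bad^{(i)}} = \pibadphi \tilde C_i\ket{\phi_\good^{(i-1)}}$ (and similarly in the vertex space), so it is enough to prove that $\tilde C_i$ sends $\phi$-$\GOOD$ into $\phi$-$\GOOD$, and that $C_i$ sends $\psi$-$\GOOD$ into $\psi$-$\GOOD$.

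The main observation driving the proof is that the good/bad classification of a computational basis state depends only on the \emph{multiset} of address strings (respectively vertex labels) stored across its registers, through the subgraph obtained by applying $L$ (respectively directly reading off vertex labels). I would expand $\ket{\phi_\good^{(i-1)}}$ in the computational basis and verify, in a short case analysis over the gate types of \Cref{def:genuinecircuit} other than the controlled oracle, that each basis term is sent to a superposition of $\phi$-good basis states. Concretely: the controlled $\tilde T$-rotation maps $\ket{s_j}\ket{s_k}$ (when the workspace control is on) to the linear combination $\cos\theta\,\ket{s_j}\ket{s_k} + i\sin\theta\,\ket{s_k}\ket{s_j}$, and both terms store the same multiset of address strings, hence give rise to the \emph{same} subgraph under $L$ and therefore inherit the good status of the input. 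The gates $\tilde{\mathcal E}, \tilde{\mathcal N}, \tilde{\mathcal Z}$ only toggle an address-workspace qubit and leave every address register untouched. Finally, by item~\ref{item:workspacegates} of \Cref{def:transcript}, a pure workspace gate is not modified when passing from $C$ to $\tilde C$ and does not touch any address register. In every case the multiset of address strings in each resulting basis term equals the one in the input term (up to the $T$-swap permutation), so each resulting basis state is again $\phi$-good; consequently $\tilde C_i\ket{\phi_\good^{(i-1)}} \in \phi\text{-}\GOOD$ and $\ket{\phi_\bad^{(i)}} = \pibadphi \tilde C_i\ket{\phi_\good^{(i-1)}} = 0$, establishing \cref{enum:G(phigood(i))=phigood(i+1)}.

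Part \cref{enum:G(psigood(i))=psigood(i+1)} is proved by the completely parallel case analysis on the vertex-space gates: the controlled $T$-rotation only permutes which register holds which vertex label, the three check gates act trivially on all vertex registers, and a workspace gate does not touch any vertex register, so each basis term of $C_i\ket{\psi_\good^{(i-1)}}$ stores the same multiset of vertex labels as the corresponding input term and hence remains $\psi$-good; thus $\ket{\psi_\bad^{(i)}} = 0$. The only delicate point, and the place where one might worry, is that $\controlled{e^{i\theta \tilde T}}$ produces a genuine superposition of two basis states rather than mapping basis states to basis states; the resolution is precisely that the two terms in that superposition agree as multisets of stored strings, so they cannot differ in their good/bad classification, and the rest of the analysis is essentially bookkeeping against \Cref{def:genuinecircuit} and \Cref{def:transcript}.
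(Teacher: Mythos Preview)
Your proposal is correct and follows essentially the same approach as the paper: reduce via \cref{prop:decompositionofCphigood,prop:decompositionofCpsigood} of \Cref{lem:combined} to showing $\ket{\phi_\bad^{(i)}}=0$ and $\ket{\psi_\bad^{(i)}}=0$, and then argue that non-oracle gates never change the multiset of stored address strings/vertex labels and hence cannot turn a good basis state into a bad one. The paper compresses your case analysis into the single remark that, among genuine gates, only $O_c$ and $e^{i\theta T}$ touch the vertex/address registers at all, and $e^{i\theta T}$ merely swaps contents, so no new labels appear; your gate-by-gate enumeration is simply a more explicit rendering of the same observation.
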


\begin{proof}
We show \cref{enum:G(phigood(i))=phigood(i+1)} using \cref{prop:decompositionofCphigood} of \Cref{lem:combined}. \Cref{enum:G(psigood(i))=psigood(i+1)} follows by an analogous argument that instead uses \cref{prop:decompositionofCpsigood} of \Cref{lem:combined}.

Notice that in \Cref{def:genuinecircuit}, the only gates that alter the vertex space are $O_c$ gates or $e^{i \theta T}$ gates. But the $e^{i \theta T}$ gates only swap contents of the vertex register (without computing a new vertex label in a vertex register). In other words, genuine non-oracle gates do not introduce new vertex labels. This means that, as with $\ket{\phi^{(i-1)}_{\good}}$, the subgraph corresponding to any computational basis state in the support of $\tilde{C}_i\ket{\phi^{(i)}_{\good}}$ does not contain the $\exit$ and is more than one edge away from containing a cycle. That is, $\ket{\phi^{(i)}_{\bad}} = 0$. Therefore, by \cref{prop:decompositionofCphigood} of \Cref{lem:combined}, we have $\tilde{C}_i \ket{\phi^{(i-1)}_{\good}} = \ket{\phi^{(i)}_{\good}} + \ket{\phi^{(i)}_{\bad}} = \ket{\phi^{(i)}_{\good}}$. 
\end{proof}

For the analysis of oracle gates, we now define a subset of $\pgood^{(i-1)}$ that contains indices corresponding to computational basis states in the address space that do not contain the $\exit$ or a cycle even after the application of an oracle gate at the $i$th step. Inspired by the decomposition in \Cref{def:expandincompbasis}, we then define the components $\ket{\phi_\great^{(i-1)}}$ and $\ket{\psi_\great^{(i-1)}}$ of $\ket{\phi_\Al^{(i-1)}}$ and $\ket{\psi_\Al^{(i-1)}}$, respectively.

\begin{definition} \label{def:pgreat}
Let $i \in [p(n)]$. Suppose that $C_i = \controlled{O_c}$ for some $c \in \mathcal{C}$. Then, define 
\begin{equation}
    \pgreat^{(i-1)} \defeq \left\{p \in \pgood^{(i-1)}: \exists p' \in \pgood^{(i)} \text{ such that } \tilde{C}_i\ket{\phi^{(i-1)}_p} = \ket{\phi^{(i)}_{p'}}\right\}.
\end{equation} 
Also, let
\begin{equation}
    \ket{\phi_\great^{(i-1)}} \defeq \sum_{p \in \pgreat^{(i-1)}} \sum_q \alpha^{(i-1)}_{p, q} \ket{q^{(i-1)}} \ket{\phi^{(i-1)}_p}, \qquad \ket{\psi_\great^{(i-1)}} \defeq \sum_{p \in \pgreat^{(i-1)}} \sum_q \alpha^{(i-1)}_{p, q} \ket{q^{(i-1)}} \ket{\psi^{(i-1)}_p}.
\end{equation}
\end{definition}

In the following lemma, we show that the $L$ mapping preserves the relationship between computational basis states in the support of $\ket{\phi^{(i-1)}_\good}$ and $\ket{\phi^{(i)}_\good}$: applying the oracle gate to a computational basis state in the support of $\ket{\phi^{(i-1)}_\good}$ results in a computational basis state in the support of $\ket{\phi^{(i)}_\good}$ exactly when applying the oracle gate to a computational basis state in the support of $L\ket{\phi^{(i-1)}_\good}$ results in a computational basis state in the support of $L\ket{\phi^{(i)}_\good}$. 

\begin{lemma} \label{lem:alternativeconditionforpgreat}
Let $i \in [p(n)]$, $p \in \pgood^{(i-1)}$ and $p' \in \pgood^{(i)}$. Suppose that $C_i = \controlled{O_c}$ for some $c \in \mathcal{C}$. Then $\tilde{C}_i\ket{\phi^{(i-1)}_p} = \ket{\phi^{(i)}_{p'}}$ if and only if $C_i\ket{\psi^{(i-1)}_p} = \ket{\psi^{(i)}_{p'}}$.
\end{lemma}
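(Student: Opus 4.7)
The plan is to derive the equivalence from \Cref{lem:LC(phi)=C(psi)}, which intertwines $\tilde{C}_i$ and $C_i$ via the mapping $L$, together with the injectivity of $L$ on $\pgood^{(i)}$-indexed basis states supplied by \Cref{lem:pbadiscycle}. The forward direction is essentially immediate, while the backward direction requires first locating the image of $\tilde{C}_i\ket{\phi^{(i-1)}_p}$ inside the support of $\ket{\phi^{(i)}_\good}$ before \Cref{lem:pbadiscycle} can be invoked.

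For the forward direction, suppose $\tilde{C}_i\ket{\phi^{(i-1)}_p} = \ket{\phi^{(i)}_{p'}}$ and apply $L$ to both sides. By \Cref{lem:LC(phi)=C(psi)} (used in the case where the workspace control qubit of $\controlled{\tilde{O}_c}$ is active, so that $\tilde{C}_i$ reduces to $\tilde{O}_c$ on the relevant address registers), the left-hand side becomes $L\tilde{C}_i\ket{\phi^{(i-1)}_p} = C_i L\ket{\phi^{(i-1)}_p} = C_i\ket{\psi^{(i-1)}_p}$, with the last equality coming from \Cref{def:psi_p}. Meanwhile $L\ket{\phi^{(i)}_{p'}} = \ket{\psi^{(i)}_{p'}}$ by the same definition, which gives $C_i\ket{\psi^{(i-1)}_p} = \ket{\psi^{(i)}_{p'}}$, as desired.

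For the backward direction, assume $C_i\ket{\psi^{(i-1)}_p} = \ket{\psi^{(i)}_{p'}}$ and set $\ket{\phi''} \defeq \tilde{C}_i\ket{\phi^{(i-1)}_p}$, which is a computational basis state because (with the workspace control fixed) $\tilde{C}_i$ acts as a permutation on address-register basis states. Then \Cref{lem:LC(phi)=C(psi)} yields $L\ket{\phi''} = C_i\ket{\psi^{(i-1)}_p} = \ket{\psi^{(i)}_{p'}} = L\ket{\phi^{(i)}_{p'}}$, so it remains to promote this equality of $L$-images to an equality of address states. The critical intermediate step, which I expect to be the main obstacle, is to exhibit some $p'' \in \pgood^{(i)}$ with $\ket{\phi^{(i)}_{p''}} = \ket{\phi''}$ so that \Cref{lem:pbadiscycle} applies. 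First, $\ket{\phi''}$ is $\phi$-good because $L\ket{\phi''} = \ket{\psi^{(i)}_{p'}}$ is $\psi$-good (since $p' \in \pgood^{(i)}$) and \Cref{def:phi-badandphi-good} characterizes $\phi$-goodness through the image under $L$. Second, for any workspace state $\ket{q^{(i-1)}}$ with $\alpha^{(i-1)}_{p,q} \neq 0$ in the expansion of $\ket{\phi^{(i-1)}_\good}$, the basis state $\tilde{C}_i\ket{q^{(i-1)}}\ket{\phi^{(i-1)}_p}$ appears with the same nonzero amplitude in $\tilde{C}_i\ket{\phi^{(i-1)}_\good} = \ket{\phi^{(i)}_\good} + \ket{\phi^{(i)}_\bad}$ by \cref{prop:decompositionofCphigood} of \Cref{lem:combined}; since $\ket{\phi^{(i)}_\bad}$ is supported on $\phi$-bad basis states while $\ket{\phi''}$ is $\phi$-good, the contribution must come from $\ket{\phi^{(i)}_\good}$, producing the required $p''$. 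Finally, \Cref{lem:pbadiscycle} applied to $p''$ and $p'$ (whose $\psi$-images agree by construction, by \Cref{def:psi_p}) gives $\ket{\phi^{(i)}_{p''}} = \ket{\phi^{(i)}_{p'}}$, and hence $\ket{\phi''} = \ket{\phi^{(i)}_{p'}}$.
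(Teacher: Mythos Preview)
Your proposal is correct and follows essentially the same route as the paper: the forward direction is immediate from \Cref{lem:LC(phi)=C(psi)} and \Cref{def:psi_p}, and the backward direction locates $\tilde{C}_i\ket{\phi^{(i-1)}_p}$ in the support of $\ket{\phi^{(i)}_\good}$ via \cref{prop:decompositionofCphigood} of \Cref{lem:combined} and the $\phi$-good/$\psi$-good correspondence of \Cref{def:phi-badandphi-good}, then invokes \Cref{lem:pbadiscycle}. The only cosmetic difference is ordering: you first argue $\ket{\phi''}$ is $\phi$-good and then rule out the $\ket{\phi^{(i)}_\bad}$ contribution, whereas the paper first splits into the good/bad cases and then rules out bad by the same $L$-image reasoning.
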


\begin{proof}
First, suppose that $\tilde{C}_i\ket{\phi^{(i-1)}_p} = \ket{\phi^{(i)}_{p'}}$. Then
\begin{align}
    C_i \ket{\psi^{(i-1)}_p} = L \tilde{C}_i \ket{\phi^{(i-1)}_p} = L \ket{\phi^{(i)}_{p'}} = \ket{\psi^{(i)}_{p'}}
\end{align}
where we used \cref{prop:oraclecase} of \Cref{lem:LC(phi)=C(psi)} in the first step, the above supposition in the second, and \Cref{def:psi_p} in the last step.

Now, suppose that $C_i\ket{\psi^{(i-1)}_p} = \ket{\psi^{(i)}_{p'}}$. That is, $L\tilde{C}_i\ket{\phi^{(i-1)}_p} = L\ket{\phi^{(i)}_{p'}}$ by \cref{prop:oraclecase} of \Cref{lem:LC(phi)=C(psi)} and \Cref{def:psi_p}. From \cref{prop:decompositionofCphigood} of \Cref{lem:combined}, we know that $\tilde{C}_i\ket{\phi^{(i-1)}_{\good}} = \ket{\phi^{(i)}_{\good}} + \ket{\phi^{(i)}_\bad}$. Thus, since $\ket{\phi^{(i-1)}_p}$ is in the support of $\ket{\phi^{(i-1)}_\good}$ and $\tilde{C}_i = \controlled{\tilde{O}_c}$ is a permutation of the computational basis states, $\tilde{C}_i\ket{\phi^{(i-1)}_p}$ is in the support of $\ket{\phi^{(i)}_{\good}} + \ket{\phi^{(i)}_\bad}$. But by \cref{prop:phigoodandphibaddisjoint} of \Cref{lem:combined}, the support of $\ket{\phi^{(i)}_{\good}}$ and $\ket{\phi^{(i)}_{\bad}}$ is disjoint, so $\tilde{C}_i\ket{\phi^{(i-1)}_p}$ is either in the support of $\ket{\phi^{(i)}_{\good}}$ or $\ket{\phi^{(i)}_{\bad}}$. 

If $\tilde{C}_i\ket{\phi^{(i-1)}_p}$ is in the support of $\ket{\phi^{(i)}_{\bad}}$, then $\pibadphi\tilde{C}_i\ket{\phi^{(i-1)}_p} = \tilde{C}_i\ket{\phi^{(i-1)}_p}$ from \cref{prop:phibadisbadandphigoodisgood} of \Cref{lem:combined}. In other words, by \Cref{def:piphi} $\tilde{C}_i\ket{\phi^{(i-1)}_p}$ is a $\phi$-$\mathrm{bad}$ state. On the other hand, since $\ket{\phi^{(i)}_{p'}}$ is in the support of $\ket{\phi^{(i)}_\good}$, it is a $\phi$-$\mathrm{good}$ state by \cref{prop:phibadisbadandphigoodisgood} of \Cref{lem:combined} and \Cref{def:piphi}. But then it is not possible for $L \tilde{C}_i\ket{\phi^{(i-1)}_p}$ to be equal to $L\ket{\phi^{(i)}_{p'}}$ by \Cref{def:phi-badandphi-good}. It follows that $\tilde{C}_i\ket{\phi^{(i-1)}_p}$ is in the support of $\ket{\phi^{(i)}_{\good}}$. This means that there is a $p'' \in \pgood^{(i)}$ such that $\ket{\phi^{(i)}_{p''}} = \tilde{C}_i\ket{\phi^{(i-1)}_p}$ so $\ket{\psi^{(i)}_{p''}} = L\tilde{C}_i\ket{\phi^{(i-1)}_p} = L\ket{\phi^{(i)}_{p'}} = \ket{\psi^{(i)}_{p'}}$. Therefore, by \Cref{lem:pbadiscycle}, we have $\ket{\phi^{(i)}_{p'}} = \ket{\phi^{(i)}_{p''}} = \tilde{C}_i\ket{\phi^{(i-1)}_p}$, which proves the desired result.
\end{proof}

\Cref{def:pgreat} and \Cref{lem:alternativeconditionforpgreat} give rise to the following alternative definition of $\pgreat^{(i-1)}$.

\begin{corollary} \label{cor:alternatedefinitionofpgreat}
Let $i \in p[n]$. Suppose that $C_i = \controlled{O_c}$ for some $c \in \mathcal{C}$. Then 
\begin{equation}
    \pgreat^{(i-1)} = \left\{p \in \pgood^{(i-1)}: \exists p' \in \pgood^{(i)} \text{ such that } C_i\ket{\psi^{(i-1)}_p} = \ket{\psi^{(i)}_{p'}}\right\}.
\end{equation}
\end{corollary}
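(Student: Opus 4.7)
The plan is to derive the claimed identity as a direct reformulation, since this is essentially a translation of the defining condition of $\pgreat^{(i-1)}$ from the address space to the vertex space. First I would unfold \Cref{def:pgreat}, which states that an index $p$ lies in $\pgreat^{(i-1)}$ exactly when $p \in \pgood^{(i-1)}$ and there exists some $p' \in \pgood^{(i)}$ such that $\tilde{C}_i\ket{\phi^{(i-1)}_p} = \ket{\phi^{(i)}_{p'}}$.

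Next I would invoke \Cref{lem:alternativeconditionforpgreat}, whose hypotheses are satisfied since we are given $C_i = \controlled{O_c}$ for some $c \in \mathcal{C}$, and since the lemma applies to any $p \in \pgood^{(i-1)}$ and $p' \in \pgood^{(i)}$. The lemma provides the biconditional
\begin{equation}
    \tilde{C}_i\ket{\phi^{(i-1)}_p} = \ket{\phi^{(i)}_{p'}} \iff C_i\ket{\psi^{(i-1)}_p} = \ket{\psi^{(i)}_{p'}}.
\end{equation}
Substituting the right-hand condition into the definition of $\pgreat^{(i-1)}$ leaves the outer quantification $p \in \pgood^{(i-1)}$ and $\exists p' \in \pgood^{(i)}$ untouched, and yields the characterization claimed in the corollary.

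I do not expect any serious obstacle here, which is precisely why the statement is phrased as a corollary rather than a standalone lemma: both directions are immediate from the iff supplied by \Cref{lem:alternativeconditionforpgreat}, and the only content is checking that the restriction of the iff to pairs $(p,p') \in \pgood^{(i-1)} \times \pgood^{(i)}$ is exactly what appears inside the set-builder notation on both sides. I would close by briefly remarking that the hypothesis $C_i = \controlled{O_c}$ is used only to invoke \Cref{lem:alternativeconditionforpgreat}, matching the scope in which $\pgreat^{(i-1)}$ itself is defined.
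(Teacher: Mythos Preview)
Your proposal is correct and matches the paper's approach exactly: the corollary is stated in the paper as an immediate consequence of \Cref{def:pgreat} and \Cref{lem:alternativeconditionforpgreat}, with no further argument given. Your unfolding of the definition followed by substitution via the biconditional of \Cref{lem:alternativeconditionforpgreat} is precisely the intended derivation.
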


The next lemma provides a necessary and sufficient condition for membership in $\pgreat^{(i-1)}$: for any index $p \in \pgood^{(i-1)}$, we have $p \in \pgreat^{(i-1)}$ exactly when the oracle gate at the $i$th step applied to the computational basis state associated with $p$ in the address space results in a $\phi$-$\mathrm{good}$ state.

\begin{lemma} \label{lem:pgreatphiimplication}
Let $i \in [p(n)]$, $p \in \pgood^{(i-1)}$. Suppose that $C_i = \controlled{O_c}$ for some $c \in \mathcal{C}$. Then
\begin{enumerate}
\item \label{enum:pingreatphiimplication}
$p \in \pgreat^{(i-1)}$ implies $\pigoodphi \tilde{C}_i \ket{\phi^{(i-1)}_p} = \tilde{C}_i \ket{\phi^{(i-1)}_p}$.
\item \label{enum:pnotingreatphiimplication}
$p \notin \pgreat^{(i-1)}$ implies $\pigoodphi \tilde{C}_i \ket{\phi^{(i-1)}_p} = 0$. 
\end{enumerate}
\end{lemma}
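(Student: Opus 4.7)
Both parts rest on two structural facts. First, \cref{prop:decompositionofCphigood} of \Cref{lem:combined} gives $\tilde{C}_i\ket{\phi^{(i-1)}_\good} = \ket{\phi^{(i)}_\good} + \ket{\phi^{(i)}_\bad}$, and by \cref{prop:phigoodandphibaddisjoint} of the same lemma the two summands have disjoint computational-basis support. Second, $\tilde{C}_i = \controlled{\tilde{O}_c}$ acts as a permutation on computational basis states of the address space, so $\tilde{C}_i\ket{\phi^{(i-1)}_p}$ is itself a single basis state and must therefore be \emph{either} $\phi$-good or $\phi$-bad — never a nontrivial mixture.

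For \cref{enum:pingreatphiimplication} I would simply unfold \Cref{def:pgreat}: membership yields some $p' \in \pgood^{(i)}$ with $\tilde{C}_i\ket{\phi^{(i-1)}_p} = \ket{\phi^{(i)}_{p'}}$. By \Cref{def:expandincompbasis}, this basis state appears with nonzero amplitude in the computational-basis expansion of $\ket{\phi^{(i)}_\good}$ and is therefore $\phi$-good. Then \cref{prop:phibadisbadandphigoodisgood} of \Cref{lem:combined} gives $\pigoodphi\ket{\phi^{(i)}_{p'}} = \ket{\phi^{(i)}_{p'}}$, which is the claim.

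For \cref{enum:pnotingreatphiimplication} I would argue by contradiction. Assume $\tilde{C}_i\ket{\phi^{(i-1)}_p}$ is not $\phi$-bad, so by the permutation remark it is $\phi$-good. Mirroring the coefficient-tracking step in the proof of \Cref{lem:alternativeconditionforpgreat}, I would use that $p \in \pgood^{(i-1)}$ to find a workspace basis state $\ket{q^{(i-1)}}$ with $\alpha^{(i-1)}_{p,q} \neq 0$ in the expansion of $\ket{\phi^{(i-1)}_\good}$; applying the permutation $\tilde{C}_i$ transports that nonzero amplitude to a basis state of $\tilde{C}_i\ket{\phi^{(i-1)}_\good} = \ket{\phi^{(i)}_\good} + \ket{\phi^{(i)}_\bad}$ whose address-space component is $\tilde{C}_i\ket{\phi^{(i-1)}_p}$. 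Because that address component is $\phi$-good and the good/bad supports are disjoint, the nonzero amplitude must lie in $\ket{\phi^{(i)}_\good}$, giving some $p' \in \pgood^{(i)}$ with $\tilde{C}_i\ket{\phi^{(i-1)}_p} = \ket{\phi^{(i)}_{p'}}$ — contradicting $p \notin \pgreat^{(i-1)}$. Hence $\tilde{C}_i\ket{\phi^{(i-1)}_p}$ must be $\phi$-bad, and \cref{prop:phibadisbadandphigoodisgood} of \Cref{lem:combined} together with $\pibadphi\pigoodphi = 0$ yields $\pigoodphi \tilde{C}_i\ket{\phi^{(i-1)}_p} = 0$.

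The only delicate point — and the main obstacle — is the coefficient-tracking step in part \cref{enum:pnotingreatphiimplication}: because $\tilde{C}_i$ is controlled on an address workspace qubit, I must verify that among the workspace configurations appearing at index $p$ in $\ket{\phi^{(i-1)}_\good}$ there is one whose control bit activates the oracle, so that the resulting image really has address component equal to $\tilde{C}_i\ket{\phi^{(i-1)}_p}$ in the address-permutation sense used by \Cref{def:pgreat}. In the degenerate case where every active configuration has the control bit off, the oracle acts as the identity on the address register at $p$, which immediately places $p \in \pgreat^{(i-1)}$ with $p' = p$ and is therefore inconsistent with the hypothesis of part \cref{enum:pnotingreatphiimplication}; so this edge case cannot occur and the argument closes.
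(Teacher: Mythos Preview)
Your argument is correct and follows essentially the same route as the paper: unfold \Cref{def:pgreat} for part~\ref{enum:pingreatphiimplication}, and for part~\ref{enum:pnotingreatphiimplication} use that $\tilde{C}_i$ permutes computational basis states, that $\tilde{C}_i\ket{\phi^{(i-1)}_\good} = \ket{\phi^{(i)}_\good} + \ket{\phi^{(i)}_\bad}$ (\cref{prop:decompositionofCphigood} of \Cref{lem:combined}), and that the two summands have disjoint support (\cref{prop:phigoodandphibaddisjoint}) to force $\tilde{C}_i\ket{\phi^{(i-1)}_p}$ into the $\phi$-bad subspace. The paper phrases part~\ref{enum:pnotingreatphiimplication} directly rather than by contradiction, but the logical content is identical.

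Your final paragraph is unnecessary. The expression $\tilde{C}_i\ket{\phi^{(i-1)}_p}$ already appears in \Cref{def:pgreat} and in the lemma statement itself as a single computational basis state in the address space; the paper treats it that way throughout (see also its use in \cref{eq:expansionofCphigood}). Under that convention there is no coefficient-tracking or control-bit case analysis to perform: the permutation property, the decomposition, and disjointness are all that is needed. Your attempted resolution of the ``degenerate case'' is also not quite complete as written (you would still need to argue that $\ket{\phi^{(i-1)}_p}$ lands in the support of $\ket{\phi^{(i)}_\good}$, which requires another appeal to disjoint supports), but since the concern itself is moot under the paper's convention, you can simply drop that paragraph.
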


\begin{proof}
We separately show \cref{enum:pingreatphiimplication,enum:pnotingreatphiimplication} using converse arguments.

\begin{enumerate}
\item 
Suppose that $p \in \pgreat^{(i-1)}$. Then, by \Cref{def:pgreat}, there is some $p' \in \pgood^{(i)}$ such that $\tilde{C}_i\ket{\phi^{(i-1)}_p} = \ket{\phi^{(i)}_{p'}}$. 
Thus, the computational basis state $\tilde{C}_i\ket{\phi^{(i-1)}_p}$ is in the support of $\ket{\phi^{(i)}_{\good}}$ by \Cref{def:expandincompbasis}. Therefore, by
\cref{prop:phibadisbadandphigoodisgood} of \Cref{lem:combined}, $\pigoodphi \tilde{C}_i\ket{\phi^{(i-1)}_p} = \tilde{C}_i\ket{\phi^{(i-1)}_p}$.

\item
Suppose that $p \not \in \pgreat^{(i-1)}$. Then, by \Cref{def:pgreat}, $\tilde{C}_i\ket{\phi^{(i-1)}_p} \neq \ket{\phi^{(i)}_{p'}}$ for all $p' \in \pgood^{(i)}$. Since $\ket{\phi^{(i)}_{p'}}$ and $\tilde{C}_i\ket{\phi^{(i-1)}_p}$ are computational basis states (in the vertex space), it means that $\bra{\phi^{(i-1)}_p}\tilde{C}^\dagger_i \ket{\phi^{(i)}_{p'}} = 0$ for all $p' \in \pgood^{(i)}$. By \Cref{def:expandincompbasis}, we know that $\ket{\phi^{(i)}_{\good}}$ is supported only on states in  $\left\{\ket{\phi^{(i)}_{p'}}: p' \in \pgood^{(i)}\right\}$. Thus, $\tilde{C}_i \ket{\phi^{(i-1)}_p}$ is not in the support of $\ket{\phi_\good^{(i)}}$. But since $\tilde{C}_i$ is a permutation of computational basis states and $p \in \pgood^{(i-1)}$, it follows that $\tilde{C}_i \ket{\phi^{(i-1)}_p}$ is in the support of $\tilde{C}_i \ket{\phi_\good^{(i-1)}}$, which is equal to $\ket{\phi_\good^{(i)}} + \ket{\phi_\bad^{(i)}}$ by \cref{prop:decompositionofCphigood} of \Cref{lem:combined}. This means that, by \cref{prop:phigoodandphibaddisjoint} of \Cref{lem:combined}, $\tilde{C}_i \ket{\phi^{(i-1)}_p}$ is in the support of $\ket{\phi_\bad^{(i)}}$. Therefore, by
\cref{prop:phibadisbadandphigoodisgood} of \Cref{lem:combined}, $\pibadphi \tilde{C}_i \ket{\phi^{(i-1)}_p} = \tilde{C}_i \ket{\phi^{(i-1)}_p}$. Since $\pigoodphi$ and $\pibadphi$ are orthogonal projectors, we conclude that $\pigoodphi \tilde{C}_i \ket{\phi^{(i-1)}_p} = 0$. 
\qedhere
\end{enumerate}
\end{proof}

Based on \Cref{lem:pgreatphiimplication}, we derive another condition for membership in $\pgreat$, which is analogous to the one in \Cref{lem:pgreatphiimplication}. These conditions help us establish \Cref{lem:O(phigreat(i))=phigood(i+1),lem:O(psigreat(i))=psigood(i+1)}, respectively.

\begin{lemma} \label{lem:pgreatpsiimplication}
Let $i \in [p(n)]$, $p \in \pgood^{(i-1)}$. Suppose that $C_i = \controlled{O_c}$ for some $c \in \mathcal{C}$. Then
\begin{enumerate}
\item \label{enum:pingreatpsiimplication}
$p \in \pgreat^{(i-1)}$ implies $\pigoodpsi C_i \ket{\psi^{(i-1)}_p} = C_i \ket{\psi^{(i-1)}_p}$.
\item \label{enum:pnotingreatpsiimplication}
$p \notin \pgreat^{(i-1)}$ implies $\pigoodpsi C_i \ket{\psi^{(i-1)}_p} = 0$.
\end{enumerate}
\end{lemma}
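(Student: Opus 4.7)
The plan is to deduce Lemma \ref{lem:pgreatpsiimplication} from its address-space counterpart Lemma \ref{lem:pgreatphiimplication}, by translating $\phi$-good/bad information into $\psi$-good/bad information via two bridges: the definitional equivalence that a computational basis state $\ket{\phi}$ in the address space is $\phi$-good exactly when $L\ket{\phi}$ is $\psi$-good (\Cref{def:phi-badandphi-good}), and the commutation identity $L\tilde{C}_i\ket{q^{(i-1)}}\ket{\phi^{(i-1)}_p}=C_i\ket{q^{(i-1)}}\ket{\psi^{(i-1)}_p}$ supplied by \Cref{lem:LC(phi)=C(psi)}. With these in hand, each implication becomes a short chain. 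I opt for this reduction over mimicking the proof of \Cref{lem:pgreatphiimplication} directly because that proof crucially uses the expansion of $\ket{\phi^{(i)}_\good}$ in terms of the $\ket{\phi^{(i)}_{p'}}$, and as the excerpt explicitly warns, we do not yet know that the $\ket{\psi^{(i)}_{p'}}$ are in the support of $\ket{\psi^{(i)}_\good}$.

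For part~\ref{enum:pingreatpsiimplication}, assume $p\in\pgreat^{(i-1)}$. Then \Cref{lem:pgreatphiimplication}(\ref{enum:pingreatphiimplication}) gives $\pigoodphi\tilde{C}_i\ket{\phi^{(i-1)}_p}=\tilde{C}_i\ket{\phi^{(i-1)}_p}$, so $\tilde{C}_i\ket{\phi^{(i-1)}_p}$ is a $\phi$-good computational basis state by \Cref{def:piphi}. By \Cref{def:phi-badandphi-good}, $L\tilde{C}_i\ket{\phi^{(i-1)}_p}$ then contains neither the $\exit$ nor a subgraph within one edge of a cycle, so it is $\psi$-good. \Cref{lem:LC(phi)=C(psi)} (with the implicit workspace register as in \Cref{lem:pgreatphiimplication}) identifies this state with $C_i\ket{\psi^{(i-1)}_p}$, and \Cref{def:piphi} together with \cref{prop:psibadisbadandpsigoodisgood} of \Cref{lem:combined} yields $\pigoodpsi C_i\ket{\psi^{(i-1)}_p}=C_i\ket{\psi^{(i-1)}_p}$.

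For part~\ref{enum:pnotingreatpsiimplication}, assume $p\notin\pgreat^{(i-1)}$. Then \Cref{lem:pgreatphiimplication}(\ref{enum:pnotingreatphiimplication}) gives $\pigoodphi\tilde{C}_i\ket{\phi^{(i-1)}_p}=0$; since $\pigoodphi+\pibadphi=I$, this forces $\pibadphi\tilde{C}_i\ket{\phi^{(i-1)}_p}=\tilde{C}_i\ket{\phi^{(i-1)}_p}$, so $\tilde{C}_i\ket{\phi^{(i-1)}_p}$ is a $\phi$-bad computational basis state. By \Cref{def:phi-badandphi-good}, the image $L\tilde{C}_i\ket{\phi^{(i-1)}_p}$ either contains the $\exit$ or is within one edge of a cycle, so it is $\psi$-bad. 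Using \Cref{lem:LC(phi)=C(psi)} to rewrite it as $C_i\ket{\psi^{(i-1)}_p}$ and invoking \Cref{def:piphi} gives $\pigoodpsi C_i\ket{\psi^{(i-1)}_p}=0$. The only bookkeeping nuisance I anticipate is reconciling the workspace register that appears explicitly in \Cref{lem:LC(phi)=C(psi)} with its suppressed counterpart in the statement of the present lemma; this is handled by the same convention already in effect for \Cref{lem:pgreatphiimplication} (taking the control qubit in the state that activates the controlled oracle), and presents no genuine obstacle.
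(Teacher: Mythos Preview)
Your proof is correct and follows essentially the same approach as the paper: both parts reduce to \Cref{lem:pgreatphiimplication}, use \Cref{def:phi-badandphi-good} to translate $\phi$-good/bad into $\psi$-good/bad for $L\tilde{C}_i\ket{\phi^{(i-1)}_p}$, and invoke \Cref{lem:LC(phi)=C(psi)} to identify that state with $C_i\ket{\psi^{(i-1)}_p}$. Your remark about the workspace register convention and your explicit citation of \cref{prop:psibadisbadandpsigoodisgood} are additional but do not change the substance.
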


\begin{proof}
We prove \cref{enum:pingreatpsiimplication,enum:pnotingreatpsiimplication} using  \cref{enum:pingreatphiimplication,enum:pnotingreatphiimplication} of \Cref{lem:pgreatphiimplication}, respectively.

\begin{enumerate}
\item 
Suppose that $p \in \pgreat^{(i-1)}$. Note that $L\tilde{C}_i \ket{\phi^{(i-1)}_p} = C_i \ket{\psi^{(i-1)}_p}$ by \Cref{lem:LC(phi)=C(psi)}. We know, by \cref{enum:pingreatphiimplication} of \Cref{lem:pgreatphiimplication}, that $\pigoodphi \tilde{C}_i \ket{\phi^{(i-1)}_p} = \tilde{C}_i \ket{\phi^{(i-1)}_p}$. Thus, by \Cref{def:phi-badandphi-good}, $\tilde{C}_i \ket{\phi^{(i-1)}_p}$ is a $\phi$-$\mathrm{good}$ state, which means that $C_i \ket{\psi^{(i-1)}_p}$ is a $\psi$-$\mathrm{good}$ state. By \Cref{def:piphi}, the desired statement follows. 

\item Suppose that $p \notin \pgreat^{(i-1)}$. We know, by \cref{enum:pnotingreatphiimplication} of \Cref{lem:pgreatphiimplication}, that $\pigoodphi \tilde{C}_i \ket{\phi^{(i-1)}_p} = 0$. Since $\pigoodphi$ and $\pibadphi$ are orthogonal projectors by \Cref{def:piphi}, $\pibadphi \tilde{C}_i \ket{\phi^{(i-1)}_p} = \tilde{C}_i \ket{\phi^{(i-1)}_p}$. Similar to \cref{enum:pingreatpsiimplication}, we can deduce that $\tilde{C}_i \ket{\phi^{(i-1)}_p}$ is a $\phi$-$\mathrm{bad}$ state and $L\tilde{C}_i \ket{\phi^{(i-1)}_p} = C_i \ket{\psi^{(i-1)}_p}$ is a $\psi$-$\mathrm{bad}$ state. Hence, by \Cref{def:piphi}, $\pibadpsi C_i \ket{\psi^{(i-1)}_p} = C_i \ket{\psi^{(i-1)}_p}$, which implies the desired result. 
\qedhere
\end{enumerate}
\end{proof}

By the definition of $\pgreat^{(i-1)}$, there seems to be a bijective correspondence between elements of $\pgreat^{(i-1)}$ and $\pgood^{(i)}$. We make this precise by showing that the application of the oracle gate at the $i$th step to the state $\ket{\phi_\great^{(i-1)}}$, defined in \Cref{def:pgreat}, results in the state $\ket{\phi_\good^{(i)}}$ using \Cref{lem:pgreatphiimplication}.

\begin{lemma} \label{lem:O(phigreat(i))=phigood(i+1)}
Let $i \in [p(n)]$ and $c \in \mathcal{C}$. Suppose that $C_i = \controlled{O_c}$ for some $c \in \mathcal{C}$. Then $\ket{\phi_{\good}^{(i)}} = \tilde{C}_i \ket{\phi_\great^{(i-1)}}$.
\end{lemma}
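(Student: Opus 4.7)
The plan is to project both sides of the identity $\tilde{C}_i \ket{\phi_{\good}^{(i-1)}} = \ket{\phi_{\good}^{(i)}} + \ket{\phi_{\bad}^{(i)}}$ (which is \cref{prop:decompositionofCphigood} of \Cref{lem:combined}) onto the $\phi$-$\GOOD$ subspace, and then recognize the result as $\tilde{C}_i \ket{\phi_{\great}^{(i-1)}}$. Applying $\pigoodphi$, the $\ket{\phi_{\bad}^{(i)}}$ term drops out by \cref{prop:phibadisbadandphigoodisgood} of \Cref{lem:combined} together with orthogonality of $\pigoodphi$ and $\pibadphi$, while $\pigoodphi$ fixes $\ket{\phi_{\good}^{(i)}}$, so
\begin{equation*}
    \pigoodphi \tilde{C}_i \ket{\phi_{\good}^{(i-1)}} \;=\; \ket{\phi_{\good}^{(i)}}.
\end{equation*}
It thus suffices to show that the left-hand side equals $\tilde{C}_i \ket{\phi_{\great}^{(i-1)}}$.

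For that equality, I would expand $\ket{\phi_{\good}^{(i-1)}} = \sum_{p \in \pgood^{(i-1)}} \sum_q \alpha^{(i-1)}_{p,q} \ket{q^{(i-1)}} \ket{\phi_p^{(i-1)}}$ using \Cref{def:expandincompbasis} and split the outer sum along the partition $\pgood^{(i-1)} = \pgreat^{(i-1)} \sqcup (\pgood^{(i-1)} \setminus \pgreat^{(i-1)})$. By linearity, \Cref{lem:pgreatphiimplication} then controls each term: for $p \in \pgreat^{(i-1)}$, \cref{enum:pingreatphiimplication} gives $\pigoodphi \tilde{C}_i \ket{\phi_p^{(i-1)}} = \tilde{C}_i \ket{\phi_p^{(i-1)}}$, so the surviving contribution is exactly $\tilde{C}_i \ket{\phi_{\great}^{(i-1)}}$ by \Cref{def:pgreat}; for $p \in \pgood^{(i-1)} \setminus \pgreat^{(i-1)}$, \cref{enum:pnotingreatphiimplication} gives $\pigoodphi \tilde{C}_i \ket{\phi_p^{(i-1)}} = 0$, killing the rest. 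Assembling these two cases yields the claimed identity, and combined with the first paragraph this proves $\ket{\phi_{\good}^{(i)}} = \tilde{C}_i \ket{\phi_{\great}^{(i-1)}}$.

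The only delicate step is the bookkeeping around the controlled-oracle structure of $\tilde{C}_i = \controlled{\tilde{O}_c}$: the gate couples a workspace qubit to two address registers, whereas $\pigoodphi$ projects only on the address space. Because $\pigoodphi$ acts as the identity on the workspace, it commutes with the workspace-side action of $\tilde{C}_i$, which legitimizes factoring the workspace label $\ket{q^{(i-1)}}$ through the calculation and applying \Cref{lem:pgreatphiimplication} register-wise on the address factor. Beyond this observation, the proof is a direct linearity plus case-analysis combination of the two halves of \Cref{lem:pgreatphiimplication}; the substantive content has already been discharged by that lemma together with \Cref{lem:combined}.
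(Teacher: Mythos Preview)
Your proof is correct and follows essentially the same route as the paper: both apply $\pigoodphi$ to the identity $\tilde{C}_i\ket{\phi_{\good}^{(i-1)}}=\ket{\phi_{\good}^{(i)}}+\ket{\phi_{\bad}^{(i)}}$ from \cref{prop:decompositionofCphigood} of \Cref{lem:combined}, expand $\ket{\phi_{\good}^{(i-1)}}$ in the computational basis via \Cref{def:expandincompbasis}, split along $\pgreat^{(i-1)}$ versus its complement, and invoke the two halves of \Cref{lem:pgreatphiimplication} to kill the non-great terms. Your remark about the workspace control bit is a valid clarification of a notational shortcut the paper takes silently.
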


\begin{proof}
Expanding $\tilde{C}_i\ket{\phi^{(i-1)}_{\good}}$ using \Cref{def:expandincompbasis,def:pgreat} gives
\begin{equation}
\label{eq:expansionofCphigood}
    \tilde{C}_i\ket{\phi^{(i-1)}_{\good}} = \sum_{p \in \pgreat^{(i-1)}} \sum_q \alpha^{(i-1)}_{p, q} \ket{q^{(i-1)}} \tilde{C}_i\ket{\phi^{(i-1)}_p} + \sum_{\substack{p \in \pgood^{(i-1)} \\ p \not \in \pgreat^{(i-1)}}} \sum_q \alpha^{(i-1)}_{p, q} \ket{q^{(i-1)}} \tilde{C}_i\ket{\phi^{(i-1)}_p}.
\end{equation}

Combining this with \cref{prop:decompositionofCphigood} of \Cref{lem:combined}, we find
\begin{equation} \label{eq:expansionofphibad+phigood}
    \ket{\phi^{(i)}_{\good}} + \ket{\phi^{(i)}_{\bad}} = \sum_{p \in \pgreat^{(i-1)}} \sum_q \alpha^{(i-1)}_{p, q} \ket{q^{(i-1)}} \tilde{C}_i\ket{\phi^{(i-1)}_p} + \sum_{\substack{p \in \pgood^{(i-1)} \\ p \not \in \pgreat^{(i-1)}}} \sum_q \alpha^{(i-1)}_{p, q} \ket{q^{(i-1)}} \tilde{C}_i\ket{\phi^{(i-1)}_p}.
\end{equation}

Applying $\pigoodphi$ on both sides, and using \cref{prop:phibadisbadandphigoodisgood} of \Cref{lem:combined} and \Cref{lem:pgreatphiimplication}, we obtain
\begin{equation}
\label{eq:applicationofpigoodphi}
    \ket{\phi^{(i)}_{\good}} = \sum_{p \in \pgreat^{(i-1)}} \sum_q \alpha^{(i-1)}_{p, q} \ket{q^{(i-1)}} \tilde{C}_i\ket{\phi^{(i-1)}_p} = \tilde{C}_i \ket{\phi^{(i-1)}_{\great}},
\end{equation}
where the last equality follows by \Cref{def:pgreat}.
\end{proof}

Next, we state and outline the proof of the vertex space analog of \Cref{lem:O(phigreat(i))=phigood(i+1)} in \Cref{lem:O(psigreat(i))=psigood(i+1)}. Note that this lemma uses the induction hypothesis of \Cref{lem:psigood=L(phigood)} as a premise since the definition of the state $\ket{\psi_\great^{(i)}}$ in \Cref{def:pgreat} is not derived from the definition of $\ket{\psi_\good^{(i)}}$, unlike the definition of $\ket{\phi_\great^{(i)}}$, which is derived from the expansion of $\ket{\phi_\good^{(i)}}$ in \Cref{def:expandincompbasis}.

\begin{lemma}
\label{lem:O(psigreat(i))=psigood(i+1)}
Let $i \in [p(n)]$ and $c \in \mathcal{C}$. Suppose that $C_i = \controlled{O_c}$ for some $c \in \mathcal{C}$ and $L\ket{\phi_{\good}^{(i-1)}} = \ket{\psi_{\good}^{(i-1)}}$. Then $\ket{\psi_{\good}^{(i)}} = C_i\ket{\psi_\great^{(i-1)}}$.
\end{lemma}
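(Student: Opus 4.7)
The plan is to mirror the proof of \Cref{lem:O(phigreat(i))=phigood(i+1)} almost verbatim, but carried out in the vertex space rather than the address space. The bridge between the two spaces is the induction hypothesis $L\ket{\phi_{\good}^{(i-1)}} = \ket{\psi_{\good}^{(i-1)}}$ together with the definition $\ket{\psi_p^{(i-1)}} \defeq L\ket{\phi_p^{(i-1)}}$ from \Cref{def:psi_p}, which together let me convert a computational-basis expansion of $\ket{\phi_\good^{(i-1)}}$ into one of $\ket{\psi_\good^{(i-1)}}$ with the same coefficients $\alpha^{(i-1)}_{p,q}$.

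Concretely, I would first apply $L$ (which by \Cref{def:DefL} is the identity on the workspace and acts as $L'B^\mathsf{inv}$ on each vertex register) to the expansion of $\ket{\phi_\good^{(i-1)}}$ from \Cref{def:expandincompbasis} and use the induction hypothesis together with \Cref{def:psi_p} to obtain
\begin{equation}
    \ket{\psi_\good^{(i-1)}} = \sum_{p \in \pgood^{(i-1)}} \sum_q \alpha^{(i-1)}_{p,q} \ket{q^{(i-1)}} \ket{\psi_p^{(i-1)}}.
\end{equation}
Second, I would apply $C_i$ and split the sum into terms with $p \in \pgreat^{(i-1)}$ and terms with $p \in \pgood^{(i-1)} \setminus \pgreat^{(i-1)}$, in complete analogy with \cref{eq:expansionofCphigood}. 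Third, I would apply $\pigoodpsi$: on the left, \cref{prop:decompositionofCpsigood} and \cref{prop:phibadisbadandphigoodisgood} of \Cref{lem:combined} give $\pigoodpsi C_i\ket{\psi_\good^{(i-1)}} = \pigoodpsi(\ket{\psi_\good^{(i)}}+\ket{\psi_\bad^{(i)}}) = \ket{\psi_\good^{(i)}}$; on the right, \Cref{lem:pgreatpsiimplication} leaves the $\pgreat^{(i-1)}$ terms invariant and annihilates the $\pgood^{(i-1)}\setminus\pgreat^{(i-1)}$ terms. Finally, the surviving sum over $\pgreat^{(i-1)}$ is exactly $C_i\ket{\psi_\great^{(i-1)}}$ by \Cref{def:pgreat}, which finishes the argument.

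There is essentially no substantive obstacle; the proof is a structural transcription of \Cref{lem:O(phigreat(i))=phigood(i+1)}, with \Cref{lem:pgreatpsiimplication} playing the role that \Cref{lem:pgreatphiimplication} played in the address-space version. The only mild subtlety worth being careful about is the first step: one must justify that applying $L$ to the computational-basis expansion of $\ket{\phi_\good^{(i-1)}}$ produces precisely the claimed expansion of $\ket{\psi_\good^{(i-1)}}$ with the same amplitudes. This follows immediately from the linearity of $L$ on the address space, the fact that it acts as the identity on the workspace, and \Cref{def:psi_p}; one does not even need $L$ to be injective on $\{\ket{\phi_p^{(i-1)}} : p \in \pgood^{(i-1)}\}$ (though \Cref{lem:pbadiscycle} in fact guarantees this), since the identity we need is one between vectors, not between formal expansions.
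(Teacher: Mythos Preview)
Your proposal is correct and follows exactly the same route as the paper's proof, which is explicitly described there as a transcription of \Cref{lem:O(phigreat(i))=phigood(i+1)} using the hypothesis $L\ket{\phi_{\good}^{(i-1)}} = \ket{\psi_{\good}^{(i-1)}}$ together with \Cref{def:expandincompbasis}, \cref{prop:decompositionofCpsigood,prop:psibadisbadandpsigoodisgood} of \Cref{lem:combined}, and \Cref{lem:pgreatpsiimplication}. One trivial cross-reference slip: where you invoke \cref{prop:phibadisbadandphigoodisgood} to conclude $\pigoodpsi(\ket{\psi_\good^{(i)}}+\ket{\psi_\bad^{(i)}})=\ket{\psi_\good^{(i)}}$, you mean its $\psi$-analogue, \cref{prop:psibadisbadandpsigoodisgood}.
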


\begin{proof}
The proof is similar to that of \Cref{lem:O(phigreat(i))=phigood(i+1)}, with the main difference being the use of the supposition that $L\ket{\phi_{\good}^{(i-1)}} = \ket{\psi_{\good}^{(i-1)}}$ and \Cref{cor:alternatedefinitionofpgreat} along with \Cref{def:expandincompbasis} to establish an equation analogous to \cref{eq:expansionofCphigood}. Equations analogous to \cref{eq:expansionofphibad+phigood,eq:applicationofpigoodphi} are derived using \cref{prop:decompositionofCpsigood} of \Cref{lem:combined}, and \cref{prop:psibadisbadandpsigoodisgood} of \Cref{lem:combined} and \Cref{lem:pgreatpsiimplication}, respectively.
\end{proof}

The above analysis helps establish the following key lemma, which states that the states $\ket{\phi_{\good}^{(i)}}$ and $\ket{\psi_{\good}^{(i)}}$ are related by the mapping $L$. Intuitively, the oracle $\tilde{O}$ based on the address tree $\mathcal{T}$ can faithfully simulate (modulo mapping $L$) the portion of the state $\ket{\psi_\Al^{(i)}}$ of the algorithm $\mathcal{A}$ that does not encounter the $\exit$ or a cycle.

\begin{lemma} \label{lem:psigood=L(phigood)}
For all $i \in [p(n)] \cup \{0\}$, $L\ket{\phi_{\good}^{(i)}} = \ket{\psi_{\good}^{(i)}}$.
\end{lemma}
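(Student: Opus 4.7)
The plan is to proceed by induction on $i$.

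For the base case $i=0$, recall from \Cref{def:phigood} and \Cref{def:genuine,def:transcript} that $\ket{\phi_{\good}^{(0)}} = \ket{\phi_{\mathrm{initial}}} = \ket{\emptystring} \otimes \ket{0^{2p(n)}}^{\otimes(p(n)-1)} \otimes \ket{0}_{\mathrm{addressworkspace}}$ and $\ket{\psi_{\good}^{(0)}} = \ket{\psi_{\mathrm{initial}}} = \ket{\entrance} \otimes \ket{0^{2n}}^{\otimes(p(n)-1)} \otimes \ket{0}_{\workspace}$. A direct computation from \Cref{def:DefL,def:Bmapping} together with \Cref{alg:lprime} shows that $L'B^{\mathsf{inv}}(\emptystring) = L'(\emptyaddress) = \entrance$ and $L'B^{\mathsf{inv}}(0^{2p(n)}) = L'(\zeros) = 0^{2n}$, so $L\ket{\phi_{\good}^{(0)}} = \ket{\psi_{\good}^{(0)}}$.

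For the inductive step, suppose $L\ket{\phi_{\good}^{(i-1)}} = \ket{\psi_{\good}^{(i-1)}}$ and consider the $i$th gate $C_i$. Expand $\ket{\phi_{\good}^{(i-1)}} = \sum_{p \in \pgood^{(i-1)}} \sum_q \alpha^{(i-1)}_{p,q} \ket{q^{(i-1)}}\ket{\phi^{(i-1)}_p}$ as in \Cref{def:expandincompbasis}. By \Cref{def:psi_p} and linearity of $L$, we have $L\ket{\phi_{\good}^{(i-1)}} = \sum_{p,q} \alpha^{(i-1)}_{p,q} \ket{q^{(i-1)}}\ket{\psi^{(i-1)}_p}$, so the induction hypothesis yields
\begin{equation}
    \ket{\psi_{\good}^{(i-1)}} = \sum_{p \in \pgood^{(i-1)}}\sum_q \alpha^{(i-1)}_{p,q} \ket{q^{(i-1)}}\ket{\psi^{(i-1)}_p}.
\end{equation}

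We split into two cases. If $C_i$ is a non-oracle genuine gate, then \Cref{lem:G(phigood(i))=phigood(i+1)} gives $\ket{\phi_{\good}^{(i)}} = \tilde{C}_i\ket{\phi_{\good}^{(i-1)}}$ and $\ket{\psi_{\good}^{(i)}} = C_i\ket{\psi_{\good}^{(i-1)}}$. Applying $L$ to the expansion of $\tilde{C}_i \ket{\phi_{\good}^{(i-1)}}$ term by term and invoking \Cref{lem:LC(phi)=C(psi)} on each basis contribution gives
\begin{equation}
    L\ket{\phi_{\good}^{(i)}} = \sum_{p,q} \alpha^{(i-1)}_{p,q} L\tilde{C}_i \ket{q^{(i-1)}}\ket{\phi^{(i-1)}_p} = \sum_{p,q} \alpha^{(i-1)}_{p,q} C_i \ket{q^{(i-1)}}\ket{\psi^{(i-1)}_p} = C_i\ket{\psi_{\good}^{(i-1)}} = \ket{\psi_{\good}^{(i)}}.
\end{equation}
If instead $C_i = \controlled{O_c}$ for some $c \in \mathcal{C}$, then \Cref{lem:O(phigreat(i))=phigood(i+1)} gives $\ket{\phi_{\good}^{(i)}} = \tilde{C}_i\ket{\phi_{\great}^{(i-1)}}$. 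Since the induction hypothesis is exactly the premise of \Cref{lem:O(psigreat(i))=psigood(i+1)}, that lemma gives $\ket{\psi_{\good}^{(i)}} = C_i\ket{\psi_{\great}^{(i-1)}}$. Using the expansions of $\ket{\phi_{\great}^{(i-1)}}$ and $\ket{\psi_{\great}^{(i-1)}}$ from \Cref{def:pgreat} and again applying \Cref{lem:LC(phi)=C(psi)} term by term,
\begin{equation}
    L\ket{\phi_{\good}^{(i)}} = L\tilde{C}_i\ket{\phi_{\great}^{(i-1)}} = \sum_{p \in \pgreat^{(i-1)}}\sum_q \alpha^{(i-1)}_{p,q} C_i \ket{q^{(i-1)}}\ket{\psi^{(i-1)}_p} = C_i\ket{\psi_{\great}^{(i-1)}} = \ket{\psi_{\good}^{(i)}}.
\end{equation}

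The main subtlety is making sure the inductive flow is valid in the oracle case: \Cref{lem:O(psigreat(i))=psigood(i+1)} requires the hypothesis $L\ket{\phi_{\good}^{(i-1)}} = \ket{\psi_{\good}^{(i-1)}}$, which is precisely the inductive hypothesis being assumed at stage $i$, so the argument does not circularly depend on the statement at stage $i$. All other manipulations are straightforward consequences of linearity of $L$, the computational-basis expansions from \Cref{def:expandincompbasis,def:pgreat}, and the gate-by-gate commutation relation in \Cref{lem:LC(phi)=C(psi)}.
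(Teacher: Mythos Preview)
Your proof is correct and follows essentially the same approach as the paper's own proof: induction on $i$, with the base case handled by direct computation of $L$ on the initial states, and the inductive step split into the non-oracle case (via \Cref{lem:G(phigood(i))=phigood(i+1)}) and the oracle case (via \Cref{lem:O(phigreat(i))=phigood(i+1)} and \Cref{lem:O(psigreat(i))=psigood(i+1)}), both glued together using the term-by-term commutation of \Cref{lem:LC(phi)=C(psi)}. Your explicit remark that the premise of \Cref{lem:O(psigreat(i))=psigood(i+1)} is exactly the inductive hypothesis at stage $i-1$, and hence non-circular, is a nice clarification that the paper leaves implicit.
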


\begin{proof}
We prove this lemma by induction on $i$.

For $i=0$, note that
\begin{equation}
    \ket{\phi_{\good}^{(0)}} = \ket{\phi_{\Al}^{(0)}} - \ket{\phi_{\allbad}^{(0)}} = \ket{\phi_{\Al}^{(0)}} = \ket{\emptystring} \otimes \ket{0^{2p(n)}}^{\otimes(p(n)-1)} \otimes \ket{0}_{\workspace}
\end{equation}
where the first equality follows from \cref{prop:equivrepofphigood} of \Cref{lem:combined}, the second equality follows from \Cref{def:phigood}, and the last equality follows from \Cref{def:transcript}.
Similarly,
\begin{equation}
    \ket{\psi_{\good}^{(0)}} = \ket{\psi_{\Al}^{(0)}} - \ket{\psi_{\allbad}^{(0)}} = \ket{\psi_{\Al}^{(0)}} = \ket{\entrance} \otimes \ket{0^{2n}}^{\otimes(p(n)-1)} \otimes \ket{0}_{\workspace}
\end{equation}
where the first equality follows from \cref{prop:equivrepofpsigood} of \Cref{lem:combined}, the second equality follows from \Cref{def:phigood}, and the last equality follows from \Cref{def:genuinecircuit}.
The statement of the lemma for $i=0$ follows by noticing that $L(\emptystring) = \entrance$ and $L(0^{2n})=0^{2p(n)}$ by \cref{eq:L(s)=v}.

Now, assume that $L\ket{\phi_{\good}^{(i-1)}} = \ket{\psi_{\good}^{(i-1)}}$ for some $i \in [p(n)]$. Then, we have two cases depending on $C_i$:
\begin{enumerate}
\item $C_i$ is a genuine non-oracle gate. Then
\begin{align}
    L\ket{\phi_{\good}^{(i)}} &= L\tilde{C}_i\ket{\phi_{\good}^{(i-1)}} \\
    &= \sum_{p \in \pgood^{(i-1)}} \sum_q \alpha^{(i-1)}_{p, q} L\tilde{C}_i \ket{q^{(i-1)}} \ket{\phi^{(i-1)}_p} \\
    &= \sum_{p \in \pgood^{(i-1)}} \sum_q \alpha^{(i-1)}_{p, q} C_i \ket{q^{(i-1)}} \ket{\psi^{(i-1)}_p} \\
    &= C_i \ket{\psi_{\good}^{(i-1)}} \\
    &= \ket{\psi_{\good}^{(i)}}
\end{align}
where the first and last steps follow from \cref{enum:G(phigood(i))=phigood(i+1),enum:G(psigood(i))=psigood(i+1)} of \Cref{lem:G(phigood(i))=phigood(i+1)}, respectively; the second follows from \Cref{def:expandincompbasis}; the third follows from \Cref{lem:LC(phi)=C(psi)}; and the fourth follows from the induction hypothesis.

\item $C_i = \controlled{O_c}$ for some $c \in \mathcal{C}$. Then
\begin{align}
    L\ket{\phi_{\good}^{(i)}} &= L\tilde{C}_i\ket{\phi_{\great}^{(i-1)}} \\
    &= \sum_{p \in \pgreat^{(i-1)}} \sum_q \alpha^{(i-1)}_{p, q} L\tilde{C}_i \ket{q^{(i-1)}} \ket{\phi^{(i-1)}_p} \\
    &= \sum_{p \in \pgreat^{(i-1)}} \sum_q \alpha^{(i-1)}_{p, q} C_i \ket{q^{(i-1)}} \ket{\psi^{(i-1)}_p} \\
    &= C_i\ket{\psi_{\great}^{(i-1)}} \\
    &= \ket{\psi_{\good}^{(i)}}
\end{align}
where the first step follows from \Cref{lem:O(phigreat(i))=phigood(i+1)}, the second and fourth follow from \Cref{def:pgreat}, the third follows from \Cref{lem:LC(phi)=C(psi)}, and the last follows from the induction hypothesis and \Cref{lem:O(psigreat(i))=psigood(i+1)}. 
\qedhere
\end{enumerate}
\end{proof}

Finally, we show the following relationship between the norms of $\ket{\phi^{(i)}_{\good}}$ and $\ket{\psi^{(i)}_{\good}}$, which will be very useful in bounding the probability of success of Algorithm $\mathcal{A}$ in \Cref{subsec:analysisonall}.

\begin{lemma} \label{lem:||psigood||=||phigood||}
Let $i \in [p(n)] \cup \{0\}$. Then $\norm{\ket{\phi^{(i)}_{\good}}} = \norm{\ket{\psi^{(i)}_{\good}}}$.
\end{lemma}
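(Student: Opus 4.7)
The plan is to apply Lemma \ref{lem:psigood=L(phigood)} together with the computational basis expansion of Definition \ref{def:expandincompbasis}, and then argue that the mapping $L$, while not globally norm-preserving, acts isometrically on the specific state $\ket{\phi^{(i)}_{\good}}$ because the computational basis elements it is supported on are mapped to distinct (hence orthogonal) computational basis elements in the vertex space.

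More concretely, I would first expand
\begin{equation}
    \ket{\phi_{\good}^{(i)}} = \sum_{p \in \pgood^{(i)}} \sum_q \alpha^{(i)}_{p,q} \ket{q^{(i)}} \ket{\phi_p^{(i)}}
\end{equation}
using Definition \ref{def:expandincompbasis}. Since distinct pairs $(p,q)$ correspond to orthogonal computational basis states $\ket{q^{(i)}}\ket{\phi_p^{(i)}}$ by construction, this immediately gives $\norm{\ket{\phi_{\good}^{(i)}}}^2 = \sum_{p,q} |\alpha^{(i)}_{p,q}|^2$.

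Next, I would invoke Lemma \ref{lem:psigood=L(phigood)} together with Definition \ref{def:psi_p} to obtain
\begin{equation}
    \ket{\psi_{\good}^{(i)}} = L\ket{\phi_{\good}^{(i)}} = \sum_{p \in \pgood^{(i)}} \sum_q \alpha^{(i)}_{p,q} \ket{q^{(i)}} \ket{\psi_p^{(i)}},
\end{equation}
noting that $L$ acts as the identity on the workspace. To conclude, I need the vectors $\{\ket{q^{(i)}}\ket{\psi_p^{(i)}}\}_{p,q}$ to form an orthonormal family indexed by the same $(p,q)$ pairs. Orthogonality across different $q$ is immediate since the $\ket{q^{(i)}}$ are distinct computational basis states of the workspace. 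For distinct $p, p' \in \pgood^{(i)}$ with a common $q$, the states $\ket{\phi_p^{(i)}}$ and $\ket{\phi_{p'}^{(i)}}$ are distinct computational basis states of the address space; Lemma \ref{lem:pbadiscycle} then guarantees that $\ket{\psi_p^{(i)}} \neq \ket{\psi_{p'}^{(i)}}$, so these are distinct computational basis states of the vertex space and therefore orthogonal. Combining these observations yields $\norm{\ket{\psi_{\good}^{(i)}}}^2 = \sum_{p,q} |\alpha^{(i)}_{p,q}|^2 = \norm{\ket{\phi_{\good}^{(i)}}}^2$.

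The main (and essentially only) technical point is the orthogonality step, which crucially depends on Lemma \ref{lem:pbadiscycle}: although the map $L$ can collapse distinct addresses to the same vertex in general (through cycles or the $\exit$), the restriction to indices in $\pgood^{(i)}$ rules out precisely those collisions. Everything else is bookkeeping on the computational-basis expansion, so I do not expect further obstacles.
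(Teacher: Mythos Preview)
Your proposal is correct and follows essentially the same approach as the paper: expand $\ket{\phi_{\good}^{(i)}}$ via Definition~\ref{def:expandincompbasis}, apply Lemma~\ref{lem:psigood=L(phigood)} to get the corresponding expansion of $\ket{\psi_{\good}^{(i)}}$, and use Lemma~\ref{lem:pbadiscycle} to ensure that distinct $\ket{\phi_p^{(i)}}$ map to distinct (hence orthogonal) $\ket{\psi_p^{(i)}}$. The paper's proof is organized slightly differently but uses the same ingredients in the same way.
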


\begin{proof}
It is clear, by \Cref{def:expandincompbasis}, that for $p, p' \in \pgood^{(i)}$ with $p \neq p'$, $\ket{\phi^{(i)}_p} \neq \ket{\phi^{(i)}_{p'}}$. Since $\ket{\phi^{(i)}_p}$ and $\ket{\phi^{(i)}_{p'}}$ are computational basis states, this means that $\bra{\phi^{(i)}_p}\ket{\phi^{(i)}_{p'}} = 0$ whenever $p \neq p'$. Notice that $\ket{\psi^{(i)}_p}$ and $\ket{\psi^{(i)}_{p'}}$ are in the support of $\ket{\psi_\good^{(i)}}$ by \Cref{lem:psigood=L(phigood)}. The contrapositive of \Cref{lem:pbadiscycle} implies that for $p \neq p'$, $\ket{\psi^{(i)}_p} \neq \ket{\psi^{(i)}_{p'}}$, which essentially means that $\bra{\psi^{(i)}_p}\ket{\psi^{(i)}_{p'}} = 0$ since $\ket{\psi^{(i)}_p}$ and $\ket{\psi^{(i)}_{p'}}$ are computational basis states. Combining these observations with \Cref{lem:psigood=L(phigood)}, we get
\begin{align}
    \norm{\ket{\psi_{\good}^{(i)}}} &= \norm{L\ket{\phi_{\good}^{(i)}}} \\
    &= \norm{\sum_{p \in \pgood^{(i)}} \sum_{q} \alpha^{(i)}_{p,q} \ket{q^{(i)}} \ket{\psi^{(i)}_p}} \\
    &= \sum_{p \in \pgood^{(i)}} \sum_{q} \left|\alpha^{(i)}_{p,q}\right|^2 \\
    &= \norm{\sum_{p \in \pgood^{(i)}} \sum_{q} \alpha^{(i)}_{p,q} \ket{q^{(i)}} \ket{\phi^{(i)}_p}} \\
    &= \norm{\ket{\phi_{\good}^{(i)}}}
\end{align}
as claimed.
\end{proof}

\subsection{The state is mostly good} \label{subsec:analysisonall}

In the remainder of this section, we conclude that it is hard for any rooted genuine quantum algorithm to find the $\exit$ (and hence, an $\entrance$--$\exit$ path). We  achieve this goal by bounding the mass of the quantum state $\ket{\psi_\Al}$ associated with any arbitrarily chosen rooted genuine quantum algorithm $\mathcal{A}$ that lies in the $\psi$-$\BAD$ subspace. We proceed by first using the result of \Cref{sec:3coloring} to bound the mass of the quantum state $\ket{\phi^{(i)}_\Al}$ that lies in the $\phi$-$\BAD$ subspace.

\begin{lemma} \label{lem:pibadphiissmall}
Let $i \in [p(n)] \cup \{0\}$. Then $\norm{\pibadphi \ket{\phi^{(i)}_\Al}}^2 \leq 4p(n)^4 \cdot 2^{-n/3}$.
\end{lemma}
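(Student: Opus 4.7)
The plan is to reduce the bound to the classical hardness result established in \Cref{sec:3coloring}, by constructing a classical query algorithm whose success probability for producing a rooted subgraph of $\mathcal{G}$ that contains the $\exit$ or is one edge away from containing a cycle is exactly $\norm{\pibadphi \ket{\phi^{(i)}_\Al}}^2$.

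First, I would highlight a key efficiency property of the transcript state. By \Cref{def:transcript}, constructing $\tilde{C}_{0,i}$ from $C_{0,i}$ requires only knowledge of $\cbad$, since every quantum gate in $\tilde{C}$ acts purely on the address space and the address workspace (the oracle gates $\controlled{O_c}$ are replaced by $\controlled{\tilde{\Oracle}_c}$, which access only the address tree). Because $\cbad$ can be determined using two classical queries to $O$, the full state $\ket{\phi^{(i)}_\Al} = \tilde{C}_{0,i}\ket{\phi_{\mathrm{initial}}}$ can be computed -- albeit in exponential time -- from only those two queries, and a computational basis state in the address space can be sampled with probability $|\bra{\phi}\ket{\phi^{(i)}_\Al}|^2$ using no additional queries.

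Next, I would define a classical query algorithm $\mathcal{B}$ that: first, computes $\cbad$ with two queries and then constructs the sampling distribution induced by $\ket{\phi^{(i)}_\Al}$; second, samples a basis state $\ket{\phi}$ in the address space according to $|\bra{\phi}\ket{\phi^{(i)}_\Al}|^2$; third, runs \Cref{alg:lprime} on each of the $p(n)$ address registers of $\ket{\phi}$, converting each stored address (of length at most $p(n)$) into the corresponding vertex label with at most $p(n)$ queries per register; and finally, outputs the resulting collection of vertex labels. In total, $\mathcal{B}$ issues at most $2 + p(n)^2 = O(p(n)^2)$ classical queries to $O$. I would then argue that $\mathcal{B}$ outputs a rooted subgraph containing the $\exit$ or a near-cycle with probability \emph{exactly} $\norm{\pibadphi \ket{\phi^{(i)}_\Al}}^2$: by \Cref{lem:transcriptstateisalwaysrooted} every basis state in the support of $\ket{\phi^{(i)}_\Al}$ is address-rooted, and the mapping analysis of \Cref{subsec:L'} (in particular \Cref{lem:addresstreecycle,lem:L'lambda=etaL'}) shows that $L$ sends address-rooted states to rooted subgraphs of $\mathcal{G}$; by \Cref{def:phi-badandphi-good,def:piphi}, the sample lies in $\phi$-$\BAD$ precisely when its image under $L$ lies in $\psi$-$\BAD$, which happens with probability $\norm{\pibadphi \ket{\phi^{(i)}_\Al}}^2$.

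Finally, I would apply the classical hardness theorem of \Cref{sec:3coloring}, which bounds the probability that any $q$-query classical algorithm in the 3-colored welded-tree model outputs a rooted subgraph containing the $\exit$ or a cycle. Substituting $q = O(p(n)^2)$ into that bound yields the claimed inequality $\norm{\pibadphi \ket{\phi^{(i)}_\Al}}^2 \leq 4 p(n)^4 \cdot 2^{-n/3}$. The main obstacle is ensuring that the statement proved in \Cref{sec:3coloring} covers, in a single unified bound, \emph{both} the event of finding the $\exit$ and the event of finding (or being within one edge of) a cycle, with the right polynomial dependence on $q$ so that the substitution produces the stated constants and exponent $2^{-n/3}$. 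A smaller but important point is that, although sampling from $\ket{\phi^{(i)}_\Al}$ takes exponential time, it uses no additional oracle queries beyond the two needed to learn $\cbad$, which is exactly what the reduction to a classical \emph{query} lower bound requires.
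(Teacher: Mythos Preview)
Your plan is essentially the same as the paper's: sample from the transcript state, apply $L$ to obtain a rooted subgraph, and invoke the classical hardness result of \Cref{sec:3coloring}. Two points deserve attention.

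First, the gap you flag as an ``obstacle'' is real and you do not close it. \Cref{thm:classical3-colohardnessmain} bounds the probability of finding the $\exit$ or a \emph{cycle}, whereas a $\phi$-bad state may only be \emph{one edge away} from a cycle. The paper resolves this with a simple extra step: after sampling and computing the subgraph, run a depth-$1$ exploration from each of its (at most $p(n)$) vertices, costing at most $p(n)$ additional queries; if the subgraph was one edge from a cycle, this finds the missing edge and hence an actual cycle. Without this step your reduction does not match the event covered by the classical theorem.

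Second, your accounting via a query budget $q=O(p(n)^2)$ is not how \Cref{thm:classical3-colohardnessmain} is parametrized. That theorem is stated for algorithms that sample a size-$p(n)$ subtree of $\mathcal{T}$ containing $\emptyaddress$ and output its subtree-embedding; the bound $4p(n)^4\cdot 2^{-n/3}$ depends on the subtree size, not on the raw number of oracle calls. Since every basis state in the support of $\ket{\phi^{(i)}_\Al}$ is address-rooted (\Cref{lem:transcriptstateisalwaysrooted}) and has $p(n)$ registers, your sampled state \emph{is} such a subtree (possibly padded with $\zeros$/$\noedgeaddress$), and the theorem applies directly with parameter $p(n)$; no substitution of $q$ is needed. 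The depth-$1$ extension in the previous paragraph enlarges the subtree by at most a constant factor, which is absorbed by the polynomial slack in the bound.
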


\begin{proof}
By \Cref{def:piphi}, each computational basis state in $\pibadphi \ket{\phi^{(i)}_\Al}$ is a $\phi$-bad state. Therefore, the $i$th step of the classical \Cref{alg:C(T)1} outputs a computational basis state $\ket{\psi^{(i)}}$ corresponding to a subgraph of $\mathcal{G}$ that contains the $\exit$ or is at most one edge away from containing a cycle using at most $i \leq p(n)$ queries with probability $\norm{\pibadphi \ket{\phi^{(i)}_\Al}}^2$. In the case $\ket{\psi^{(i)}}$ does not contain a cycle, we can run a depth-first search of length $1$ on the subgraph corresponding to $\ket{\psi^{(i)}}$ using at most $i$ additional queries. Hence, we have found the $\exit$ or a cycle using at most $2i \leq 2p(n)$ classical queries with probability $\norm{\pibadphi \ket{\phi^{(i)}_\Al}}^2$. Noting that \Cref{alg:C(T)1} has the form of the classical query algorithms considered by \Cref{thm:classical3-colohardnessmain}, we see that $\norm{\pibadphi \ket{\phi^{(i)}_\Al}}^2 \leq 4p(n)^4 \cdot 2^{-n/3}$.
\end{proof}

From the result of \Cref{lem:pibadphiissmall}, one might intuitively conjecture that the size of the portion of the state $\ket{\phi^{(i)}_{\Al}}$ after $i$ steps that encountered the $\exit$ or a near-cycle at some point in its history is small. 
We formalize this as follows.

\begin{lemma} \label{lem:phiallbadissmall}
For all $i \in [p(n)] \cup \{0\}$, $\norm{\ket{\phi^{(i)}_{\allbad}}} \leq 2ip(n)^2 \cdot 2^{-n/6}$.
\end{lemma}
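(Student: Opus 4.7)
The plan is to prove the bound by induction on $i$, exploiting a clever cancellation in the recursive structure of $\ket{\phi^{(i)}_{\allbad}}$. The base case $i=0$ is immediate since $\ket{\phi^{(0)}_{\allbad}} = 0$ by definition. For the inductive step, I would assume $\norm{\ket{\phi^{(i-1)}_{\allbad}}} \leq 2(i-1)p(n)^2 \cdot 2^{-n/6}$ and aim to bound $\norm{\ket{\phi^{(i)}_{\allbad}}}$.

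The starting point is the unfolding $\ket{\phi_{\allbad}^{(i)}} = \ket{\phi_\bad^{(i)}} + \tilde{C}_i \ket{\phi_{\allbad}^{(i-1)}}$ obtained directly from \Cref{def:phigood}. The naive approach of applying the triangle inequality immediately, and then bounding $\norm{\ket{\phi_\bad^{(i)}}}$ via \cref{prop:expansionofpibadphi} of \Cref{lem:combined}, leads to a doubling recursion $\norm{\ket{\phi^{(i)}_\allbad}} \leq 2p(n)^2 \cdot 2^{-n/6} + 2 \norm{\ket{\phi^{(i-1)}_\allbad}}$, which blows up exponentially. The key trick I would use is to substitute for $\ket{\phi_\bad^{(i)}}$ using \cref{prop:expansionofpibadphi} of \Cref{lem:combined} \emph{before} taking norms, so that the factor of $\tilde{C}_i \ket{\phi^{(i-1)}_{\allbad}}$ partially cancels:
\begin{align}
    \ket{\phi^{(i)}_{\allbad}}
    &= \bigl(\pibadphi \ket{\phi^{(i)}_{\Al}} - \pibadphi \tilde{C}_i \ket{\phi^{(i-1)}_{\allbad}}\bigr) + \tilde{C}_i \ket{\phi^{(i-1)}_{\allbad}} \\
    &= \pibadphi \ket{\phi^{(i)}_{\Al}} + (I - \pibadphi) \tilde{C}_i \ket{\phi^{(i-1)}_{\allbad}} \\
    &= \pibadphi \ket{\phi^{(i)}_{\Al}} + \pigoodphi \tilde{C}_i \ket{\phi^{(i-1)}_{\allbad}}.
\end{align}

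From here the argument becomes routine. By the triangle inequality,
\[
    \norm{\ket{\phi^{(i)}_{\allbad}}} \leq \norm{\pibadphi \ket{\phi^{(i)}_{\Al}}} + \norm{\pigoodphi \tilde{C}_i \ket{\phi^{(i-1)}_{\allbad}}}.
\]
The first term is bounded by $2p(n)^2 \cdot 2^{-n/6}$ using \Cref{lem:pibadphiissmall} (taking square roots). For the second term, since $\pigoodphi$ is a projector and $\tilde{C}_i$ is unitary, we get $\norm{\pigoodphi \tilde{C}_i \ket{\phi^{(i-1)}_{\allbad}}} \leq \norm{\ket{\phi^{(i-1)}_{\allbad}}} \leq 2(i-1) p(n)^2 \cdot 2^{-n/6}$ by the inductive hypothesis. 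Adding these yields $\norm{\ket{\phi^{(i)}_{\allbad}}} \leq 2i \cdot p(n)^2 \cdot 2^{-n/6}$, as desired.

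The main obstacle here is conceptual rather than computational: one must resist the urge to bound $\ket{\phi^{(i)}_{\bad}}$ and $\tilde{C}_i \ket{\phi^{(i-1)}_{\allbad}}$ separately via the triangle inequality, and instead recognize that the $\pibadphi$ projection hidden inside $\ket{\phi^{(i)}_{\bad}}$ is exactly what is needed to convert the contribution of the previous $\tilde{C}_i \ket{\phi^{(i-1)}_{\allbad}}$ into its $\pigoodphi$ part. This cancellation is what replaces the doubling recursion by an additive one, making the accumulation of error linear in $i$ rather than exponential.
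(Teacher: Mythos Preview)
Your proof is correct and follows essentially the same approach as the paper: both arrive at the decomposition $\ket{\phi^{(i)}_{\allbad}} = \pibadphi \ket{\phi^{(i)}_{\Al}} + \pigoodphi \tilde{C}_i \ket{\phi^{(i-1)}_{\allbad}}$ (the paper gets there by splitting $\tilde{C}_i\ket{\phi^{(i-1)}_{\allbad}}$ into its $\pibadphi$ and $\pigoodphi$ parts and absorbing the former with $\ket{\phi^{(i)}_\bad}$ via \cref{prop:expansionofpibadphi}, while you substitute for $\ket{\phi^{(i)}_\bad}$ directly), and then apply the triangle inequality, \Cref{lem:pibadphiissmall}, and the inductive hypothesis in the same way.
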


\begin{proof}
We prove the lemma by induction on $i$. The base case ($i=0$) is easy to observe as $\norm{\ket{\phi^{(0)}_{\allbad}}} = 0$. Now, pick any $i \in [p(n)]$ and suppose that the lemma is true for $i-1$, i.e., $\norm{\ket{\phi^{(i-1)}_{\allbad}}} \leq 2(i-1)p(n)^2 \cdot 2^{-n/6}$. Then, by \Cref{def:phigood,def:piphi} and \cref{prop:expansionofpibadphi} of \Cref{lem:combined},
\begin{align}
    \ket{\phi^{(i)}_{\allbad}} &= \ket{\phi^{(i)}_\bad} + \tilde{C}_i\ket{\phi^{(i-1)}_{\allbad}} \\
    &= \ket{\phi^{(i)}_\bad} + \pibadphi \tilde{C}_i\ket{\phi^{(i-1)}_{\allbad}} +  \pigoodphi \tilde{C}_i\ket{\phi^{(i-1)}_{\allbad}} \\
    &= \pibadphi \ket{\phi^{(i)}_{\Al}} + \pigoodphi \tilde{C}_i\ket{\phi^{(i-1)}_{\allbad}}, 
\end{align}
so we have
\begin{align}
    \norm{\ket{\phi^{(i)}_{\allbad}}} &= \norm{\pibadphi\ket{\phi^{(i)}_{\Al}} + \pigoodphi \tilde{C}_i\ket{\phi^{(i-1)}_{\allbad}}} \\
    &\leq \norm{\pibadphi \ket{\phi^{(i)}_{\Al}}} + \norm{\pigoodphi \tilde{C}_i\ket{\phi^{(i-1)}_{\allbad}}} \\
    &\leq \norm{\pibadphi \ket{\phi^{(i)}_{\Al}}} + \norm{\ket{\phi^{(i-1)}_{\allbad}}} \\
    &= \frac{2ip(n)^2}{2^{n/6}}
\end{align}
where the second step follows by the triangle inequality, the third by the fact that applying a unitary $\tilde{C}_i$ and the projector $\pigoodphi$ cannot increase the norm of any vector, and the fourth by \Cref{lem:pibadphiissmall} and the induction hypothesis. 
\end{proof}

The bound on the size of the portion of the state $\ket{\phi^{(i)}_{\Al}}$ after $i$ steps that never encountered the $\exit$ or a near-cycle directly follows from \Cref{lem:phiallbadissmall} as shown by the following corollary. 

\begin{corollary} \label{cor:lowerboundonphigood}
Let $i \in [p(n)]$. Then $\norm{\ket{\phi^{(i)}_{\good}}} \geq 1 - 2ip(n)^2 \cdot 2^{-n/6}$.
\end{corollary}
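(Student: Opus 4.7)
The plan is to derive this corollary as an immediate consequence of Lemma \ref{lem:phiallbadissmall} combined with the decomposition of $\ket{\phi^{(i)}_\Al}$ supplied by Lemma \ref{lem:combined}. The key observation is that $\ket{\phi^{(i)}_\Al}$ is a unit vector, since by Definition \ref{def:phi(i)} it is obtained by applying the unitary circuit $\tilde{C}_{0,i}$ to the unit vector $\ket{\phi_{\mathrm{initial}}}$.

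First, I would invoke part \ref{prop:equivrepofphigood} of Lemma \ref{lem:combined} to write
\begin{equation}
    \ket{\phi^{(i)}_{\good}} = \ket{\phi^{(i)}_{\Al}} - \ket{\phi^{(i)}_{\allbad}}.
\end{equation}
Then, applying the reverse triangle inequality together with $\norm{\ket{\phi^{(i)}_\Al}} = 1$, I obtain
\begin{equation}
    \norm{\ket{\phi^{(i)}_{\good}}} \geq \norm{\ket{\phi^{(i)}_{\Al}}} - \norm{\ket{\phi^{(i)}_{\allbad}}} = 1 - \norm{\ket{\phi^{(i)}_{\allbad}}}.
\end{equation}
Finally, substituting the bound $\norm{\ket{\phi^{(i)}_{\allbad}}} \leq 2ip(n)^2 \cdot 2^{-n/6}$ from Lemma \ref{lem:phiallbadissmall} yields the claimed inequality.

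There is no real obstacle here; the corollary is essentially a bookkeeping step that repackages Lemma \ref{lem:phiallbadissmall} in a form more convenient for the next stage of the argument (where one needs a lower bound on the mass of the ``good'' component in order to bound the success probability of $\Al$ at finding the $\exit$ or a cycle). The only subtlety worth double-checking is that $\ket{\phi^{(i)}_\Al}$ is indeed normalized, which follows immediately from the unitarity of the constructed circuit $\tilde{C}$ in Definition \ref{def:transcript} and the fact that $\ket{\phi_{\mathrm{initial}}}$ has unit norm by \cref{eq:initialtranscriptstate}.
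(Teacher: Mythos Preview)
Your proposal is correct and mirrors the paper's proof essentially step for step: the paper also invokes the decomposition $\ket{\phi^{(i)}_{\good}} = \ket{\phi^{(i)}_{\Al}} - \ket{\phi^{(i)}_{\allbad}}$, applies the triangle inequality, and then substitutes the bound from \Cref{lem:phiallbadissmall} together with $\norm{\ket{\phi^{(i)}_{\Al}}}=1$.
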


\begin{proof}
Observe that
\begin{equation}
    \norm{\ket{\phi^{(i)}_{\good}}} = \norm{\ket{\phi^{(i)}_{\Al}} - \ket{\phi^{(i)}_{\allbad}}} \geq \norm{\ket{\phi^{(i)}_{\Al}}} - \norm{\ket{\phi^{(i)}_{\allbad}}} \geq 1 - \frac{2ip(n)^2}{2^{n/6}}
\end{equation}
where the equality follows by \Cref{def:phigood}, the first inequality is an application of the triangle inequality, and the second inequality follows by \Cref{lem:phiallbadissmall} and the fact that $\ket{\phi^{(i)}_{\Al}}$ is a quantum state.
\end{proof}

In the next lemma, we bound the mass of the portion of the state $\ket{\psi^{(i)}_{\Al}}$ after $i$ steps that encountered the $\exit$ or a near-cycle at some point in its history. This is a crucial lemma for our result in this section where we invoke \Cref{lem:||psigood||=||phigood||} to deduce a statement about the quantum state of the genuine algorithm $\mathcal{A}$ using known properties of the state of our classical simulation of $\mathcal{A}$.

\begin{lemma} \label{lem:psiallbadissmall}
Let $i \in [p(n)]$. Then $\norm{\ket{\psi_\allbad^{(i)}}}^2 \leq 4i^2p(n)^2 \cdot 2^{-n/6}$.
\end{lemma}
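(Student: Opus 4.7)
The plan is to derive the bound on $\norm{\ket{\psi_\allbad^{(i)}}}^2$ by transporting the bound on $\norm{\ket{\phi_\allbad^{(i)}}}$ (from \Cref{lem:phiallbadissmall}) across the vertex/address correspondence, using the isometry relation $\norm{\ket{\psi_\good^{(i)}}} = \norm{\ket{\phi_\good^{(i)}}}$ of \Cref{lem:||psigood||=||phigood||} and the conservation identity $\norm{\ket{\psi_\good^{(i)}}}^2 + \sum_{j \in [i]}\norm{\ket{\psi_\bad^{(j)}}}^2 = 1$ of \Cref{lem:psigood^2+allpsibad^2=1}.

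First, I would note that $\ket{\psi_\allbad^{(i)}} = \sum_{j=1}^{i} C_{j,i} \ket{\psi_\bad^{(j)}}$ by \Cref{def:phigood}, so by the triangle inequality and the fact that each $C_{j,i}$ is unitary,
\[
    \norm{\ket{\psi_\allbad^{(i)}}} \leq \sum_{j=1}^{i} \norm{C_{j,i} \ket{\psi_\bad^{(j)}}} = \sum_{j=1}^{i} \norm{\ket{\psi_\bad^{(j)}}}.
\]
Squaring and applying the Cauchy--Schwarz inequality to this sum of $i$ nonnegative reals yields
\[
    \norm{\ket{\psi_\allbad^{(i)}}}^2 \leq i \cdot \sum_{j=1}^{i} \norm{\ket{\psi_\bad^{(j)}}}^2.
\]

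Next, I would invoke \Cref{lem:psigood^2+allpsibad^2=1} to rewrite $\sum_{j=1}^{i} \norm{\ket{\psi_\bad^{(j)}}}^2 = 1 - \norm{\ket{\psi_\good^{(i)}}}^2$, and then apply \Cref{lem:||psigood||=||phigood||} to convert this to $1 - \norm{\ket{\phi_\good^{(i)}}}^2$. By \Cref{cor:lowerboundonphigood}, we have $\norm{\ket{\phi_\good^{(i)}}} \geq 1 - 2ip(n)^2 \cdot 2^{-n/6}$, which combined with $\norm{\ket{\phi_\good^{(i)}}} \leq 1$ gives
\[
    1 - \norm{\ket{\phi_\good^{(i)}}}^2 \leq 2\bigl(1 - \norm{\ket{\phi_\good^{(i)}}}\bigr) \leq 4ip(n)^2 \cdot 2^{-n/6}.
\]
Substituting into the previous display yields $\norm{\ket{\psi_\allbad^{(i)}}}^2 \leq 4i^2p(n)^2 \cdot 2^{-n/6}$, as desired.

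The argument is short because the hard technical work has already been done: \Cref{sec:3coloring} bounds the classical probability of finding the $\exit$ or a cycle, this gives the $\phi$-side bound through \Cref{lem:phiallbadissmall,cor:lowerboundonphigood}, and \Cref{lem:||psigood||=||phigood||} moves this bound to the $\psi$-side. The only step that deserves care is the use of Cauchy--Schwarz on the triangle inequality, which costs a factor of $i$ and is what yields the $i^2$ in the statement rather than the linear-in-$i$ bound one might have hoped for; however, this overhead is harmless because $i \leq p(n)$, keeping the bound exponentially small in $n$.
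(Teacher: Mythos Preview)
Your proof is correct and follows essentially the same approach as the paper: triangle inequality on $\ket{\psi_\allbad^{(i)}}$, Cauchy--Schwarz to pass to $\sum_j \norm{\ket{\psi_\bad^{(j)}}}^2$, then \Cref{lem:psigood^2+allpsibad^2=1}, \Cref{lem:||psigood||=||phigood||}, and \Cref{cor:lowerboundonphigood}. The only cosmetic difference is that the paper bounds $1-(1-\epsilon)^2$ directly while you use $1-x^2 \le 2(1-x)$; these are equivalent.
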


\begin{proof}
Observe that
\begin{align}
    \norm{\ket{\psi_\allbad^{(i)}}} &= \norm{\sum_{j \in [i]} C_{j,i} \ket{\psi_\bad^{(j)}}} \\
    &\leq \sum_{j \in [i]} \norm{C_{j,i} \ket{\psi_\bad^{(j)}}} \\
    &= \sum_{j \in [i]} \norm{\ket{\psi_\bad^{(j)}}}
\end{align}
where the first step follows by \Cref{def:phigood}, the second by triangle inequality, and the third by the fact that the unitary $C_{j,i}$ preserves norms.

Thus, we have
\begin{align}
    \norm{\ket{\psi_\allbad^{(i)}}}^2 &\leq \left(\sum_{j \in [i]} \norm{\ket{\psi_\bad^{(j)}}}\right)^2 \\
    &\leq \sum_{j \in [i]} i\norm{\ket{\psi_\bad^{(j)}}}^2 \\
    &= i\left(1-\norm{\ket{\psi_\good^{(i)}}}^2\right) \\
    &= i\left(1-\norm{\ket{\phi_\good^{(i)}}}^2\right) \\
    &\leq i\left(1-\left(1-\frac{2ip(n)^2}{2^{n/6}}\right)^2\right) \\
    &\leq \frac{4i^2p(n)^2}{2^{n/6}}
\end{align}
where the two equalities follow by \Cref{lem:psigood^2+allpsibad^2=1,lem:||psigood||=||phigood||}, respectively, and the next-to-last inequality follows by \Cref{cor:lowerboundonphigood}. 
\end{proof}

We are now ready to establish our main theorem, which formally proves the hardness of finding an $\entrance$--$\exit$ path for \genuine, \rooted\ quantum query algorithms.

\begin{theorem} \label{thm:maintheorem}
No \genuine, \rooted\ quantum query algorithm for the path-finding problem can find a path from $\entrance$ to $\exit$ with more than exponentially small probability.
\end{theorem}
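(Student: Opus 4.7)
The plan is to derive the theorem as a direct corollary of \Cref{lem:psiallbadissmall}, using the decomposition of $\ket{\psi_\Al}$ into its $\good$ and $\allbad$ parts. First I would observe that because the algorithm $\Al$ is \rooted, every computational basis state in the support of its final state $\ket{\psi_\Al}$ encodes a connected subgraph containing the $\entrance$ (by \Cref{lem:psigoodandpsibadisrooted} applied after the last gate). Consequently, if a measurement outcome contains the $\exit$ vertex label in one of its vertex registers, that outcome already reveals an $\entrance$--$\exit$ path. Thus the probability that $\Al$ outputs an $\entrance$--$\exit$ path is bounded above by the probability that the measurement outcome lies in $\psi$-$\BAD$, since every state containing the $\exit$ is $\psi$-$\bad$ by \Cref{def:phi-badandphi-good}. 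Hence
\begin{equation}
    \Pr[\text{$\Al$ finds an $\entrance$--$\exit$ path}] \;\leq\; \norm{\pibadpsi \ket{\psi_\Al}}^2.
\end{equation}

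Next I would use \cref{prop:equivrepofpsigood} of \Cref{lem:combined} to write $\ket{\psi_\Al} = \ket{\psi_\good} + \ket{\psi_\allbad}$. Because $\ket{\psi_\good}$ lies entirely in the image of $\pigoodpsi$ (this is \cref{prop:psibadisbadandpsigoodisgood} of \Cref{lem:combined}), and because $\pigoodpsi$ and $\pibadpsi$ are orthogonal by \Cref{def:piphi}, we get $\pibadpsi \ket{\psi_\good} = 0$. Therefore $\pibadpsi \ket{\psi_\Al} = \pibadpsi \ket{\psi_\allbad}$, and since $\pibadpsi$ is a projector,
\begin{equation}
    \norm{\pibadpsi \ket{\psi_\Al}}^2 \;=\; \norm{\pibadpsi \ket{\psi_\allbad}}^2 \;\leq\; \norm{\ket{\psi_\allbad}}^2.
\end{equation}

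Finally, instantiating \Cref{lem:psiallbadissmall} at $i = p(n)$ yields $\norm{\ket{\psi_\allbad}}^2 \leq 4 p(n)^4 \cdot 2^{-n/6}$, which is exponentially small in $n$ for any polynomial $p$. Combining the three displays gives the theorem. The proof is essentially bookkeeping once the hard work is done: the main obstacle has already been overcome in \Cref{subsec:analysisongoodpart} (the faithful simulation lemma \Cref{lem:psigood=L(phigood)} and its norm-preservation corollary \Cref{lem:||psigood||=||phigood||}) together with the classical hardness input through \Cref{lem:pibadphiissmall}. I do not foresee additional technical difficulty in stitching these pieces together, beyond making the rooted-implies-path observation precise in the first paragraph.
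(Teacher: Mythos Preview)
Your proposal is correct and follows essentially the same route as the paper's own proof: bound the success probability by $\norm{\pibadpsi\ket{\psi_\Al}}^2$, decompose $\ket{\psi_\Al}=\ket{\psi_\good}+\ket{\psi_\allbad}$ via \cref{prop:equivrepofpsigood}, kill the $\good$ part using \cref{prop:psibadisbadandpsigoodisgood}, and finish with \Cref{lem:psiallbadissmall} at $i=p(n)$. The only cosmetic difference is that you spell out the ``rooted $\Rightarrow$ containing the $\exit$ already yields a path'' observation explicitly, which the paper leaves implicit.
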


\begin{proof}
Since we let $\mathcal{A}$ be an arbitrary genuine, rooted quantum algorithm, it is sufficient to show that $\mathcal{A}$ cannot find a path from $\entrance$ to $\exit$ with more than exponentially small probability. Note that, by \Cref{def:phi-badandphi-good}, any computational basis state $\ket{\psi}$ that corresponds to a subgraph that stores an $\entrance$ to $\exit$ path must be $\psi$-$\mathrm{bad}$. That is, such a $\ket{\psi}$ must be in the support of $\pibadpsi\ket{\psi_\Al}$ from \Cref{def:piphi}. Recall from \Cref{def:genuine} that the genuine algorithm $\mathcal{A}$ measures the state $\ket{\psi_\Al}$ and outputs the resulting set of vertices. Thus, the probability that the genuine, rooted quantum query algorithm $\mathcal{A}$ finds an $\entrance$ to $\exit$ path is at most
\begin{align}
    \norm{\pibadpsi\ket{\psi_\Al}}^2 &= \norm{\pibadpsi\ket{\psi^{(p(n)}_\Al}}^2 \\
    &=\norm{\pibadpsi\ket{\psi^{(p(n)}_\good} + \pibadpsi\ket{\psi^{(p(n)}_\allbad}}^2 \\
    &= \norm{\pibadpsi\ket{\psi^{(p(n)}_\allbad}}^2 \\
    &\leq \norm{\ket{\psi^{(p(n)}_\allbad}}^2 \\
    &\leq \frac{4p(n)^4}{2^{n/6}}
\end{align}
where we used \Cref{def:phigood} in the first step, \cref{prop:equivrepofpsigood} of \Cref{lem:combined} in the second, \cref{prop:psibadisbadandpsigoodisgood} of \Cref{lem:combined} in the third, the fact that applying a projector $\pibadpsi$ cannot increase the norm of any vector in the fourth step, and \Cref{lem:psiallbadissmall} in the last.
\end{proof}

\section{Hardness of classical cycle finding with a 3-color oracle} \label{sec:3coloring}

In this section, we analyze the classical query complexity of finding the $\exit$ or a cycle in a randomly chosen 3-colored \wtg\ of size $n$. More precisely, we show in \Cref{thm:classical3-colohardnessmain} that the probability of finding the $\exit$ or a cycle for a natural class of classical algorithms is exponentially small even for a \wtg\ whose vertices are permuted according to the distribution $D_n$ specified in \Cref{def:colorpreservingpermutation} below. Informally, $D_n$ gives rise to the uniform distribution on \wtg s over the set that is constructed by fixing a 3-colored \wtg\ $\mathcal{G}$ and randomizing the edges of the $\weld$ (defined in \Cref{def:TLTRandweld}), making sure that the resulting graphs are valid 3-colored \wtg s. 

The key ingredient of our analysis is \Cref{lem:3-coloringmain} (see also \Cref{cor:3-coloringhardnessmain}), which informally says that for a \wtg\ sampled according to the aforementioned distribution, it is exponentially unlikely for a certain natural class of classical algorithms (i) to get `close' to the $\entrance$ or the $\exit$ starting on any vertex in the $\weld$ without backtracking, or (ii) to encounter two $\weld$ vertices that are connected by multiple `short' paths. Note that statement (i) implies that it is hard for any such classical algorithm to find the $\exit$ whereas statement (ii) has a similar implication for finding a cycle.

An astute reader might notice the resemblance of \Cref{lem:3-coloringmain} with Lemma 8 of \cite{ChildsCDFGS03}. Indeed, the latter lemma shows that it is hard for any classical algorithm with access to a colorless \wto\ to satisfy either statement (i) or (ii) mentioned above. However, the argument of \cite{ChildsCDFGS03} is different than ours in two major ways: our proof is by induction, and we use randomness of the $\weld$ and graph theoretic properties of any 3-coloring of a \wtg\ to argue the unlikeliness of statements (i) and (ii) while they use the hardness of guessing multiple coin tosses along with randomness of the $\weld$.

We begin by specifying basic notions about binary trees and \wtg s that facilitate proving \Cref{lem:equalcolorings} and \Cref{cor:equalsequences}.

\begin{definition}
Let $T$ be a binary tree of height $n$. We say that a vertex of $T$ is in \emph{column} $i$ if its distance from the root of $T$ is $i$, and an edge in $T$ is at \emph{level} $i$ if it connects a vertex in column $i-1$ to a vertex in column $i$. We extend this notion to the \wtg s. Precisely, a vertex of a \wtg\ $\mathcal{G}$ is in \emph{column} $i$ if its distance from the $\entrance$ is $i$, and an edge in $\mathcal{G}$ is at \emph{level} $i$ if it connects a vertex in column $i-1$ to a vertex in column $i$. 
\end{definition}

\begin{definition}
Let $T$ be a binary tree of height $n$, which is edge-colored using the 3 colors in $\mathcal{C}$. Let $v$ be any leaf of $T$, let $j \in [n]$, and let $u$ be the ancestor of $v$ in $T$ that is distance $j$ away from $v$. We define $t_v^j$ to be the length-$j$ sequence of colors from $v$ to $u$.
\end{definition}

The following lemma formalizes the observation that the colors of edges at any level of a binary tree are close to uniformly distributed among the possible 3 colors.

\begin{lemma} \label{lem:equalcolorings}
Let $T$ be a binary tree of height $n$ that is edge-colored using the 3 colors in $\mathcal{C}$. Let $\cbad \in \mathcal{C}$ denote the unique color such that there is no $\cbad$-colored edge incident to the root of $T$. For each $c \in \mathcal{C}$ and $i \in [n]$, let $\gamma(c, i)$ denote the number of $c$-colored edges at level $i$. Then
\begin{equation}
    \gamma(c, i) =
    \begin{cases}
        \floor{2^{i}/3} & i \text{ is odd and } c=\cbad \text{, or } i \text{ is even and } c \neq \cbad \\
        \ceil{2^{i}/3} & \text{otherwise}.
    \end{cases}
\end{equation}
\end{lemma}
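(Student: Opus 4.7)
The plan is to reduce the statement to a short recurrence in $i$ coming from properness of the $3$-edge-coloring, and then verify the claimed closed form by induction on $i$.

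First, any non-root vertex $v$ of $T$ has degree exactly $3$---one parent edge and two child edges---and, since the coloring is proper, these three incident edges carry the three distinct colors of $\mathcal{C}$. Hence a vertex in column $i$ whose parent edge has color $c'$ contributes exactly one child edge of each color $c\neq c'$ at level $i+1$. Summing over all vertices in column $i$, and using that a complete binary tree has $2^i$ edges at level $i$, yields the recurrence
\[
\gamma(c,i+1)=\sum_{c'\neq c}\gamma(c',i)=2^i-\gamma(c,i).
\]
The base case $i=1$ is immediate: the root has degree $2$ and $\cbad$ is, by definition, the only color not incident to the root, so $\gamma(\cbad,1)=0$ and $\gamma(c,1)=1$ for $c\neq\cbad$, which matches the claimed formula since $\floor{2/3}=0$ and $\ceil{2/3}=1$.

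The main step is the inductive verification that the closed form satisfies $\gamma(c,i+1)=2^i-\gamma(c,i)$. Note that $2^i\equiv 2\pmod{3}$ when $i$ is odd and $2^i\equiv 1\pmod{3}$ when $i$ is even, so $\floor{2^i/3}$ and $\ceil{2^i/3}$ differ by exactly $1$ and have explicit values $(2^i\mp 2)/3$ or $(2^i\mp 1)/3$ depending on the parity of $i$. The induction then splits into four cases (parity of $i$ versus whether $c=\cbad$), each reducing to a one-line arithmetic identity. For instance, when $i$ is odd and $c=\cbad$, the formula gives $\gamma(\cbad,i)=(2^i-2)/3$, and the recurrence yields $2^i-(2^i-2)/3=(2^{i+1}+2)/3=\ceil{2^{i+1}/3}$, which is the predicted value for $i+1$ even and $c=\cbad$. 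The other three cases are analogous.

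I do not expect a genuine obstacle here; the only care required is in tracking how the parity of $i$ swaps the roles of floor and ceiling in the formula as $c$ ranges over $\{\cbad\}$ versus $\mathcal{C}\setminus\{\cbad\}$. As a useful sanity check, summing the claimed formula over the three colors in $\mathcal{C}$ recovers the total edge count $2^i$ via the identity $\floor{2^i/3}+2\ceil{2^i/3}=2^i$ when $i$ is odd, and $\ceil{2^i/3}+2\floor{2^i/3}=2^i$ when $i$ is even.
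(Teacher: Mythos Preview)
Your proposal is correct and follows essentially the same approach as the paper: both proofs derive the recurrence $\gamma(c,i+1)=\sum_{c'\neq c}\gamma(c',i)$ from properness of the coloring and then verify the closed form by induction on $i$. Your use of $2^i-\gamma(c,i)$ in place of the sum, and of explicit residues in place of floor/ceiling manipulations, are cosmetic variations of the same argument.
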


\begin{proof}
We prove this lemma by induction on $i$. An edge is at level $1$ in $T$ if and only if it is incident to the root of $T$. The base case of the lemma directly follows. 

Assume that this lemma is true for some $i \in [n-1]$. Suppose that $i$ is odd. Note that, for any $c \in \mathcal{C}$, any vertex in column $i$ is connected to a vertex in column $i+1$ with a $c$-colored edge if and only if it is connected to a vertex in column $i-1$ with a $c'$-colored edge for $c \neq c'$. Thus the number of $c$-colored edges at level $i+1$ is
\begin{align}
    \gamma(c, i+1) &= \sum_{\substack{c' \in \mathcal{C} \\ c' \neq c}} \gamma(c', i) \\
    &= \begin{cases}
        2\ceil{2^{i}/3}  & c = \cbad \\
        \ceil{2^{i}/3} + \floor{2^{i}/3}  & \text{otherwise}
    \end{cases} \\
    &= \begin{cases}
        \ceil{2^{i+1}/3}  & c = \cbad \\
        \floor{2^{i+1}/3}  & \text{otherwise}.
    \end{cases} 
\end{align} 

The analysis for even $i$ is very similar.
\end{proof}

\Cref{lem:equalcolorings} directly implies the following corollary, which informally states that the number of paths from a particular level $n-j$ to the leaves of a binary tree are almost-uniformly distributed among all possible color sequences of length $j$ that do not contain an even-length palindrome.

\begin{corollary} \label{cor:equalsequences}
Let $T$ be a binary tree of height $n$, which is edge-colored using the 3 colors in $\mathcal{C}$. Let $j \in [n]$ and fix a length-$j$ sequence of colors $t \in \mathcal{C}^j$ that does not contain an even-length palindrome. Then the number of leaves $v$ of $T$ satisfying $t_v^j = t$ is at most $\ceil{2^{n-j+1}/3}$. 
\end{corollary}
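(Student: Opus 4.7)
The plan is to reduce the count to Lemma~\ref{lem:equalcolorings} by parametrizing the relevant leaves by their ancestor in column $n-j$. A preliminary observation is that the hypothesis on $t$ is equivalent to ``$t$ has no two consecutive equal colors'': every even-length palindrome $s_1 \cdots s_{2k}$ satisfies $s_k = s_{k+1}$, and conversely $aa$ is itself a length-$2$ palindrome. This simpler form of the hypothesis is also the natural one for paths in a properly $3$-colored tree, since any sequence of the form $t_v^j$ automatically has no consecutive repeats.

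Next I would set up the counting correspondence. For every leaf $v$ with $t_v^j = t$, the ancestor of $v$ at distance $j$ is a uniquely determined vertex $u$ in column $n-j$. Conversely, from any fixed $u$ in column $n-j$ there is at most one descendant leaf reachable by following the reversed color sequence $(t[j], t[j-1], \ldots, t[1])$ downward, since at each step the required edge color (if it exists) pins down a unique child. Hence the quantity to bound equals the number of $u$ in column $n-j$ from which this downward traversal can actually be carried out.

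The structural observation that makes the bound easy is that only the \emph{first} step of the downward traversal can fail. Indeed, once a step is taken, the color just used becomes the parent-edge color at the new vertex, and the next required color $t[i-1]$ differs from $t[i]$ by the no-consecutive-repeats condition, so it must match one of the two child-edge colors and the traversal continues. Thus, from a non-root $u$, the traversal succeeds precisely when $t[j]$ differs from the color of the parent edge of $u$; from the root (the $j=n$ case) it succeeds precisely when $t[j] \neq \cbad$.

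The remaining step is arithmetic. For $j<n$ the count equals $2^{n-j} - \gamma(t[j],\, n-j)$, and Lemma~\ref{lem:equalcolorings} gives $\gamma(t[j],\, n-j) \geq \floor{2^{n-j}/3}$, so the count is at most $2^{n-j} - \floor{2^{n-j}/3} = \ceil{2^{n-j+1}/3}$ by an elementary floor/ceiling identity. The boundary case $j=n$ is immediate since there is at most one candidate $u$ (the root) and $\ceil{2/3} = 1$. I do not foresee a serious obstacle; the main point to get right is the ``only first step can fail'' observation, which is exactly where the proper $3$-coloring of $T$ and the no-consecutive-repeats hypothesis on $t$ both enter.
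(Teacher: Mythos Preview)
Your proof is correct and follows essentially the same approach as the paper: parametrize leaves by their ancestor $u$ in column $n-j$ and bound the number of admissible $u$ via Lemma~\ref{lem:equalcolorings}. The only cosmetic difference is that the paper applies Lemma~\ref{lem:equalcolorings} directly at level $n-j+1$ (counting $c$-colored child edges of $u$, giving $\gamma(t[j],n-j+1)\le\ceil{2^{n-j+1}/3}$), whereas you count the complement at level $n-j$ and use the identity $2^{n-j}-\floor{2^{n-j}/3}=\ceil{2^{n-j+1}/3}$; these two counts coincide because the three edges at each non-root $u$ carry all three colors.
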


\begin{proof}
Let $c \in \mathcal{C}$ be the last color appearing in the sequence $t$. Note that any ancestor of a leaf $v$ of $T$ that is distance $j$ away from $v$ must be in column $n-j$ of $T$. For each vertex $u$ in column $n-j$ with some edge at level $n-j+1$ incident to $u$ being $c$-colored, there is exactly one leaf $v$ such that $v$ is a descendant of $u$ and $t_v^j = t$. By \Cref{lem:equalcolorings}, there are at most $\ceil{2^{n-j+1}/3}$ $c$-colored edges at level $n-j+1$. Therefore, there are at most $\ceil{2^{n-j+1}/3}$ leaves of $T$ satisfying $t_v^j = t$.
\end{proof}

Consider the following induced subgraphs of $\mathcal{G}$.

\begin{definition} \label{def:TLTRandweld}
Define $T_L$, $T_R$, and $\weld$ to be the induced subgraphs of $\mathcal{G}$ on vertices in columns $\{0, \ldots, n\}$, columns $\{n+1, \ldots, 2n+1\}$, and columns $\{n, n+1\}$ of $\mathcal{G}$, respectively.
\end{definition}

Informally, $T_L$ and $T_R$ are induced subgraphs of $\mathcal{G}$ on vertices in the left and right binary trees of $\mathcal{G}$, respectively, while $\weld$ is the induced subgraph of $\mathcal{G}$ on the leaves of the left and right binary trees of $\mathcal{G}$. Note that $T_L$ and $T_R$ are height-$n$ subtrees of $\mathcal{G}$ rooted at $\entrance$ and $\exit$, respectively. Furthermore, the vertices of $T_L$ and $T_R$ provide a bipartition of the vertices of $\mathcal{G}$, and the edges of $T_L$, $T_R$, and $\weld$ provide a tripartition of the edges of $\mathcal{G}$.

For the rest of this section, we will assume that $n$ is a multiple of $3$ for simplicity of presentation. Indeed, this assumption is without loss of generality as one can replace each $n/3$ appearing in this section with $\floor{n/3}$ or $\ceil{n/3}$ as appropriate. Next, we consider $2^{2n/3}$ disjoint subtrees of $T_L$ (respectively $T_R$), each of which contains (as leaves) $2^{n/3}$ leaves of $T_L$ (respectively $T_R$). 

\begin{definition}
Fix an ordering of the vertices of $T_L$ (respectively $T_R$) in column $2n/3$. For any $i \in [2^{2n/3}]$, let $T_{L_i}$ (respectively $T_{R_i}$) denote the subtree of $T_L$ (respectively $T_R$) that is a binary tree of height $n/3$ rooted at the $i$th vertex in column $2n/3$ of $T_L$ (respectively $T_R$). Moreover, let $\mathbb{T} \defeq \{T_{L_i}, T_{R_i}: i \in [2^{2n/3}]\}$. 
\end{definition}

We categorize the vertices of $\weld$ depending on whether they belong to $T_L$ or $T_R$, and on the colors of the edges joining them to non-$\weld$ vertices.

\begin{definition} \label{def:c-leftandc-rightvertices}
For any leaf $v$ of the tree $T_L$ and any $c \in \mathcal{C}$, if the color of the edge connecting $v$ with $T_L$ in $\mathcal{G}$ is $c$, then we say that $v$ is a $c$-left vertex. Similarly, we define the notion of a $c$-right vertex for each $c \in \mathcal{C}$.
\end{definition}

As an example, the vertices colored {\color{Lavender} lavender} and {\color{Plum} plum} in \Cref{fig:weldedtree3colored} are red-left vertices.

Next, we define permutations that map valid 3-colored \wtg s to valid 3-colored \wtg s (as we show in \Cref{lem:colorpreservingpermutations}). Note that \Cref{def:c-leftandc-rightvertices} partitions the vertices of $\weld$ into 6 parts. The following definition is crucial for describing a distribution of \wtg s that is classically `hard'.

\begin{definition} [Color-preserving permutations] \label{def:colorpreservingpermutation}
A permutation $\sigma$ of the vertices of $\weld$ is a \emph{color-preserving permutation} if for any vertex $v$ of $\weld$, and any $c \in \mathcal{C}$, $v$ is $c$-left (respectively $c$-right) iff $\sigma(v)$ is $c$-left (respectively $c$-right). We will sometimes refer to a color-preserving permutation $\sigma$ as a permutation of $\valid$ that acts as $\sigma$ on vertices of $\weld$ and as identity on non-$\weld$ vertices.
For any color-preserving permutation $\sigma$, let $\weld^\sigma$ denote the graph obtained by applying $\sigma$ to the vertices of $\weld$.
For any color-preserving permutation $\sigma$, let $\mathcal{G}^\sigma$ denote the graph on vertex set $\valid$ obtained by permuting the $\weld$ edges of $\mathcal{G}$ according to $\sigma$ (and leaving the rest of the graph $\mathcal{G}$ as it is): if $u,v$ are $\weld$ vertices, then there is an edge of color $c \in \mathcal{C}$ in $\mathcal{G}$ joining vertices $u$ and $v$ if and only if there is an edge of color $c$ in $\mathcal{G}^\sigma$ joining vertices $\sigma(u)$ and $\sigma(v)$; otherwise, there is an edge of color $c \in \mathcal{C}$ in $\mathcal{G}$ joining vertices $u$ and $v$ if and only if there is an edge of color $c$ in $\mathcal{G}^\sigma$ joining vertices $u$ and $v$.
Define $D_n$ to be the uniform distribution over all color-preserving permutations $\sigma$.
\end{definition}

Note that $\mathcal{V}_{\mathcal{G}^\sigma} = \valid$, and the non-$\weld$ vertices and edges of $\mathcal{G}$ remain invariant under $\sigma$. Therefore, the induced subgraph of $\mathcal{G}^\sigma$ on vertices of $T_L$ (respectively $T_R$) is $T_L$ (respectively $T_R$). Moreover, $\weld^\sigma$ is the induced subgraph of $\mathcal{G}^\sigma$ on vertices of $\weld$.  \Cref{fig:colorpreservingpermutation} illustrates an example of a color-preserving permutation. 

For each $c \in \mathcal{C}$, by \Cref{lem:equalcolorings}, we have at least $\floor{2^n/3}$ $c$-left and at least $\floor{2^n/3}$ $c$-right vertices in $\mathcal{G}$. Hence, there are at least $\left(\floor{2^n/3}!\right)^6$ color-preserving permutations.

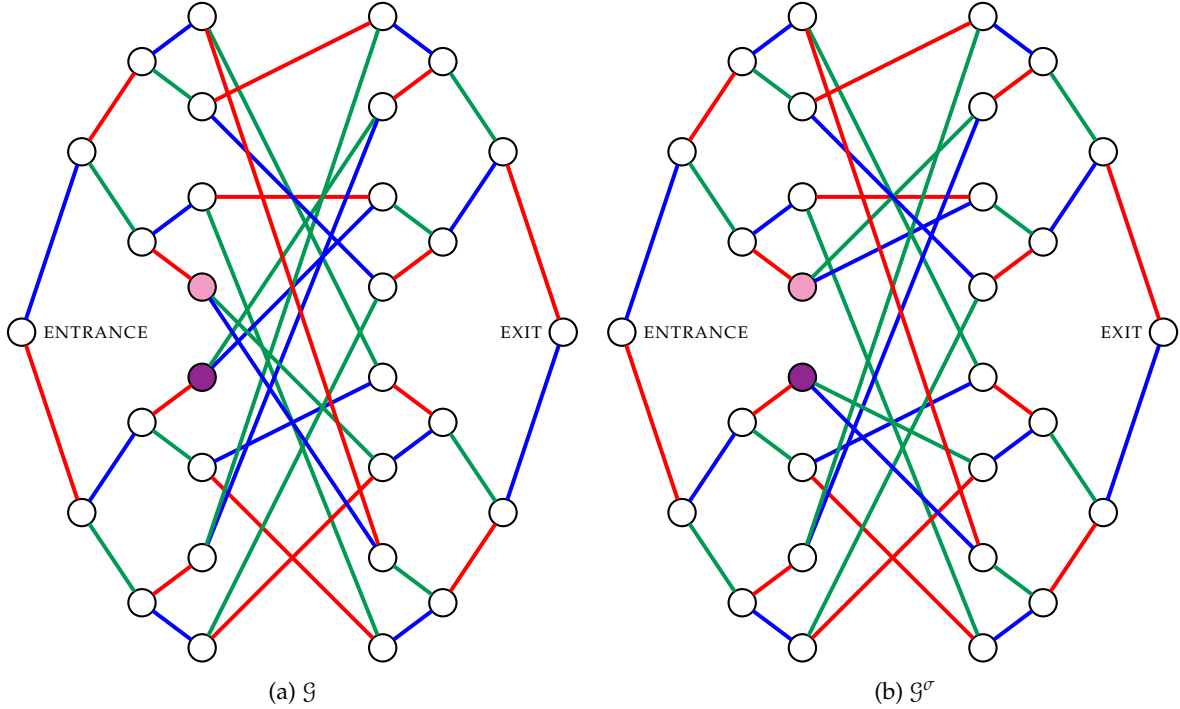
\begin{figure}
    \centering
    \begin{subfigure}[b]{0.46\linewidth}
    \begin{tikzpicture}[scale=0.2, auto, node distance=0.2cm, every loop/.style={}, thick, fill=black!20, every arrow/.append style={dash,thick}]
    \tikzset{VertexStyle/.style = {shape = ellipse, draw}}
    \Vertex[x=0,y=0,L=$ $]{ent} 
    \extralabel{0}{\scriptsize \textsc{entrance}}{ent}
    \Vertex[x=4,y=-12,L=$ $]{Lr}
    \Vertex[x=4,y=12,L=$ $]{Lb}
    \Vertex[x=8,y=-18,L=$ $]{Lrg}
    \Vertex[x=8,y=-6,L=$ $]{Lrb}
    \Vertex[x=8,y=6,L=$ $]{Lbg}
    \Vertex[x=8,y=18,L=$ $]{Lbr}
    \Vertex[x=12,y=-21,L=$ $]{Lrgb}
    \Vertex[x=12,y=-15,L=$ $]{Lrgr}
    \Vertex[x=12,y=-9,L=$ $]{Lrbg}
    \tikzset{VertexStyle/.style = {shape = ellipse, draw, fill=Plum}}
    \Vertex[x=12,y=-3,L=$ $]{Lrbr}
    \tikzset{VertexStyle/.style = {shape = ellipse, draw, fill=Lavender}}
    \Vertex[x=12,y=3,L=$ $]{Lbgr}
    \tikzset{VertexStyle/.style = {shape = ellipse, draw}}
    \Vertex[x=12,y=9,L=$ $]{Lbgb}
    \Vertex[x=12,y=15,L=$ $]{Lbrg}
    \Vertex[x=12,y=21,L=$ $]{Lbrb}
    
    \Vertex[x=36,y=0,L=$ $]{exit} 
    \extralabel{180}{\scriptsize \textsc{exit}}{exit}
    \Vertex[x=32,y=-12,L=$ $]{Rb}
    \Vertex[x=32,y=12,L=$ $]{Rr}
    \Vertex[x=28,y=-18,L=$ $]{Rbr}
    \Vertex[x=28,y=-6,L=$ $]{Rbg}
    \Vertex[x=28,y=6,L=$ $]{Rrb}
    \Vertex[x=28,y=18,L=$ $]{Rrg}
    \Vertex[x=24,y=-21,L=$ $]{Rbrb}
    \Vertex[x=24,y=-15,L=$ $]{Rbrg}
    \Vertex[x=24,y=-9,L=$ $]{Rbgb}
    \Vertex[x=24,y=-3,L=$ $]{Rbgr}
    \Vertex[x=24,y=3,L=$ $]{Rrbr}
    \Vertex[x=24,y=9,L=$ $]{Rrbg}
    \Vertex[x=24,y=15,L=$ $]{Rrgr}
    \Vertex[x=24,y=21,L=$ $]{Rrgb}
    
    \Edge[color=red](ent)(Lr)
    \Edge[color=blue](ent)(Lb)
    \Edge[color=ForestGreen](Lr)(Lrg)
    \Edge[color=blue](Lr)(Lrb)
    \Edge[color=ForestGreen](Lb)(Lbg)
    \Edge[color=red](Lb)(Lbr)
    \Edge[color=blue](Lrg)(Lrgb)
    \Edge[color=red](Lrg)(Lrgr)
    \Edge[color=ForestGreen](Lrb)(Lrbg)
    \Edge[color=red](Lrb)(Lrbr)
    \Edge[color=red](Lbg)(Lbgr)
    \Edge[color=blue](Lbg)(Lbgb)
    \Edge[color=ForestGreen](Lbr)(Lbrg)
    \Edge[color=blue](Lbr)(Lbrb)
    
    \Edge[color=blue](exit)(Rb)
    \Edge[color=red](exit)(Rr)
    \Edge[color=ForestGreen](Rb)(Rbg)
    \Edge[color=red](Rb)(Rbr)
    \Edge[color=ForestGreen](Rr)(Rrg)
    \Edge[color=blue](Rr)(Rrb)
    \Edge[color=red](Rbg)(Rbgr)
    \Edge[color=blue](Rbg)(Rbgb)
    \Edge[color=ForestGreen](Rbr)(Rbrg)
    \Edge[color=blue](Rbr)(Rbrb)
    \Edge[color=blue](Rrg)(Rrgb)
    \Edge[color=red](Rrg)(Rrgr)
    \Edge[color=ForestGreen](Rrb)(Rrbg)
    \Edge[color=red](Rrb)(Rrbr)
    
    \Edge[color=ForestGreen](Lbrb)(Rbgr)
    \Edge[color=blue](Rbgr)(Lrbg)
    \Edge[color=red](Lrbg)(Rbrb)
    \Edge[color=ForestGreen](Rbrb)(Lbgb)
    \Edge[color=red](Lbgb)(Rrbg)
    \Edge[color=blue](Rrbg)(Lrbr)
    \Edge[color=ForestGreen](Lrbr)(Rrgr)
    \Edge[color=blue](Rrgr)(Lrgr)
    \Edge[color=ForestGreen](Lrgr)(Rrgb)
    \Edge[color=red](Rrgb)(Lbrg)
    \Edge[color=blue](Lbrg)(Rrbr)
    \Edge[color=ForestGreen](Rrbr)(Lrgb)
    \Edge[color=red](Lrgb)(Rbgb)
    \Edge[color=ForestGreen](Rbgb)(Lbgr)
    \Edge[color=blue](Lbgr)(Rbrg)
    \Edge[color=red](Rbrg)(Lbrb)
    \end{tikzpicture}
    \caption{$\mathcal{G}$} \label{fig:weldedtree3colored} 
    \end{subfigure}
    \hspace{0.5em}
    \begin{subfigure}[b]{0.48\linewidth}
    \begin{tikzpicture}[scale=0.20, auto, node distance=0.2cm, every loop/.style={}, thick, fill=black!20, every arrow/.append style={dash,thick}]
    \tikzset{VertexStyle/.style = {shape = ellipse, draw}}
    \Vertex[x=0,y=0,L=$ $]{ent} 
    \extralabel{0}{\scriptsize \textsc{entrance}}{ent}
    \Vertex[x=4,y=-12,L=$ $]{Lr}
    \Vertex[x=4,y=12,L=$ $]{Lb}
    \Vertex[x=8,y=-18,L=$ $]{Lrg}
    \Vertex[x=8,y=-6,L=$ $]{Lrb}
    \Vertex[x=8,y=6,L=$ $]{Lbg}
    \Vertex[x=8,y=18,L=$ $]{Lbr}
    \Vertex[x=12,y=-21,L=$ $]{Lrgb}
    \Vertex[x=12,y=-15,L=$ $]{Lrgr}
    \Vertex[x=12,y=-9,L=$ $]{Lrbg}
    \tikzset{VertexStyle/.style = {shape = ellipse, draw, fill=Plum}}
    \Vertex[x=12,y=-3,L=$ $]{Lrbr}
    \tikzset{VertexStyle/.style = {shape = ellipse, draw, fill=Lavender}}
    \Vertex[x=12,y=3,L=$ $]{Lbgr}
    \tikzset{VertexStyle/.style = {shape = ellipse, draw}}
    \Vertex[x=12,y=9,L=$ $]{Lbgb}
    \Vertex[x=12,y=15,L=$ $]{Lbrg}
    \Vertex[x=12,y=21,L=$ $]{Lbrb}
    \Vertex[x=36,y=0,L=$ $]{exit} 
    \extralabel{180}{\scriptsize \textsc{exit}}{exit}
    \Vertex[x=32,y=-12,L=$ $]{Rb}
    \Vertex[x=32,y=12,L=$ $]{Rr}
    \Vertex[x=28,y=-18,L=$ $]{Rbr}
    \Vertex[x=28,y=-6,L=$ $]{Rbg}
    \Vertex[x=28,y=6,L=$ $]{Rrb}
    \Vertex[x=28,y=18,L=$ $]{Rrg}
    \Vertex[x=24,y=-21,L=$ $]{Rbrb}
    \Vertex[x=24,y=-15,L=$ $]{Rbrg}
    \Vertex[x=24,y=-9,L=$ $]{Rbgb}
    \Vertex[x=24,y=-3,L=$ $]{Rbgr}
    \Vertex[x=24,y=3,L=$ $]{Rrbr}
    \Vertex[x=24,y=9,L=$ $]{Rrbg}
    \Vertex[x=24,y=15,L=$ $]{Rrgr}
    \Vertex[x=24,y=21,L=$ $]{Rrgb}
    
    \Edge[color=red](ent)(Lr)
    \Edge[color=blue](ent)(Lb)
    \Edge[color=ForestGreen](Lr)(Lrg)
    \Edge[color=blue](Lr)(Lrb)
    \Edge[color=ForestGreen](Lb)(Lbg)
    \Edge[color=red](Lb)(Lbr)
    \Edge[color=blue](Lrg)(Lrgb)
    \Edge[color=red](Lrg)(Lrgr)
    \Edge[color=ForestGreen](Lrb)(Lrbg)
    \Edge[color=red](Lrb)(Lrbr)
    \Edge[color=red](Lbg)(Lbgr)
    \Edge[color=blue](Lbg)(Lbgb)
    \Edge[color=ForestGreen](Lbr)(Lbrg)
    \Edge[color=blue](Lbr)(Lbrb)
    
    \Edge[color=blue](exit)(Rb)
    \Edge[color=red](exit)(Rr)
    \Edge[color=ForestGreen](Rb)(Rbg)
    \Edge[color=red](Rb)(Rbr)
    \Edge[color=ForestGreen](Rr)(Rrg)
    \Edge[color=blue](Rr)(Rrb)
    \Edge[color=red](Rbg)(Rbgr)
    \Edge[color=blue](Rbg)(Rbgb)
    \Edge[color=ForestGreen](Rbr)(Rbrg)
    \Edge[color=blue](Rbr)(Rbrb)
    \Edge[color=blue](Rrg)(Rrgb)
    \Edge[color=red](Rrg)(Rrgr)
    \Edge[color=ForestGreen](Rrb)(Rrbg)
    \Edge[color=red](Rrb)(Rrbr)
    
    \Edge[color=ForestGreen](Lbrb)(Rbgr)
    \Edge[color=blue](Rbgr)(Lrbg)
    \Edge[color=red](Lrbg)(Rbrb)
    \Edge[color=ForestGreen](Rbrb)(Lbgb)
    \Edge[color=red](Lbgb)(Rrbg)
    \Edge[color=blue](Rrbg)(Lbgr)
    \Edge[color=ForestGreen](Lbgr)(Rrgr)
    \Edge[color=blue](Rrgr)(Lrgr)
    \Edge[color=ForestGreen](Lrgr)(Rrgb)
    \Edge[color=red](Rrgb)(Lbrg)
    \Edge[color=blue](Lbrg)(Rrbr)
    \Edge[color=ForestGreen](Rrbr)(Lrgb)
    \Edge[color=red](Lrgb)(Rbgb)
    \Edge[color=ForestGreen](Rbgb)(Lrbr)
    \Edge[color=blue](Lrbr)(Rbrg)
    \Edge[color=red](Rbrg)(Lbrb)
    \end{tikzpicture}
    \caption{$\mathcal{G}^\sigma$} \label{fig:weldedtreesigma} 
    \end{subfigure}
    \caption{Example of a color-preserving permutation $\sigma$ for the graph $\mathcal{G}$ in \Cref{fig:weldedtreecoloredlabeled}. The permutation $\sigma$ is the identity permutation except that it maps the vertex colored {\color{Lavender} lavender} to the vertex colored {\color{Plum} plum}. Note that the resulting graph $G^\sigma$ is a valid 3-colored \wtg .}
    \label{fig:colorpreservingpermutation}
\end{figure}

We now verify that the graphs obtained by applying color-preserving permutations on $\mathcal{G}$ are valid 3-colored \wtg s.

\begin{lemma} \label{lem:colorpreservingpermutations}
Let $\sigma$ be a color-preserving permutation. Then $\mathcal{G}^\sigma$ is a valid 3-colored \wtg.
\end{lemma}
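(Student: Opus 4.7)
The plan is to verify the two defining conditions of a valid 3-colored welded tree graph for $\mathcal{G}^\sigma$: (i) its graph structure matches \Cref{def:weldedtree}, and (ii) its edge coloring is a proper 3-edge-coloring.

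For (i), I would first observe that by \Cref{def:colorpreservingpermutation}, $\sigma$ acts as the identity on non-$\weld$ vertices and preserves all non-$\weld$ edges, so the induced subgraphs of $\mathcal{G}^\sigma$ on the vertex sets of $T_L$ and $T_R$ are exactly $T_L$ and $T_R$ themselves. It remains to show that the $\weld$ subgraph $\weld^\sigma$ of $\mathcal{G}^\sigma$ is a cycle alternating between leaves of $T_L$ and leaves of $T_R$. By construction, the map $\sigma$ is a bijection on $\weld$ vertices that takes each weld edge $\{u,v\}$ of $\mathcal{G}$ to the weld edge $\{\sigma(u),\sigma(v)\}$ of $\mathcal{G}^\sigma$, and hence induces a graph isomorphism from $\weld$ (which is an alternating cycle by hypothesis on $\mathcal{G}$) onto $\weld^\sigma$; so $\weld^\sigma$ is a cycle of the same length. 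Moreover, by \Cref{def:c-leftandc-rightvertices,def:colorpreservingpermutation}, $\sigma$ maps left-leaves to left-leaves and right-leaves to right-leaves, preserving the bipartition, so this cycle alternates between the two sets of leaves as required.

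For (ii), I would check the coloring constraint vertex by vertex. Every non-$\weld$ vertex has exactly the same incident edges (with the same colors) in $\mathcal{G}^\sigma$ as in $\mathcal{G}$, so validity there is inherited. For a $\weld$ vertex $w = \sigma(v)$, the unique edge joining $w$ to a non-$\weld$ vertex is unchanged from $\mathcal{G}$; let $c_v$ denote its color. Since $\sigma$ is color-preserving, $v$ is a $c_v$-left (resp.\ $c_v$-right) vertex whenever $\sigma(v)$ is, so the color $c_v$ coincides with the color identifying $v$ in $\mathcal{G}$. The two remaining edges at $w$ in $\mathcal{G}^\sigma$ are weld edges that, under the isomorphism induced by $\sigma$, correspond to the two weld edges at $v$ in $\mathcal{G}$ with the same colors. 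Because $\mathcal{G}$ is a properly 3-edge-colored welded tree graph and $v$ has degree $3$ with its non-weld edge of color $c_v$, these two weld colors are exactly $\mathcal{C}\setminus\{c_v\}$. Hence all three colors incident to $w$ in $\mathcal{G}^\sigma$ are distinct.

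I do not expect a significant technical obstacle here; the proof is essentially bookkeeping, and the only subtle point is transporting the local 3-coloring pattern at $v$ to $\sigma(v)$, which is exactly what the color-preserving property of $\sigma$ is designed to allow. Combining (i) and (ii) yields that $\mathcal{G}^\sigma$ is a valid 3-colored welded tree graph.
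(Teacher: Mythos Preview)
Your proposal is correct and follows essentially the same approach as the paper: both arguments first verify that $\weld^\sigma$ is an alternating cycle (using that $\sigma$ is a bijection preserving the left/right bipartition), and then check the 3-coloring locally at each $\weld$ vertex by transporting the incident color pattern from $v$ to $\sigma(v)$ via the color-preserving property. The only cosmetic difference is that the paper phrases the cycle argument by explicitly naming the vertices $v_1,\ldots,v_{2^{n+1}}$, whereas you invoke the induced graph isomorphism directly.
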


\begin{proof}
We first argue that $\mathcal{G}^\sigma$ is a valid \wtg. Recall that the edges of $\mathcal{G}$ that are not in the $\weld$ remain invariant under $\sigma$. Thus, it remains to show that $\weld^\sigma$ is a cycle alternating between the vertices in columns $n$ and $n+1$. Since $\mathcal{G}$ is a \wtg, $\weld$ is a cycle on vertices denoted by $v_1, \ldots, v_{2^{n+1}}$ such that $v_{2i-1}$ is a vertex in column $n$ and $v_{2i}$ is a vertex in column $n+1$ for each $i \in [2^n]$, and $v_i$ is joined to $v_{i+1 \bmod 2^{n+1}}$ and $v_{i-1 \bmod 2^{n+1}}$ for each $i \in [2^{n+1}]$. Therefore, since $\sigma$ is color-preserving, for each $i \in [2^n]$, $\sigma(v_{2i-1})$ is a vertex in column $n$ and $\sigma(v_{2i})$ is a vertex in column $n+1$, and for each $i \in [2^{n+1}]$, $\sigma(v_i)$ is joined to $\sigma(v_{i+1 \bmod 2^{n+1}})$ and $\sigma(v_{i-1 \bmod 2^{n+1}})$. It follows that $\weld^\sigma$ is a cycle alternating between the vertices in columns $n$ and $n+1$. 

Now, we claim that $\mathcal{G}^\sigma$ admits a valid 3-coloring. Let $v$ be any vertex of $\weld$. Without loss of generality, assume that $v$ is red-right. Then, as $\mathcal{G}$ is a valid \wtg , $v$ is joined to two vertices $v_g$ and $v_b$ of $\weld$ with a green and a blue edge respectively (along with a vertex of $T_R$ with a red edge). Since $\sigma$ is color-preserving, we know that $\sigma(v)$ is also red-right in $\mathcal{G}$, and is joined to a vertex $v_r$ of $T_R$ with a red edge. Therefore, in $\mathcal{G}^\sigma$, $\sigma(v)$ is joined to $v_r$ with a red edge and to vertices $\sigma(v_g)$ and $\sigma(v_b)$ of $\weld$ with a green and a blue edge, respectively. On the other hand, non-$\weld$ vertices of $\mathcal{G}$ and their adjacency lists are unchanged by $\sigma$. Our desired claim follows.
\end{proof}

Recall that \Cref{def:eta_c} specifies the classical oracle function $\eta^\sigma_c$ for each $c \in \mathcal{C}$ associated with $\mathcal{G}^\sigma$ for the identity permutation $\sigma$. The following definition generalizes this by specifying the classical oracle function $\eta^\sigma_c$ for each $c \in \mathcal{C}$ associated with $\mathcal{G}^\sigma$ for any color-preserving permutation $\sigma$.

\begin{definition}
Let $\valid$, $I_c$, and $N_c$ for each $c \in \mathcal{C}$, $\noedge$, and $\invalid$ be defined as in \Cref{def:weldedtree,def:eta_c}. For any color-preserving permutation $\sigma$, let 
\begin{align}
    \eta^\sigma_c(v) \defeq
    \begin{cases}
        \sigma(\eta_c(\sigma^{-1}(v)) & v, \eta_c(v) \in \mathcal{V}_\weld \\
        \eta_c(v) & \text{otherwise}
    \end{cases}
\end{align}
where $\mathcal{V}_\weld$ refers to the set of vertices of $\weld$.

Let $\eta^\sigma \defeq \{\eta^\sigma_c:c \in \mathcal{C}\}$ be the oracle corresponding to the color-preserving permutation $\sigma$.
\end{definition}

We now define the notion of path-embedding in $\mathcal{G}^\sigma$ for a sequence of colors $t$, which informally refers to the path resulting from beginning at the $\entrance$ and following the edge colors given by $t$ in order.

\begin{definition}[Path-embedding]
Let $\sigma$ be any color-preserving permutation. Let $\ell \in [p(n)]$ and $t \in \mathcal{C}^{\ell}$. That is, $t = (c_1, \ldots, c_{\ell})$ for some $c_1, \ldots, c_{\ell} \in \mathcal{C}$. Then, define the \emph{path-embedding} of $t$ under the oracle $\eta^\sigma$, denoted by $\eta^\sigma(t)$, to be a length-$\ell$ tuple of vertex labels as follows. The $j$th element of $\eta^\sigma(t)$ is 
\begin{equation}
    \eta^\sigma(t)_j \defeq
    \begin{cases}
        \eta^\sigma_{c_1}(\entrance) & j=1 \\
        \eta^\sigma_{c_j}(\eta^\sigma(t)_{j-1}) & \text{otherwise}.
    \end{cases}
\end{equation}
We say that the path-embedding $\eta^\sigma(t)$ \emph{encounters a vertex} $v$ if $\eta^\sigma(t)_j = v$ for some $j \in [\ell]$, and that $\eta^\sigma(t)$ \emph{encounters an edge} joining vertices $v$ and $u$ if $\eta^\sigma(t)_j = v$ and $\eta^\sigma(t)_{j+1} = u$ (or the other way around) for some $j \in [\ell-1]$. Moreover, $\eta^\sigma(t)$ \emph{encounters a cycle} in $\mathcal{G}^\sigma$ if it encounters a sequence of vertices and edges that forms a cycle in $\mathcal{G}^\sigma$ and \emph{encounters a tree from $\mathbb{T}$} if it encounters a leaf of this tree.
\end{definition}

\Cref{fig:path-embedding} demonstrates an example of a path in $\mathcal{T}$ and the corresponding path-embedding in $\mathcal{G}$.
We restrict our attention to the color sequences that do not contain even-length palindromes. For such a sequence $t$, the path-embedding $\eta^\sigma(t)$ encounters a cycle exactly when it encounters a vertex twice.

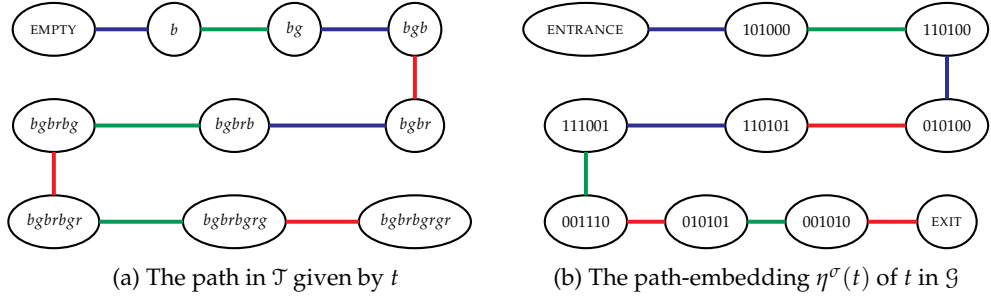
\begin{figure}
    \tiny
    \centering
    \begin{subfigure}[b]{0.4\linewidth}
    \begin{tikzpicture}[scale=0.32, auto, every loop/.style={}, thick, every arrow/.append style={dash,thick}]
    \tikzset{VertexStyle/.style = {shape = ellipse, minimum size = 20pt, draw}}
    \Vertex[x=0,y=30,L=$\emptyadd$]{emp}
    \Vertex[x=5,y=30,L=$b$]{b}
    \Vertex[x=10,y=30,L=$bg$]{bg}
    \Vertex[x=15,y=30,L=$bgb$]{bgb}
    \Vertex[x=15,y=26,L=$bgbr$]{bgbr}
    \Vertex[x=7.5,y=26,L=$bgbrb$]{bgbrb}
    \Vertex[x=0,y=26,L=$bgbrbg$]{bgbrbg}
    \Vertex[x=0,y=22,L=$bgbrbgr$]{bgbrbgr}
    \Vertex[x=7.5,y=22,L=$bgbrbgrg$]{bgbrbgrg}
    \Vertex[x=15,y=22,L=$bgbrbgrgr$]{bgbrbgrgr}
    \tikzset{every node/.style={opacity=0,text opacity=1,scale=0.75}}
    \tikzset{EdgeStyle/.style=auto, color=red}
    \Edge[color=Blue](emp)(b)
    \Edge[color=ForestGreen](b)(bg)
    \Edge[color=Blue](bg)(bgb)
    \Edge[color=Red](bgb)(bgbr)
    \Edge[color=Blue](bgbr)(bgbrb)
    \Edge[color=ForestGreen](bgbrb)(bgbrbg)
    \Edge[color=Red](bgbrbg)(bgbrbgr)
    \Edge[color=ForestGreen](bgbrbgr)(bgbrbgrg)
    \Edge[color=Red](bgbrbgrg)(bgbrbgrgr)
    \end{tikzpicture}
    \caption{The path in $\mathcal{T}$ given by $t$} 
    \label{fig:path-embeddingaddresses} 
    \end{subfigure}
    \begin{subfigure}[b]{0.4\linewidth}
    \hspace{0.65em}
    \begin{tikzpicture}[scale=0.32, auto, node distance=1cm, every loop/.style={}, thick, every arrow/.append style={dash,thick}]
    \tikzset{VertexStyle/.style = {shape = ellipse, minimum size = 20pt, draw}}
    \Vertex[x=0,y=30,L=$\entrance$]{emp}
    \Vertex[x=7.5,y=30,L=$101000$]{b}
    \Vertex[x=15,y=30,L=$110100$]{bg}
    \Vertex[x=15,y=26,L=$010100$]{bgb}
    \Vertex[x=7.5,y=26,L=$110101$]{bgbr}
    \Vertex[x=0,y=26,L=$111001$]{bgbrb}
    \Vertex[x=0,y=22,L=$001110$]{bgbrbg}
    \Vertex[x=5,y=22,L=$010101$]{bgbrbgr}
    \Vertex[x=10,y=22,L=$001010$]{bgbrbgrg}
    \Vertex[x=15,y=22,L=$\exit$]{bgbrbgrgr}
    \tikzset{every node/.style={opacity=0,text opacity=1,scale=0.75}}
    \tikzset{EdgeStyle/.style=auto, color=red}
    \Edge[color=Blue](emp)(b)
    \Edge[color=ForestGreen](b)(bg)
    \Edge[color=Blue](bg)(bgb)
    \Edge[color=Red](bgb)(bgbr)
    \Edge[color=Blue](bgbr)(bgbrb)
    \Edge[color=ForestGreen](bgbrb)(bgbrbg)
    \Edge[color=Red](bgbrbg)(bgbrbgr)
    \Edge[color=ForestGreen](bgbrbgr)(bgbrbgrg)
    \Edge[color=Red](bgbrbgrg)(bgbrbgrgr)
    \end{tikzpicture}
    \caption{The path-embedding $\eta^\sigma(t)$ of $t$ in $\mathcal{G}$} 
    \label{fig:path-embeddingvertices} 
    \end{subfigure}
    \caption{Example of a path-embedding for the graph $\mathcal{G}$ in \Cref{fig:weldedtreecoloredlabeled} and the identity permutation $\sigma$.}
    \label{fig:path-embedding}
\end{figure}

Now we describe notation for each time a certain path-embedding crosses the $\weld$ so that we can refer to the tree from $\mathbb{T}$ that it goes to and the $\weld$ edge that it goes through.

\begin{definition}
Let $\sigma$ be any color-preserving permutation and let $t$ be any sequence of colors that does not contain even-length palindromes. We use $T_{t,i}^\sigma \in \mathbb{T}$ to denote the $i$th subtree and $e^\sigma_{t,i}$ to denote the $i$th edge of the $\weld$ encountered by the path-embedding $\eta^\sigma(t)$.\footnote{It is possible for $\eta^\sigma(t)$ to encounter $\noedge$ or $\invalid$. However, once that happens, $\eta^\sigma(t)$ cannot further encounter any subtree from $\mathbb{T}$.} We refer to the event of the path-embedding $\eta^\sigma(t)$ encountering the $i$th edge of $\weld$ as the $i$th $\weld$-crossing. Furthermore, let $\ell^\sigma(t)$ denote the number of subtrees encountered by the path-embedding $\eta^\sigma(t)$. 
\end{definition}

Note that the number of $\weld$ edges encountered by the path-embedding $\eta^\sigma(t)$ is $\ell^\sigma(t)-1$. For each $i \in [\ell^\sigma(t)-1]$, the edge $e^\sigma_{t,i}$ joins a vertex in $T_{t,i}^\sigma$ to a vertex in $T_{t,i+1}^\sigma$. Next, we define $\ell^\sigma(t)$ prefixes of $t$ that are relevant for our analysis. Intuitively, for $i \in [\ell^\sigma(t)-1]$, the sequence of colors $\pre^\sigma_i(t)$ refers to the prefix of $t$ such that if we begin from the $\entrance$ and follow the edge colors given by $\pre^\sigma_i(t)$, we will arrive at the vertex reached by the $i$th edge of $\weld$ encountered by $\eta^\sigma(t)$.

\begin{definition}
Let $\sigma$ be any color-preserving permutation and let $t$ be any sequence of colors that does not contain even-length palindromes. Let $\pre^\sigma_{\ell^\sigma(t)}(t) \defeq t$. For each $i \in [\ell^\sigma(t)-1]$, let $\pre^\sigma_i(t)$ denote the longest prefix of $t$ such that the path-embedding $\eta^\sigma(\pre^\sigma_i(t))$ does not encounter the $i$th $\weld$-crossing.
\end{definition}

Notice that $\pre^\sigma_i(t)$ is a sequence that begins with the color of an edge incident to $\entrance$ and ends with the color of the edge encountered by the path-embedding $\eta^\sigma(t)$ just before the $i$th $\weld$-crossing. The following definition formalizes statements (i) and (ii) that we described intuitively at the beginning of this section.

\begin{definition} \label{def:desirable}
Let $\sigma$ be any color-preserving permutation and let $t$ be any sequence of colors that does not contain an even-length palindrome. We say that $t$ has \emph{small displacement} if after the first $\weld$-crossing, the path-embedding $\eta^\sigma(t)$ does not encounter any vertex that is distance at least $n/3$ away from the closest vertex of $\weld$. We say that $t$ is \emph{non-colliding} if for any edge $e$ joining some leaf of $T_{t,i}^\sigma$ with some leaf of $T_{t,j}^\sigma$ for some $i, j \in [\ell^\sigma(t)]$, $e$ must be $e^\sigma_{t,i}$ or $e^\sigma_{t,j}$. 
We say that $t$ is \emph{desirable} if $t$ has small displacement and is non-colliding. We also say that $t$ has \emph{large displacement} if it does not have small displacement, that $t$ is \emph{colliding} if it is not non-colliding, and that $t$ is \emph{undesirable} if it is not desirable.
\end{definition}

Note that, in the above definition, if $e = e^\sigma_{t,i}$, then $j=i+1$ and if $e = e^\sigma_{t,j}$, then $j=i-1$. Thus, $t$ being non-colliding essentially means that if there is an edge $e$ between trees $T_{t,i}^\sigma$ and $T_{t,j}^\sigma$ for some $i, j \in [\ell^\sigma(t)]$, then $j=i+1$ or $j=i-1$, and $T_{t,i}^\sigma$ and $T_{t,j}^\sigma$ are not joined by any edge other than $e$.

It is easy to see that for any sequence of colors $t$ that does not contain an even-length palindrome, beginning from the $\entrance$ and following the sequence of colors specified by $t$ will not result in reaching the $\exit$ if $t$ has small displacement, and will not result in going through a cycle if $t$ is non-colliding. The following lemma is crucial for our argument in this section, which essentially shows that any prefix of any fixed sequence of colors $t$ is unlikely to have large displacement or be colliding (as defined in \Cref{def:desirable}).

\begin{lemma} \label{lem:3-coloringmain}
Let $\ell \in [p(n)]$ and $t \in \mathcal{C}^{\ell}$ such that $t$ does not contain even-length palindromes. Choose the permutation $\sigma$ according to the distribution $D_n$. Then for all $i \in [\ell^\sigma(t)]$, $\pre_i(t)$ is desirable with probability at least $1-4i^2 2^{-n/3}$ over the choice of $\sigma$.
\end{lemma}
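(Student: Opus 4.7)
The plan is to prove the lemma by induction on $i$. The base case $i=1$ is immediate: the prefix $\pre_1(t)$ stops before the first $\weld$-crossing, so the walk stays inside $T_L$ and only the single subtree $T_{t,1}^\sigma$ has been encountered; then the non-colliding condition holds trivially (no pair of visited subtrees exists) and the small-displacement clause of \Cref{def:desirable} is vacuous, since it only governs behaviour after the first $\weld$-crossing. Thus $\Pr[\pre_1(t)\text{ desirable}]=1$, which comfortably beats $1-4\cdot 2^{-n/3}$.

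For the inductive step, I split
\[
\Pr[\pre_i(t)\text{ undesirable}] \leq \Pr[\pre_{i-1}(t)\text{ undesirable}] + \Pr[\pre_i(t)\text{ undesirable}\mid \pre_{i-1}(t)\text{ desirable}]
\]
using the union bound, so it suffices to bound the conditional term by $(4i^2-4(i-1)^2)\,2^{-n/3}=(8i-4)\,2^{-n/3}$. Condition on the entire revealed history of $\sigma$ consistent with $\pre_{i-1}(t)$ being desirable. The new randomness that enters in this step is the identity of the endpoint of $e_{t,i-1}^\sigma$ on the $T_{t,i}^\sigma$-side of the weld (together, implicitly, with the weld-neighbour assignments of the other $2^{n/3}-1$ leaves of $T_{t,i}^\sigma$). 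Because $D_n$ is uniform over color-preserving permutations, a principle-of-deferred-decisions argument shows that this endpoint is nearly uniform, up to a multiplicative $1+O(i\cdot 2^{-n})$ factor coming from the global cycle constraint on $\weld^\sigma$, over the $\sim 2^n/3$ as-yet-unused weld vertices of the compatible color type on the opposite side of the weld.

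With this near-uniformity in hand, I partition the new failure event into three sub-events and union-bound. (A) $T_{t,i}^\sigma$ coincides with some previously-visited $T_{t,j}^\sigma$ for $j<i$: at most $i-1$ of the $\Theta(2^{2n/3})$ candidate subtrees on the relevant side are forbidden, contributing probability $O(i\cdot 2^{-2n/3})$, which is negligible. (B) The internal sub-path of $\pre_i(t)$ inside $T_{t,i}^\sigma$ reaches the root of $T_{t,i}^\sigma$, violating small displacement: applying \Cref{cor:equalsequences} to the height-$n/3$ tree $T_{t,i}^\sigma$ with the length-$n/3$ ascent sequence dictated by $t$, at most one entry leaf per candidate subtree realizes such an ascent, so the number of bad entry vertices is $O(2^{2n/3})$ out of $\approx 2^n/3$, giving probability $O(2^{-n/3})$. (C) Some weld edge other than the traversed ones joins a leaf of $T_{t,i}^\sigma$ to a leaf of some previously-visited $T_{t,j}^\sigma$: the $\approx 2^{n/3+1}$ weld edges emanating from $T_{t,i}^\sigma$ each land in any specified previously-visited subtree with probability $O(2^{-2n/3})$ by the same deferred-decisions reasoning, and summing over the $i-1$ previously-visited subtrees gives $O(i\cdot 2^{-n/3})$.

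Summing (A), (B), and (C) yields a conditional failure probability of $O(i\cdot 2^{-n/3})$, which sits comfortably inside the target $(8i-4)\,2^{-n/3}$ once the implicit constants are tracked. The main obstacle I expect is rigorously justifying the deferred-decisions step under the color-preserving ensemble: by \Cref{lem:colorpreservingpermutations} the weld $\weld^\sigma$ must remain a single alternating cycle rather than a free matching, so successive weld reveals are not quite independent. I plan to handle this by counting the color-preserving permutations compatible with each partial reveal and showing the resulting multiplicative bias is $1+O(i\cdot 2^{-n})$, which is dwarfed by the $2^{-n/3}$ scale we aim for.
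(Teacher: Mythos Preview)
Your plan is essentially the paper's proof: induction on $i$, conditioning on the desirable event at step $i-1$, and a union bound over a ``large displacement'' term and a ``collision'' term, each bounded via \Cref{cor:equalsequences} and \Cref{lem:equalcolorings}. The estimates you sketch for (B) and (C) match the paper's $4\cdot 2^{-n/3}$ and $4i\cdot 2^{-n/3}$.

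Two clarifications. First, your event (A) is superfluous: the paper observes that if $\pre^\sigma_{i-1}(t)$ is non-colliding and $t$ has no even-length palindromes, then $T^\sigma_{t,i}$ is automatically distinct from all earlier $T^\sigma_{t,j}$, so no probabilistic bound is needed there. Second, and more importantly, your worry about a ``global cycle constraint on $\weld^\sigma$'' is a misreading of $D_n$. By \Cref{def:colorpreservingpermutation}, $D_n$ is uniform over \emph{all} color-preserving permutations of the weld vertices, and \Cref{lem:colorpreservingpermutations} shows \emph{every} such permutation produces a valid single alternating cycle; there is no additional constraint to correct for. Concretely, $\sigma$ is a product of six independent uniform permutations (one per $c$-left/$c$-right class), so the conditional distributions you need are \emph{exactly} uniform over the unused vertices of the relevant class, not merely uniform up to $1+O(i\cdot 2^{-n})$. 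The paper exploits this directly: rather than deferred decisions, it composes $\sigma$ with a transposition swapping the landing vertex $v^\sigma_i$ with any other unused vertex of the same color-type, checks this yields another permutation in the same conditioning set $\Delta_i(t)$, and concludes exact uniformity by symmetry. This swap argument is cleaner than tracking conditional counts and is the one place where your write-up would benefit from revision.
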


\begin{proof}
The proof is by induction on $i$. For the base case, note that $\pre^\sigma_1(t)$ does not encounter any edges in any tree in $\mathbb{T}$ other than the one it first reaches. Therefore, $\pre^\sigma_1(t)$ is desirable with certainty. 

Let $i \in [\ell^\sigma(t)-1]$. Choose $\sigma$ according to the distribution $D_n$. As the induction hypothesis, assume that $\pre^\sigma_i(t)$ is desirable with probability at least $1-4i^2 2^{-n/3}$. Our strategy is to bound the probability of $\pre^\sigma_{i+1}$ having large displacement conditioned on $\pre^\sigma_i(t)$ being desirable, and then bound the probability of $\pre^\sigma_{i+1}$ being colliding conditioned on $\pre^\sigma_i(t)$ being desirable and $\pre^\sigma_{i+1}$ having small displacement. We begin by supposing that $\pre^\sigma_i(t)$ is desirable. 

Since $t$ (and hence $\pre_{i+1}(t)$) does not contain an even-length palindrome, we know that $e^\sigma_{t,i-1} \neq e^\sigma_{t,i}$. By the definition of being non-colliding in \Cref{def:desirable}, we note that $T^\sigma_{t,i+1} \neq T^\sigma_{t,j}$ for any $j \in [i]$; otherwise, we would have $j = i-1$, which would mean that the trees $T^\sigma_{t,i-1}$ and $T^\sigma_{t,i}$ are joined by distinct edges $e^\sigma_{t,i-1}$ and $e^\sigma_{t,i}$, so $\pre^\sigma_i(t)$ would be colliding.
Consider the set $\Delta_i(t) \defeq \{\text{color-preserving~permutation }~\rho:\eta^\rho(\pre^\rho_i(t)) = \eta^\sigma(\pre^\sigma_i(t)) \text{ and } T^\rho_{t,i+1} \neq T^\rho_{t,j} \, \forall j \in [i]\}$. We know, from above, that $\sigma \in \Delta_i(t)$. Moreover, for any $\rho \in \Delta_i(t)$, $\pre^\rho_i(t)$ is undesirable. Since $\sigma$ is drawn from $D_n$, it follows that, conditioned on $\pre^\sigma_i(t)$ being desirable, $\sigma$ is drawn uniformly from $\Delta_i(t)$.

For any permutation $\rho$ and $j \in [\ell^\rho(t)]$, let $v^\rho_j$ be the vertex reached by the $j$th $\weld$-crossing with respect to the path-embedding $\eta^\rho(t)$. Without loss of generality, assume that $v^\sigma_i$ is a red-right vertex.\footnote{The same argument applies for any $c$-right or $c$-left vertex for any $c \in \mathcal{C}$.} Let $u \neq v^\sigma_i$ be any red-right leaf of a tree $T$ in $\mathbb{T}$ such that $T \neq T^\sigma_{t,j}$ for all $j \in [i]$. Let $\rho$ be the color-preserving permutation that is the composition with $\sigma$ of the permutation that maps $u$ to $v^\sigma_i$ (and vice versa) and is identity otherwise. Notice that $v^\rho_i= u$ and $T^\rho_{t,i+1} = T$. Since the path-embeddings $\eta^\sigma(t)$ and $\eta^\rho(t)$ do not encounter vertices $v^\sigma_i$ and $u$ before the $i$th $\weld$-crossing, we have $\eta^\rho(\pre^\rho_i(t)) = \eta^\sigma(\pre^\sigma_i(t))$. Furthermore, as $v^\sigma_i$ and $u$ are not leaves of any tree in $\{T^\sigma_{t,j}: j \in [i]\}$, we have $T^\rho_{t,j} = T^\sigma_{t,j}$ for all $j \in [i]$. Therefore, $T \neq T^\rho_{t,j}$ for all $j \in [i]$. Hence, $\rho \in \Delta_i(t)$. It follows that for all red-right vertices $u$ that are not leaves of any tree in $\{T^\sigma_{t,j}: j \in [i]\}$, the number of permutations $\rho \in \Delta_i(t)$ such that $v^\rho_i = u$ are the same. 
Thus, the probability, over the choice of $\sigma$, that $v^\sigma_i = u$ is the same for all red-right vertices $u$ that are not leaves of any tree in $\{T^\sigma_{t,j}: j \in [i]\}$.

Let $\suc^\sigma_i(t)$ be the sequence of $n/3$ colors beginning from $v^\sigma_i$ and reaching a vertex in column $2n/3$ of $T_L$ (if $v^\sigma_i$ is a leaf in $T_L$) or $T_R$ (if $v^\sigma_i$ is a leaf in $T_R$). Note that $\suc^\sigma_i(t)$ depends on the choice of $\sigma$ as two distinct red-right vertices (for instance) can have distinct color sequences that lead to their respective ancestors in column $2n/3$ of $T_R$.

Let $\sub^\sigma_i(t)$ denote the largest suffix of $\pre^\sigma_{i+1}(t)$ such that the path-embedding $\eta^\sigma(\sub_i(t))$ does not encounter the edge $e^\sigma_{t,i}$ (if it exists). This means that $\sub^\sigma_i(t)$ is a sequence that begins with the color of the edge encountered by the path-embedding $\eta^\sigma(t)$ just after the $i$th $\weld$-crossing and ends with the last color of $t$ if $i =  \ell^\sigma(t)-1$ and with the color of the edge encountered by the path-embedding $\eta^\sigma(t)$ just before the $(i+1)$st $\weld$-crossing otherwise. Let $\ell_i(t)$ denote the length of $\sub^\sigma_i(t)$. Note that, since $\sigma \in \Delta_i(t)$, $\sub^\sigma_i(t)$ does not encounter any $\weld$ edge after the $i$th $\weld$-crossing and the non-$\weld$ edges remain invariant under $\sigma$, so we write it as $\sub_i(t)$.

Let $\sub_i(t, n/3)$ denote the length-$n/3$ prefix of $\sub_i(t)$. The sequence $\sub_i(t, n/3)$ does not exist if $|\pre^\sigma_{i+1}(t)| < |\pre^\sigma_i(t)| + n/3$. But in that case, we know that $\sub_i(t)$, and hence $\pre^\sigma_{i+1}(t)$, only encounter vertices that are distance less than $n/3$ away from $v_i$, so $\pre^\sigma_{i+1}(t)$ has small displacement with certainty. 

Now, consider the case when $\sub_i(t, n/3)$ exists. Note that, since $\pre^\sigma_i(t)$ has small displacement by the induction hypothesis, $\pre^\sigma_{i+1}(t)$ has small displacement if and only if the sequence $\sub_i(t)$ beginning from $v^\sigma_i$ does not encounter any vertex that is distance at least $n/3$ away from $v^\sigma_i$. In other words, the probability that $\pre^\sigma_{i+1}(t)$ has large displacement is equal to the probability, over the choice of $\sigma$, of $\suc^\sigma_i(t) = \sub_i(t, n/3)$, which we compute as follows.

We argued above that, over the choice of $\sigma$, $v^\sigma_i$ is chosen uniformly at random from the red-right leaves of the trees not in $\{T^\sigma_{t,j}: j \in [i]\}$. This means that the required probability is upper-bounded by the ratio of the number of red-right vertices $v^\sigma_i$ that satisfy $\suc^\sigma_i(t) = \sub_i(t, n/3)$ and the number of those that are not leaves of any tree in $\{T^\sigma_{t,j}: j \in [i]\}$. By \Cref{cor:equalsequences}, the number of red-right leaves satisfying $\suc^\sigma_i(t) = \sub_i(t, n/3)$ is at most $\ceil{2^{2n/3+1}/3}$. On the other hand, by \Cref{lem:equalcolorings}, the total number of red-right vertices that are not leaves of any tree in $\{T^\sigma_{t,j}: j \in [i]\}$ is at least $\floor{2^{n}/3} -i \cdot 2^{n/3}$ as any tree in this set has $2^{n/3}$ leaves. Therefore, assuming that $\pre^\sigma_i(t)$ is desirable, the probability 
of $\pre^\sigma_{i+1}(t)$ having large displacement is
\begin{align} \label{eq:largedisplacement}
    \Pr_\sigma[E_{i+1}(\text{large displacement)} \mid E_{i+1}(\text{desirable)}] &= \Pr_\sigma[\suc^\sigma_i(t) = \sub_i(t, n/3)] \\
    & \leq \frac{\ceil{2^{2n/3+1}/3}} {\floor{2^{n}/3}-i \cdot 2^{n/3}}
    \leq \frac{4}{2^{n/3}}
\end{align}
where the last inequality follows since there can be at most $p(n)$ $\weld$-crossings, so $i < \ell^\sigma(t) \leq \ell \leq p(n)$.

Now, we bound the probability of $\pre^\sigma_{i+1}(t)$ being non-colliding conditioned on $\pre^\sigma_{i}(t)$ being desirable and $\pre^\sigma_{i+1}(t)$ having small displacement. Note that $\pre^\sigma_{i+1}(t)$ having small displacement implies that the path-embedding $\eta^\sigma(\pre^\sigma_{i+1}(t))$ will remain in the tree $T^\sigma_{t,i+1}$ until the $(i+1)$st $\weld$-crossing. Thus, by \Cref{def:desirable}, the above probability is equal to the probability of the tree $T^\sigma_{t,i+1}$ not having an edge with any of the trees in $\{T^\sigma_{t,j}: j \in [i]\}$ other than the edge $e^\sigma_{t,i}$. Our strategy is to bound the probability of any particular leaf of a tree in $\{T^\sigma_{t,j}: j \in [i]\}$ being a neighbor of some leaf of $T^\sigma_{t,i+1}$ not via the edge $e^\sigma_{t,i}$ and then apply the union bound.

Pick any leaf $v$ of any of the trees in $\{T^\sigma_{t,j}: j \in [i]\}$. Let $w$ be any neighbor of $u$ that is a vertex of $\weld^\sigma$. Without loss of generality, assume that $v$ is green-left and $w$ is red-right. First, consider the case when the edge joining $v$ with $w$ appears in the path-embedding $\eta^\sigma(\pre_{i+1}(t))$. We established above that $T^\sigma_{t,i+1} \neq T^\sigma_{t,j}$ for any $j \in [i]$.
Thus, if this edge is not $e^\sigma_{t,i}$, then $w$ cannot be a leaf of $T^\sigma_{t,i+1}$. Therefore, this case does not contribute positively to the required probability.

Now, suppose that this edge does not appear in $\eta^\sigma(\pre_{i+1}(t))$. By essentially the same argument that established that the probability that $v^\sigma_i=u$ is the same for all red-right vertices $u$ that are not leaves of any tree in $\{T^\sigma_{t,j}: j \in [i]\}$, the probability that $w = u$ is the same for all red-right vertices $u$ that are not leaves of any tree in $\{T^\sigma_{t,j}: j \in [i]\}$. This means that the probability of $w$ being a leaf of $T^\sigma_{t,i+1}$ is bounded by the ratio of the number of red-right vertices in $T^\sigma_{t,i+1}$ and the number of vertices that are not leaves of any tree in $\{T^\sigma_{t,j}: j \in [i]\}$. By \Cref{lem:equalcolorings}, the former quantity is at most $\ceil{2^{n/3}/3}$. The latter quantity is at least $\floor{2^{n}/3} - i \cdot 2^{n/3}$ as all trees in $\{T^\sigma_{t,j}: j \in [i]\}$ have $2^{n/3}$ leaves. Note that the total number of vertices that are leaves of any tree in $\{T^\sigma_{t,j}: j \in [i]\}$ is $i \cdot 2^{n/3}$, and each such vertex has $2$ neighbors in $\weld^\sigma$. Hence, by the union bound, the required probability is
\begin{align} \label{eq:colliding}
    \Pr_\sigma[E_{i+1}(\text{colliding)} \mid E_i(\text{desirable}), E_{i+1}(\text{small displacement)}] \leq \frac{2i \cdot 2^{n/3} \cdot \ceil{2^{n/3}/3}} {\floor{2^n/3}-i \cdot 2^{n/3}} \leq \frac{4i}{2^{n/3}}.
\end{align}

Now notice that $\pre^\sigma_{i+1}(t)$ can only be desirable if $\pre^\sigma_i(t)$ is desirable, $\pre^\sigma_{i+1}(t)$ has large displacement, or $\pre^\sigma_{i+1}(t)$ is colliding. Hence, the probability of $\pre^\sigma_j(t)$ being desirable satisfies
\begin{align}
    \Pr_\sigma[E_{i+1}(\text{undesirable)}] &\leq \Pr_\sigma[E_i(\text{undesirable)}] + \Pr_\sigma[E_{i+1}(\text{large displacement)} \mid E_i(\text{desirable})] \nonumber\\
    &\quad + \Pr_\sigma[E_{i+1}(\text{colliding)} \mid E_i(\text{desirable}), E_{i+1}(\text{small displacement)}] \\
    &\leq \frac{4i^2}{2^{n/3}} + \frac{4}{2^{n/3}} + \frac{4i}{2^{n/3}} \leq \frac{4(i+1)^2}{2^{n/3}}
\end{align}
where we used \cref{eq:largedisplacement,eq:colliding} for the second inequality. The lemma follows.
\end{proof}

The next corollary is sufficient to establish the classical hardness result of \Cref{thm:classical3-colohardnessmain}, even though it is a weaker statement about a special case of \Cref{lem:3-coloringmain}. Concretely, we show that for a fixed sequence of colors and a uniformly random color-preserving permutation $\sigma$, it is improbable for the corresponding path-embedding to contain the $\exit$ or a path that forms a cycle in $\mathcal{G}^\sigma$.

\begin{corollary} \label{cor:3-coloringhardnessmain}
Let $\ell \in [p(n)]$ and $t \in \mathcal{C}^{\ell}$ such that $t$ does not contain an even-length palindrome. Choose the permutation $\sigma$ according to the distribution $D_n$. Then the probability that the path-embedding $\eta^\sigma(t)$ encounters the $\exit$ or a cycle in $\mathcal{G}^\sigma$ is at most $4p(n)^2 \cdot 2^{-n/3}$.
\end{corollary}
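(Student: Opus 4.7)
The plan is to combine Lemma~\ref{lem:3-coloringmain}, applied at the maximal index $i = \ell^\sigma(t)$, with a deterministic implication. Since $\pre^\sigma_{\ell^\sigma(t)}(t) = t$ and $\ell^\sigma(t) \le \ell \le p(n)$, the lemma yields that $t$ itself is desirable except with probability at most $4(\ell^\sigma(t))^2 \cdot 2^{-n/3} \le 4p(n)^2 \cdot 2^{-n/3}$ over $\sigma \sim D_n$. It therefore suffices to prove deterministically that, whenever $t$ is desirable with respect to $\sigma$, the path-embedding $\eta^\sigma(t)$ encounters neither the $\exit$ nor a cycle in $\mathcal{G}^\sigma$.

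For the $\exit$, I would argue from the layered structure of $\mathcal{G}^\sigma$: the $\exit$ lies at column $2n+1$, so its graph-distance from the nearest $\weld$ vertex (in column $n+1$) is exactly $n$. Before the first $\weld$-crossing, $\eta^\sigma(t)$ is confined to $T_L$ and cannot visit the $\exit$; after the first $\weld$-crossing, the small-displacement clause of desirable restricts the path to distance strictly less than $n/3 < n$ from the $\weld$, again ruling out the $\exit$.

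For the absence of a cycle, I would combine three ingredients: (a)~since $t$ has no even-length palindrome, $\eta^\sigma(t)$ is non-backtracking, and a non-backtracking walk in an acyclic graph (such as any $T \in \mathbb{T}$, or the portion of $T_L$ traversed before the first $\weld$-crossing) never revisits a vertex; (b)~distinct members of $\mathbb{T}$ are vertex-disjoint; and (c)~the visited trees $T^\sigma_{t,1},\dots,T^\sigma_{t,\ell^\sigma(t)}$ are pairwise distinct. The main obstacle is step~(c), which I would establish by induction on the visit index, mirroring the argument already implicit inside the proof of Lemma~\ref{lem:3-coloringmain}: if $T^\sigma_{t,i+1} = T^\sigma_{t,j}$ for some $j \le i$, then $e^\sigma_{t,i}$ joins leaves of $T^\sigma_{t,i}$ and $T^\sigma_{t,j}$, and the non-colliding property of $\pre^\sigma_i(t)$ forces this edge to coincide with one of the weld edges actually traversed by $\pre^\sigma_i(t)$. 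Careful bookkeeping of the edge indices, together with the inductive distinctness hypothesis, pins the collision down to $j = i-1$, contradicting $e^\sigma_{t,i-1} \ne e^\sigma_{t,i}$ (which follows directly from the no-palindrome property). Combining (a)--(c), no vertex of $\mathcal{G}^\sigma$ is revisited by $\eta^\sigma(t)$, so no cycle arises, and the claimed bound follows.
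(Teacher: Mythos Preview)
Your proposal is correct and follows the same approach as the paper: apply \Cref{lem:3-coloringmain} at $i=\ell^\sigma(t)$ (using $\pre^\sigma_{\ell^\sigma(t)}(t)=t$ and $\ell^\sigma(t)\le p(n)$), and then argue that desirable implies neither the $\exit$ nor a cycle is encountered. The paper's own proof compresses the second implication into a one-line citation of \Cref{def:desirable}, whereas you spell out the deterministic argument (small displacement rules out the $\exit$; non-colliding plus the inductive distinctness of $T^\sigma_{t,1},\dots,T^\sigma_{t,\ell^\sigma(t)}$ rules out a cycle), which is exactly the reasoning the paper leaves implicit and also uses inside the proof of \Cref{lem:3-coloringmain}.
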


\begin{proof}
By \Cref{def:desirable}, if $\eta^\sigma(t)$ encounters the $\exit$, then $t$ has large displacement, and if $\eta^\sigma(t)$ encounters a cycle in $\mathcal{G}^\sigma$, then $t$ is colliding. That is, $t$ is undesirable if it encounters the $\exit$ or a cycle in $\mathcal{G}^\sigma$. Since $\ell^\sigma_t \leq \ell \leq p(n)$, $t$ is undesirable with probability at most $4(\ell^\sigma_t)^2 2^{-n/3} \leq 4p(n)^2 2^{-n/3}$ over the choice of $\sigma$ by \Cref{lem:3-coloringmain}. 
\end{proof}

We can extend the result of \Cref{cor:3-coloringhardnessmain} about polynomial-length sequences of colors to polynomial-size subtrees of the address tree $\mathcal{T}$ (see \Cref{def:addresstree}). For this purpose, we define the notion of subtree-embedding of subtrees of $\mathcal{T}$. Intuitively, the subtree-embedding of a tree $T$ describes the subgraph of $\mathcal{G}^\sigma$ obtained by querying the oracle $\eta^\sigma$ according to the sequences of colors given by the vertex labels of $T$. 

\begin{definition}[Subtree-embedding] \label{def:treeembedding}
Let $\sigma$ be any color-preserving permutation. Let $\ell \in [p(n)]$ and $t \in \mathcal{C}^{\ell}$. Let $T$ be a subtree of the address tree $\mathcal{T}$ of size $p(n)$ that contains the vertex labeled $\emptyaddress$ but does not contain vertices having labels in $\specialaddresses \setminus \{\emptyaddress\}$. For any vertex of $T$ labeled by $t \neq \emptyaddress$, let $c_{|t|}$ denote the last color appearing in the sequence $t$ and let $\pre(t)$ denote the color sequence formed by removing the last color from $t$. Define the subtree-embedding of $T$ under the oracle $\eta^\sigma$, denoted $\eta^\sigma(T)$, to be a tree isomorphic to $T$ whose vertex labels are in $\valid$ and specified as follows. The vertex of $\eta^\sigma(T)$ corresponding to the vertex of $T$ labeled by $t$ is
\begin{equation}
    \eta^\sigma(T)_t \defeq 
    \begin{cases}
        \entrance & t = \emptyaddress \\
        \eta^\sigma_{c_{|t|}}(\eta^\sigma(T)_{\pre(t)}) & \text{otherwise}.
    \end{cases}
\end{equation}
We say that the subtree-embedding $\eta^\sigma(T)$ \emph{encounters the $\exit$} if it contains a vertex labeled $\exit$ and that $\eta^\sigma(T)$ \emph{encounters a cycle} if it contains two vertices having the same label. 
\end{definition}

\Cref{fig:tree-embedding} illustrates an example of a subtree of $\mathcal{T}$ and the corresponding subtree-embedding in $\mathcal{G}$. For any tree $T$ specified in \Cref{def:treeembedding}, the root of $T$ will always be $\emptyaddress$, so the root of $\eta^\sigma(T)$ will always be $\entrance$. The subtree-embedding $\eta^\sigma(T)$ of a tree $T$ will correspond to the subgraph of $\mathcal{G}^\sigma$ that contain vertices which can be reached by following the addresses given by vertex labels of $T$ in $\mathcal{G}^\sigma$ beginning at the $\entrance$.

\begin{figure}
    \tiny
    \centering
    \begin{subfigure}[b]{0.4\linewidth}
    \begin{tikzpicture}[scale=0.22, auto, node distance=0.2cm, every loop/.style={}, thick, every arrow/.append style={dash,thick}]
    \tikzset{VertexStyle/.style = {shape = ellipse,  minimum size = 20pt, draw}}
    \Vertex[x=0,y=30,L=$\emptyadd$]{emp}
    \Vertex[x=-8,y=26,L=$r$]{r}
    \Vertex[x=8,y=26,L=$b$]{b}
    \Vertex[x=-12,y=22,L=$r\comma g$]{rg}
    \Vertex[x=-4,y=22,L=$r\comma b$]{rb}
    \Vertex[x=4,y=22,L=$b\comma g$]{bg}
    \Vertex[x=12,y=22,L=$b\comma r$]{br}
    \Vertex[x=-8,y=18,L=$r\comma b\comma g$]{rbg}
    \Vertex[x=0,y=18,L=$r\comma b\comma r$]{rbr}
    \Vertex[x=8,y=18,L=$b\comma g\comma b$]{bgb}
    \Vertex[x=-12,y=14,L=$r\comma b\comma g\comma r$]{rbgr}
    \Vertex[x=-4,y=14,L=$r\comma b\comma r\comma b$]{rbrb}
    \Vertex[x=4,y=14,L=$b\comma g\comma b\comma r$]{bgbr}
    \Vertex[x=12,y=14,L=$b\comma g\comma b\comma g$]{bgbg}
    \tikzset{every node/.style={opacity=0,text opacity=1,scale=0.75}}
    \tikzset{EdgeStyle/.style=auto, color=red}
    \Edge[color=red](emp)(r)
    \Edge[color=blue](emp)(b)
    \Edge[color=ForestGreen](r)(rg)
    \Edge[color=blue](r)(rb)
    \Edge[color=ForestGreen](b)(bg)
    \Edge[color=red](b)(br)
    \Edge[color=ForestGreen](rb)(rbg)
    \Edge[color=red](rb)(rbr)
    \Edge[color=blue](bg)(bgb)
    \Edge[color=red](rbg)(rbgr)
    \Edge[color=blue](rbr)(rbrb)
    \Edge[color=red](bgb)(bgbr)
    \Edge[color=ForestGreen](bgb)(bgbg)
    \end{tikzpicture}
    \caption{The subtree $T$ of $\mathcal{T}$} 
    \label{fig:tree-embeddingaddresses} 
    \end{subfigure}
    \hspace{2.5em}
    \begin{subfigure}[b]{0.4\linewidth}
    \begin{tikzpicture}[scale=0.22, auto, node distance=0.2cm, every loop/.style={}, thick, every arrow/.append style={dash,thick}]
    \tikzset{VertexStyle/.style = {shape = ellipse, ,  minimum size = 20pt, draw}}
    \Vertex[x=0,y=30,L=$\entrance$]{emp}
    \Vertex[x=-8,y=26,L=$010110$]{r}
    \Vertex[x=8,y=26,L=$101000$]{b}
    \Vertex[x=-12,y=22,L=$011101$]{rg}
    \Vertex[x=-4,y=22,L=$101001$]{rb}
    \Vertex[x=4,y=22,L=$110100$]{bg}
    \Vertex[x=12,y=22,L=$101010$]{br}
    \Vertex[x=-8,y=18,L=$101111$]{rbg}
    \Vertex[x=0,y=18,L=$111001$]{rbr}
    \Vertex[x=8,y=18,L=$010100$]{bgb}
    \Vertex[x=-12,y=14,L=$100100$]{rbgr}
    \Vertex[x=-4,y=14,L=$110101$]{rbrb}
    \Vertex[x=4,y=14,L=$110101$]{bgbr}
    \Vertex[x=12,y=14,L=$100100$]{bgbg}
    \tikzset{every node/.style={opacity=0,text opacity=1,scale=0.75}}
    \tikzset{EdgeStyle/.style=auto, color=red}
    \Edge[color=red](emp)(r)
    \Edge[color=blue](emp)(b)
    \Edge[color=ForestGreen](r)(rg)
    \Edge[color=blue](r)(rb)
    \Edge[color=ForestGreen](b)(bg)
    \Edge[color=red](b)(br)
    \Edge[color=ForestGreen](rb)(rbg)
    \Edge[color=red](rb)(rbr)
    \Edge[color=blue](bg)(bgb)
    \Edge[color=red](rbg)(rbgr)
    \Edge[color=blue](rbr)(rbrb)
    \Edge[color=red](bgb)(bgbr)
    \Edge[color=ForestGreen](bgb)(bgbg)
    \end{tikzpicture}
    \caption{The subtree-embedding $\eta^\sigma(T)$ of $T$ in $\mathcal{G}$} 
    \label{fig:tree-embeddingvertices} 
    \end{subfigure}
    \caption{Example of a subtree-embedding for the graph $\mathcal{G}$ in \Cref{fig:weldedtreecoloredlabeled} and the identity permutation $\sigma$.}
    \label{fig:tree-embedding}
\end{figure}
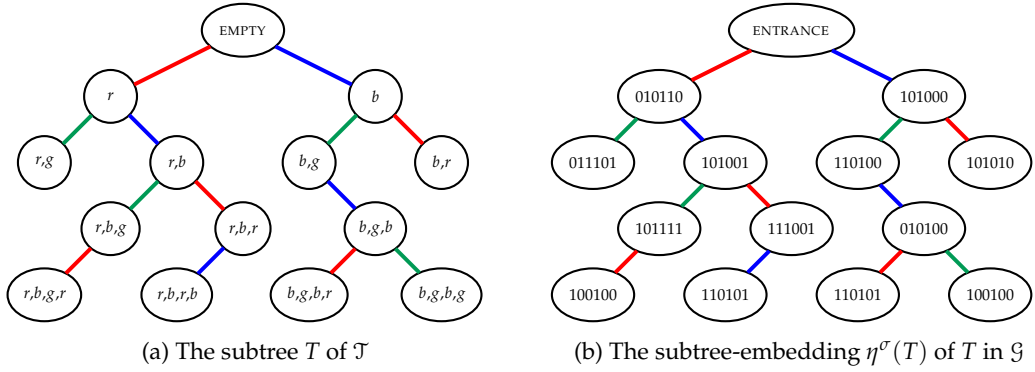

Next, we show that for a fixed sub-tree of $\mathcal{T}$ and a randomly chosen color-preserving permutation $\sigma$, it is not possible for the corresponding subtree-embedding to contain the $\exit$ or a path that forms a cycle in $\mathcal{G}^\sigma$, except with exponentially small probability.

\begin{lemma} \label{lem:subtreeembedding}
Let $T$ be a subtree of the address tree $\mathcal{T}$ of size $p(n)$ that contains the vertex labeled $\emptyaddress$ but does not contain vertices having labels in $\specialaddresses \setminus \{\emptyaddress\}$. Let the permutation $\sigma$ be chosen according to the distribution $D_n$. Then the probability that the subtree-embedding $\eta^\sigma(T)$ encounters the $\exit$ or a cycle is at most $4p(n)^4 2^{-n/3}$.
\end{lemma}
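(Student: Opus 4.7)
The plan is to derive this lemma from Corollary~\ref{cor:3-coloringhardnessmain} via a union bound over witnesses within $T$. By \Cref{def:treeembedding}, the event that $\eta^\sigma(T)$ encounters the $\exit$ or a cycle decomposes as the disjunction of (i) some vertex label $t$ of $T$ satisfies $\eta^\sigma(T)_t = \exit$, or (ii) two distinct vertex labels $t_1, t_2$ of $T$ satisfy $\eta^\sigma(T)_{t_1} = \eta^\sigma(T)_{t_2}$. I would bound each case separately and sum the results.

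For event (i), every vertex label $t$ of $T$ is a vertex label of $\mathcal{T}$ by the assumption on $T$, and hence is a color sequence of length at most $p(n)$ containing no even-length palindrome. The event $\eta^\sigma(T)_t = \exit$ coincides exactly with the path-embedding $\eta^\sigma(t)$ encountering the $\exit$, whose probability is at most $4p(n)^2 \cdot 2^{-n/3}$ by Corollary~\ref{cor:3-coloringhardnessmain}. A union bound over the at most $p(n)$ vertex labels of $T$ yields probability at most $4p(n)^3 \cdot 2^{-n/3}$ for event (i).

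For event (ii), I would reduce each pair $(t_1, t_2)$ of distinct vertex labels to a single path-embedding event. Form the sequence $s = t_1 \cdot \overline{t_2}$ of length at most $|t_1|+|t_2| \le 2p(n)$, where $\overline{t_2}$ denotes $t_2$ reversed, and iteratively delete any pair of adjacent equal colors (each such deletion corresponds to removing a back-and-forth traversal of a single edge in $\mathcal{G}^\sigma$, which preserves the walk's endpoint) to obtain a reduced sequence $s^*$. Three claims are required: (a) $s^*$ contains no even-length palindrome, because any such palindrome would have a pair of adjacent equal colors at its center, contradicting terminality of the reduction; (b) $s^*$ is nonempty whenever $t_1 \neq t_2$, by the standard observation that $t_1 \cdot \overline{t_2}$ reduces to the empty sequence only when $t_1 = t_2$ (equivalently, in the free product of copies of $\mathbb{Z}/2\mathbb{Z}$ indexed by $\mathcal{C}$, the word $t_1 t_2^{-1}$ is trivial iff $t_1 = t_2$); and (c) when $\eta^\sigma(T)_{t_1} = \eta^\sigma(T)_{t_2}$, the path-embedding $\eta^\sigma(s^*)$ is a non-backtracking closed walk of positive length starting at $\entrance$ and hence encounters a cycle in $\mathcal{G}^\sigma$. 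Applying a straightforward extension of Corollary~\ref{cor:3-coloringhardnessmain} to $s^*$---whose proof via Lemma~\ref{lem:3-coloringmain} goes through verbatim for sequences of length up to $2p(n)$, giving bound $4(2p(n))^2 \cdot 2^{-n/3}$---and union-bounding over the at most $\binom{|T|}{2} \leq p(n)^2/2$ pairs yields probability at most $8p(n)^4 \cdot 2^{-n/3}$ for event (ii).

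Summing the contributions of (i) and (ii) gives the desired $O(p(n)^4 \cdot 2^{-n/3})$ bound, matching the stated $4p(n)^4 \cdot 2^{-n/3}$ up to a constant factor that can be absorbed into the polynomial (or recovered exactly by a slightly tighter union bound that exploits the fact that the reduced length is at most the $\mathcal{T}$-distance between $t_1$ and $t_2$). The main subtlety is rigorously verifying the three claims about the reduction $s \to s^*$, especially the non-triviality claim (b) and the explicit extension of Corollary~\ref{cor:3-coloringhardnessmain} to sequences of length up to $2p(n)$; both are conceptually straightforward but require careful bookkeeping of the cancellation procedure.
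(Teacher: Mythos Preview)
Your proposal is correct and takes essentially the same approach as the paper: union-bound over vertices for the $\exit$ event and over pairs for the cycle event, reducing each collision to a single palindrome-free color sequence whose path-embedding encounters a cycle, then invoke Corollary~\ref{cor:3-coloringhardnessmain}. The only difference is in how palindrome-freeness of the concatenated sequence is obtained: you cancel adjacent equal colors in $t_1\cdot\overline{t_2}$ algebraically, whereas the paper first replaces $(t_1,t_2)$ by the minimal colliding pair (moving to parents as long as the parents' embedded labels also coincide), which forces the last colors of $t_1$ and $t_2$ to differ and hence makes the raw concatenation already palindrome-free---the two reductions are equivalent, and the paper's version avoids your explicit extension of the corollary to length $2p(n)$ (it simply applies the $4p(n)^2\cdot 2^{-n/3}$ bound directly), yielding the stated constant exactly.
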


\begin{proof}
Suppose that the subtree-embedding $\eta^\sigma(T)$ contains a vertex $v$ labeled $\exit$. Let $t$ denote the label of the vertex of $T$ corresponding to $v$. Then the path-embedding $\eta^\sigma(t)$ encounters the $\exit$. Therefore, since $T$ has at most $p(n)$ vertices, the probability of $\eta^\sigma(T)$ encountering the $\exit$ is at most $p(n) \cdot 4p(n)^2 2^{-n/3} = 4p(n)^3 2^{-n/3}$ by \Cref{cor:3-coloringhardnessmain}.

Now, suppose that the subtree-embedding $\eta^\sigma(T)$ encounters a cycle. That is, it contains two vertices $v_1$ and $v_2$ having the same label. Without loss of generality, assume that the respective parents $u_1$ and $u_2$ of $v_1$ and $v_2$ in $T$ (if they exist) do not have the same labels; otherwise, re-label $v_1$ as $u_1$ and $v_2$ as $u_2$. Let $t_1$ and $t_2$ be the labels of the vertices of $T$ corresponding to $v_1$ and $v_2$, respectively. Let $t_{1,2}$ denote the sequence resulting from the concatenation of $t_1$ with the reverse of $t_2$. By our above assumption, no color can appear consecutively in $t_{1,2}$, so $t_{1,2}$ does not contain an even-length palindrome. Thus, starting from the $\entrance$ and following the sequence of colors given by $t_{1,2}$ in $\mathcal{G}^\sigma$ will result in returning to the $\entrance$ without backtracking. This means that the path-embedding $\eta^\sigma(t_{1, 2})$ encounters a cycle in $\mathcal{G}^\sigma$. Therefore, since there are at most $\binom{p(n)}{2}$ pairs of vertices of $T$, the probability of $\eta^\sigma(T)$ encountering a cycle is at most $\binom{p(n)}{2} \cdot 4p(n)^2 2^{-n/3} \leq 2p(n)^4 2^{-n/3}$ by \Cref{cor:3-coloringhardnessmain}.

We obtain the desired result by union bounding over the probabilities specified above.
\end{proof}

We now establish that for a uniformly random permutation $\sigma$ and any classical algorithm that samples from the subtrees of $\mathcal{T}$, we cannot hope to find the $\exit$ or a cycle in $\mathcal{G^\sigma}$ with more than exponentially small probability.

\begin{lemma} \label{lem:classical3-colohardnessmain}
Choose the permutation $\sigma$ according to the distribution $D_n$. Let $\mathcal{S}$ be the set of subtrees of the address tree $\mathcal{T}$ of size $p(n)$ that contain the vertex labeled $\emptyaddress$ but do not contain vertices having labels in $\specialaddresses \setminus \{\emptyaddress\}$.\footnote{Recall that $\mathcal{T}$ can be computed using 2 queries to the classical oracle $\eta^\sigma$.} Let $\mathcal{A}_{\mathsf{classical}}$ be a classical query algorithm that samples a tree $T$ from $\mathcal{S}$ and computes the subtree-embedding $\eta^\sigma(T)$. Then, the probability that $\mathcal{A}_{\mathsf{classical}}$ finds the $\exit$ or a cycle in $\mathcal{G^\sigma}$ is at most $4p(n)^4 2^{-n/3}$.
\end{lemma}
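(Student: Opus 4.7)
The plan is to reduce directly to \Cref{lem:subtreeembedding} via an averaging argument. By hypothesis, $\mathcal{A}_{\mathsf{classical}}$ samples its tree $T \in \mathcal{S}$ from a distribution that uses only its internal coins (independently of the random permutation $\sigma \sim D_n$) and then computes the subtree-embedding $\eta^\sigma(T)$ by querying $\eta^\sigma$ at the vertex addresses appearing in $T$. The event that $\mathcal{A}_{\mathsf{classical}}$ finds the $\exit$ or a cycle in $\mathcal{G}^\sigma$ is, by construction, precisely the event that $\eta^\sigma(T)$ encounters the $\exit$ or encounters a cycle in the sense of \Cref{def:treeembedding}.

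The execution of this plan proceeds as follows. First I would let $\mu$ denote the distribution over $\mathcal{S}$ from which $\mathcal{A}_{\mathsf{classical}}$ samples $T$, using only its internal randomness. For each fixed $T \in \mathcal{S}$, \Cref{lem:subtreeembedding} yields
\[
    \Pr_{\sigma}\bigl[\eta^\sigma(T) \text{ encounters the } \exit \text{ or a cycle}\bigr] \leq 4 p(n)^4 \cdot 2^{-n/3}.
\]
Since the sampling of $T$ is independent of the draw of $\sigma$, I would then note that the overall success probability factors as
\[
    \Pr_{\sigma, T \sim \mu}\bigl[\mathcal{A}_{\mathsf{classical}} \text{ finds } \exit \text{ or a cycle}\bigr] = \mathbb{E}_{T \sim \mu}\, \Pr_{\sigma}\bigl[\eta^\sigma(T) \text{ encounters the } \exit \text{ or a cycle}\bigr],
\]
and the bound above, which holds uniformly over $T \in \mathcal{S}$, is preserved by taking the expectation over $\mu$.

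There is no substantive obstacle here: the heavy lifting was done in \Cref{lem:3-coloringmain}, \Cref{cor:3-coloringhardnessmain}, and \Cref{lem:subtreeembedding}, and the present statement is their immediate corollary. The only conceivable subtlety would be allowing $\mathcal{A}_{\mathsf{classical}}$ to choose the tree $T$ adaptively based on intermediate oracle responses; but the lemma as stated restricts attention to algorithms that first sample $T$ from $\mathcal{S}$ and then evaluate $\eta^\sigma(T)$, so the independence between $T$ and $\sigma$ required for the averaging step is automatic.
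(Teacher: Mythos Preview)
Your proposal is correct and matches the paper's own proof, which simply notes that the event in question is exactly the event that $\eta^\sigma(T)$ encounters the $\exit$ or a cycle and then invokes \Cref{lem:subtreeembedding} together with convexity (i.e., your averaging over $\mu$). You have merely spelled out the averaging step in more detail than the paper does.
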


\begin{proof}
The algorithm $\mathcal{A}_{\mathsf{classical}}$ finds the $\exit$ if the subtree-embedding $\eta^\sigma(T)$ contains it. On the other hand, since $\eta^\sigma(T)$ corresponds to a connected subgraph of $\mathcal{G}^\sigma$, $\mathcal{A}_{\mathsf{classical}}$ finds a cycle in $\mathcal{G}^\sigma$ if there are two vertices in the tree $\eta^\sigma(T)$ that have the same label. Therefore, this lemma is a direct consequence of \Cref{lem:subtreeembedding} and convexity.
\end{proof}

We conclude this section with our main result about the existence of a distribution for which it is hard for a natural class of classical algorithms to find the $\exit$ or a cycle in the \wtg\ sampled according to this distribution.

\begin{theorem} \label{thm:classical3-colohardnessmain}
Let $\mathcal{S}$ be the set of subtrees of the address tree $\mathcal{T}$ of size $p(n)$ that contain the vertex labeled $\emptyaddress$ but do not contain vertices having labels in $\specialaddresses \setminus \{\emptyaddress\}$. Then there exists a distribution $\mathcal{D}_n$ over size-$n$ 3-colored \wtg s $\mathcal{G}'$ such that for any classical query algorithm $\mathcal{A}_{\mathsf{classical}}$ that samples a tree $T$ from $\mathcal{S}$ and computes the associated subtree-embedding in $\mathcal{G}'$, the probability that $\mathcal{A}_{\mathsf{classical}}$ finds the $\exit$ or a cycle in $\mathcal{G}'$ is at most $4p(n)^4 2^{-n/3}$.  
\end{theorem}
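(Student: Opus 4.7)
The plan is to take $\mathcal{D}_n$ to be the distribution obtained by fixing any single 3-colored welded tree graph $\mathcal{G}$ of size $n$, sampling a color-preserving permutation $\sigma$ from the distribution $D_n$ of \Cref{def:colorpreservingpermutation}, and outputting $\mathcal{G}^\sigma$. The statement of \Cref{thm:classical3-colohardnessmain} is then essentially a repackaging of \Cref{lem:classical3-colohardnessmain}, where the randomness over $\sigma$ is reinterpreted as randomness over the resulting graph $\mathcal{G}^\sigma$.

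First I would verify that $\mathcal{D}_n$ is well-defined as a distribution over size-$n$ 3-colored welded tree graphs. By \Cref{lem:colorpreservingpermutations}, for every color-preserving permutation $\sigma$ the graph $\mathcal{G}^\sigma$ is a valid 3-colored welded tree graph of size $n$, so the support of $\mathcal{D}_n$ is contained in the claimed class. Second, I would observe that \Cref{lem:classical3-colohardnessmain} already bounds, by $4p(n)^4 2^{-n/3}$, the probability that a classical query algorithm of the specified form (sampling a subtree $T \in \mathcal{S}$ and computing $\eta^\sigma(T)$) succeeds at finding the $\exit$ or a cycle in $\mathcal{G}^\sigma$, when $\sigma \sim D_n$. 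By the construction of $\mathcal{D}_n$, this is exactly the probability of success of the same algorithm on $\mathcal{G}' \sim \mathcal{D}_n$, which yields the theorem.

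There is almost no obstacle here beyond a bookkeeping check: one must confirm that the oracle $\eta^\sigma$ used in \Cref{lem:classical3-colohardnessmain} is precisely the oracle associated with the graph $\mathcal{G}^\sigma$ (so that an algorithm querying the oracle of $\mathcal{G}' = \mathcal{G}^\sigma$ is behaving identically to the algorithm analyzed there), and that the notion of ``finding the $\exit$ or a cycle'' via the subtree-embedding coincides on both sides. Both of these are immediate from \Cref{def:colorpreservingpermutation} and \Cref{def:treeembedding}. Consequently, the proof reduces to a single sentence invoking \Cref{lem:classical3-colohardnessmain} after naming $\mathcal{D}_n$; no additional probabilistic or combinatorial estimate is needed beyond what was established in the induction of \Cref{lem:3-coloringmain}.
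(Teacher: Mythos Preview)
Your proposal is correct and matches the paper's own proof exactly: define $\mathcal{D}_n$ by sampling $\sigma \sim D_n$ and outputting $\mathcal{G}^\sigma$, then invoke \Cref{lem:classical3-colohardnessmain}. The additional bookkeeping checks you mention (well-definedness via \Cref{lem:colorpreservingpermutations}, identification of the oracle) are sound and, as you note, immediate.
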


\begin{proof}
Consider the distribution $\mathcal{D}_n$ specified by the following sampling process: choose $\sigma$ according to the distribution $D_n$ and output $\mathcal{G}^\sigma$. From \Cref{lem:classical3-colohardnessmain}, we know that $\mathcal{D}_n$ satisfies the requirement of this theorem.
\end{proof}

\section*{Acknowledgments}

We thank Matt Kovacs-Deak for many helpful discussions.
AMC received support from the Army Research Office (grant W911NF-20-1-0015); the National Science Foundation (grant CCF-1813814); and the Department of Energy, Office of Science, Office of Advanced Scientific Computing Research, Accelerated Research in Quantum Computing program.
MC received support from the National Institute of Standards and Technology (NIST).
ASG received support from the United States Educational Foundation in Pakistan and the Institute of International Education via a Fulbright scholarship.

\newcommand{\etalchar}[1]{$^{#1}$}

\end{document}